\def\draft{0}
\newtheorem{theorem}{Theorem}
\newtheorem{lemma}[theorem]{Lemma}
\newtheorem{fact}[theorem]{Fact}
\newtheorem{corollary}[theorem]{Corollary}
\newtheorem{definition}[theorem]{Definition}
\newtheorem{remark}[theorem]{Remark}
\newcommand{\beq}{\begin{eqnarray}}
\newcommand{\eeq}{\end{eqnarray}}
\newcommand{\ket}[1]{|#1\rangle}
\newcommand{\bra}[1]{\langle#1|}
\newcommand{\Tr}{\mbox{\rm Tr}}
\newcommand{\ot}{\otimes}
\newcommand{\C}{\ensuremath{\mathbb{C}}}
\DeclareMathOperator{\E}{\mathbf{E}}
\newcommand{\eps}{\varepsilon}
\DeclareMathOperator{\poly}{poly}
\newcommand{\highlightname}[2]{{\textcolor{blue}{\footnotesize{\bf (#1:} {#2}{\bf ) }}}}
\newcommand{\mnote}[1]{{\highlightname{Matt}{#1}}}
\newcommand{\anote}[1]{\textcolor{red}{\footnotesize
{\textbf{(Anand:} #1\textbf{) }}}}
\newcommand{\mnote}[1]{{}}
\newcommand{\anote}[1]{{}}
\newcommand\numberthis{\addtocounter{equation}{1}\tag{\theequation}}
\newcommand{\va}{\vec{a}}
\newcommand{\vb}{\vec{b}}
\newcommand{\vc}{\vec{c}}
\newcommand{\vd}{\vec{d}}
\newcommand{\ve}{\vec{e}}
\newcommand{\vf}{\vec{f}}
\newcommand{\vp}{\vec{p}}
\newcommand{\vr}{\vec{r}}
\newcommand{\vs}{\vec{s}}
\newcommand{\vt}{\vec{t}}
\newcommand{\vu}{\vec{u}}
\newcommand{\vv}{\vec{v}}
\newcommand{\HH}{\mathcal{H}}
\newcommand{\WA}{{W^A}}
\newcommand{\WAd}{{W^{A\dagger}}}
\newcommand{\WB}{{W^B}}
\newcommand{\WBd}{{W^{B\dagger}}}
\newcommand{\anck}[1]{\ket{\mathrm{ancilla}_{#1}}}
\newcommand{\proj}[1]{\ket{#1}\bra{#1}}
\newcommand{\EPR}{\ket{\mathrm{EPR}}}
\newcommand{\pEPR}{\proj{\mathrm{EPR}}}
\newcommand{\tA}{\tilde{A}}
\newcommand{\tB}{\tilde{B}}
\newcommand{\thmref}[1]{Theorem~\ref{thm:#1}}
\newcommand{\lemref}[1]{Lemma~\ref{lem:#1}}
\newcommand{\myeqref}[1]{Equation~\eqref{eq:#1}}
\newcommand{\figref}[1]{Figure~\ref{fig:#1}}
\newcommand{\secref}[1]{Section~\ref{sec:#1}}
\renewcommand{\vec}[1]{\mathbf{#1}}
\newcommand{\cd}{\cdot}
\begin{document}

\title{The Parallel-Repeated Magic Square Game is Rigid}
\author{Matthew Coudron\thanks{Massachusetts Institute of Technology. Email\texttt{\href{mailto:mcoudron@mit.edu}{\color{black}mcoudron@mit.edu}}. }
  \qquad  Anand Natarajan \thanks{Center for Theoretical Physics, MIT. {\tt anandn@mit.edu}}
}

\maketitle

\begin{abstract}
  We show that the $n$-round parallel repetition of the Magic Square
  game of Mermin and Peres is rigid, in the sense that for any entangled strategy
  succeeding with probability $1 -\varepsilon$, the players' shared state is
  $O(\mathrm{poly}(n\varepsilon))$-close to $2n$ EPR pairs under a local
  isometry.  Furthermore, we show that, under local isometry, the
  players' measurements in said entangled strategy must be
  $O(\mathrm{poly}(n\varepsilon))$-close to the ``ideal" strategy when acting on the
  shared state.  
\end{abstract}

\section{Introduction}
Nonlocal games have long been a fundamental topic in quantum
information, starting from Bell's pioneering work in the 1960s. In the
langauge of games, Bell~\cite{Bell64} showed that for a certain two-player nonlocal
game, two players sharing a single EPR pair between them can win with
substantially higher probability than they could by following the best
classical strategy. In Bell's original game, the messages between the
players and the referee were real numbers, but soon afteward, Clauser,
Horne, Shimony, and Holt~\cite{CHSH69} discovered a game (called the CHSH game) with similar properties,
but with messages consisting of just one bit. The CHSH game can be
viewed as a \emph{test} for the ``quantumness'' of a system, with good
\emph{soundness}: that is, the probability of a non-quantum system
fooling the test is at most $3/4$. However, the test lacks the
property of so-called \emph{perfect completeness}: as shown by
Tsirelson~\cite{Tsirelson80}, even the optimal
quantum strategy succeeds with probability at most $(2 + \sqrt{2})/4
\approx 0.854$. To remedy this drawback, Mermin~\cite{Mermin90} and
independently Peres~\cite{Peres90}
independently introduced the \emph{Magic Square game}: a two-player
game with two-bit inputs and outputs, and for which the best classical
strategy succeeds with probability $8/9$, but there exists a quantum
strategy using only two shared EPR pairs succeeding with probability
$1$. 

Later, Mayers and Yao~\cite{MY98} realized that the CHSH game could be used not only
to test for ``quantumness,'' but to test for a \emph{specific} quantum
state: namely, the EPR pair. Such a test is often called a ``self-test.'' Mayers and Yao showed that in any optimal quantum
strategy for CHSH, the players' shared state is equivalent under a
local isometry\footnote{Since either player could apply a
local unitary to their half of the state and their measurements,
without affecting their winning probability, equivalence under local
isometry is the best one could hope for.} to an EPR
pair.  This result
was not \emph{robust} in that required the CHSH correlations to hold
\emph{exactly}: however, the subsequent work of McKague, Yang, and
Scarani~\cite{MYS12} was able to achieve a robust self-test based on
CHSH for a single EPR pair. That is, they showed that for any strategy
that wins CHSH with probability $\geq p_{\max} - \eps$, there exists
an isometry $V$ mapping the players' state $\ket{\psi}$ to a state
$\ket{\phi}$ which is
$O(\sqrt{\eps})$-close to the EPR pair state in 2-norm. Moreover, they
showed that the \emph{measurements} applied by the players must also
be close to the measurements used in the ideal strategy, as measured
in a state-dependent distance: for instance, if $X$ is the operator
applied by player 1 when asked to measure a Pauli $X$, then under the
same isometry $V$, $\| V(X\ket{\psi}) - \sigma_X \ket{\phi} \| \leq
O(\sqrt{\eps})$, where $\sigma_X$ is the Pauli $X$-matrix. Such a
result is called a \emph{rigidity} result, because it shows that any
strategy that is close to optimal must have the same structure as the
ideal strategy. We refer to the bound that appears in the right-hand
side of the norm inequalities (here $\sqrt{\eps}$) as the
\emph{robustness} of the test. More recently, Wu et al.~\cite{WBMS16}
showed rigidity for Mermin and Peres's Magic Square game,
demonstrating that it serves as robust self-test for a single EPR pair. 

In recent years, self-testing has found applications to quantum cryptography (QKD,
device independent QKD, and randomness expansion), as well as to
multiprover quantum interactive proof systems (the complexity class MIP*)
\cite{RUV13}. However, these applications all rely on testing
\emph{multi-qubit} states, whereas known robust self-testing
results are directly applicable only to states of a few
qubits. A natural strategy to obtain a multi-qubit test is to
\emph{repeat} the single-qubit tests, either in series (i.e. over many
rounds) or in parallel (i.e. in one round)---for instance, the work of
Reichardt, Unger, and Vazirani~\cite{RUV13} uses a serially repeated
CHSH test, and McKague~\cite{McK15} gives a parallel self-test based
on CHSH. The lack of perfect completeness considerably complicates the
analysis of these tests, since one cannot demand that the players win
\emph{every} repetition of the test---rather, one has to check whether
the fraction of successful repetitions is above a certain threshold. 

In this paper, we circumvent these issues by studying the $n$-round parallel
repetition of the Magic Square game.  We achieve a proof
of rigidity, showing that if the players win with probability $1 -
\eps$, their state is $O(\poly(n\eps))$-close to $2n$ EPR pairs, under
a local isometry.  This is
an exponential improvement in error dependence over the strictly parallel
self-testing result of \cite{McK15}, which has error depedence
$O(\exp(n) \poly(\eps))$  \footnote{Note that, by repeating the test in section 4 of \cite{McK15} a polynomial number of times, one can achieve a self-test for $n$ EPR pairs with polynomial error dependence.  However, the test given in section 4 is not a strictly parallel test, and does not robustly certify $n$-qubit measurement operators, as our result does.  }, and is the previous best known result for
rigidity of strictly parallel repeated non-local games (McKague's result is stated
for the parallel repeated CHSH game with a threshold test, rather than
the parallel repeated Magic Square game). We note
that McKague's result
has $O(\log(n))$-bit questions, whereas our game has $O(n)$-bit questions and answers, but additionally robustly certifies all $n$-qubit measurement operators.  This means that our result is a strictly parallel test, that can be used to "force'' untrusted provers to apply all $n$-qubit Pauli operators faithfully (in expectation), which is a new feature that we believe will be valuable in the context of complexity applications.  

As a fundamental building
block for our result, we make use of the rigidity of a single
round of the Magic Square game, which was established in
\cite{WBMS16}.  A key observation of our work is that, by leveraging a ``global consistency check" which occurs naturally within the parallel repeated Magic Square game, we can establish approximate commutation between the different copies (or ``rounds") of the game in the parallel repeated test.  This then allows us to extend the single round analysis of \cite{WBMS16}, to a full $n$-round set of approximate anti-commutation relations for the provers measurements, which is expressed in Theorem \ref{thm:approx_phase}.  A second important technical tool in our proof is a
theorem (\thmref{iso}) which, given operators on the players' state
that approximately satisfy the algebraic relations of single-qubit
Pauli matrices, constructs an isometry that maps the players' ``approximate Paulis'' close to exact Pauli operators acting on a $2n$-qubit
space.  The proof of \thmref{iso} relies on an isometry inspired by the works of McKague~\cite{McK10,McK16}, but is designed to take the guarantees produced by Theorem \ref{thm:approx_phase} and conclude closeness of the players ``approximate Paulis'' to exact Pauli operators in expectation, where all $2n$-qubit Pauli operators are handled simultaneously, with polynomial error dependence.

Very recently, we became aware of two independent works achieving
related results in this area.  The first is an unpublished paper of
Chao, Reichardt, Sutherland, and Vidick~\cite{CRSV16}, which proves a theorem similar to our \thmref{iso}.  The second is a paper by
Coladangelo~\cite{Coladangelo16}, which proves a
self-testing result for the parallel repeated Magic Square game that is similar our own,
albeit with slightly different polynomial factors. Furthermore, the robustness analysis of the results in \cite{Coladangelo16} makes use of the same key theorem of~\cite{CRSV16}, which is, in turn, similar to our own \thmref{iso}. The
theorem of~\cite{CRSV16} (and consequently the robustness result of
Coladangelo) achieve a robustness of $n^{3/2}\sqrt{\eps}$ for 
for all single-qubit operators (i.e., to achieve constant robustness,
$\eps$ must scale as $1/n^3$).  On the other hand, our \thmref{iso}
achieves a robustness of $n \eps^{1/4}$ (i.e. $\eps \sim 1/n^4$), but for operators acting on \emph{all}
$2n$ qubits simultaneously.  It is natural to ask whether one
can prove a single result which combines the strengths of these two different error dependencies.  We expect that this is possible, but leave it for future work.

\section{Preliminaries}
We use the standard quantum formalism of states and measurements. An
\emph{observable} is a Hermitian operator whose eigenvalues are $\pm
1$, and encodes a two-outcome projective measurement (the POVM elements of the two outcomes
are the projections on to the $+1$ and $-1$ eigenspaces).
Throughout this paper, we make use of the Pauli matrices. These are $2
\times 2$ Hermitian matrices defined by
\[ \sigma_X := \begin{pmatrix} 0 & 1 \\ 1 & 0 \end{pmatrix}, \sigma_Z
:= \begin{pmatrix} 1 & 0 \\ 0 & -1 \end{pmatrix}, \sigma_Y
:= \begin{pmatrix} 0 & -i \\ i & 0 \end{pmatrix}. \]
They satisfy the anticommutation relation
\[ XZ = -ZX. \]

\section{The Magic Square game}
In this section we introduce the nonlocal game analyzed in this work:
the $n$-round parallel repeated Magic Square game.  We also introduce
notation to describe entangled strategies for the game and state some
simple properties they satisfy.

The parallel repeated Magic Square game is played between
players (which we will refer to as Alice and Bob), and a
verifier. First, let us define the single-round Magic Square game,
originally introduced by Mermin~\cite{Mermin90} and
Peres~\cite{Peres90}. The rules of the game are
described in Fig.~\ref{fig:magic}.
\begin{figure}[H]
  \rule[1ex]{16.5cm}{0.5pt}\\
  The magic square game is a one-round, two-player game, played as follows
  \begin{enumerate}
  \item The verifier sends Alice a question $r \in \{0,1,2\}$ and Bob a
    question $c \in \{0,1,2\}$.
  \item Alice sends the verifier a response $(a_0, a_1) \in \{0,1\}^2$,
    and Bob sends a response $(b_0, b_1) \in \{0, 1\}^2$.
  \item Let $a_2 := a_0 \oplus a_1$ and $b_2 := 1 \oplus b_0 \oplus b_1$. Then
    Alice and Bob win the game if $a_c = b_r$ and lose otherwise.
  \end{enumerate}
  \rule[1ex]{16.5cm}{0.5pt}
  \caption{The magic square game}
  \label{fig:magic}
\end{figure}
Any entangled strategy for this game is described by a shared quantum
state $\ket{\psi}_{AB}$ and projectors $P_{r}^{a_0, a_1}$ for Alice
and $Q_{c}^{b_0, b_1}$ for Bob. It can be seen that the game can be
won with certainty for the following strategy:
\begin{align*}
  \ket{\psi} &= \frac{1}{2} \sum_{i, j \in \{0, 1\}} \ket{ij}_A \ot
               \ket{ij}_B \\
  P_{0}^{a_0, a_1} &= \frac{1}{4}(I + (-1)^{a_0} Z)_{A1}
                     \ot (I + (-1)^{a_1} Z)_{A2} \ot I_{B} \\
  P_{1}^{a_0, a_1} &= \frac{1}{4}  (I + (-1)^{a_1} X)_{A1} \ot (I +
                     (-1)^{a_0} X)_{A2}
                     \ot I_{B} \\
  Q_{0}^{b_0, b_1} &= \frac{1}{4}I_A \ot (I + (-1)^{b_0} Z)_{B1} \ot
                     (I + (-1)^{b_1} X)_{B2} \\
  Q_{1}^{b_0, b_1} &= \frac{1}{4} I_A \ot (I + (-1)^{b_1} X)_{B1} \ot
                     (I + (-1)^{b_0} Z)_{B2} 
\end{align*}
This strategy is represented pictorially in
Fig.~\ref{fig:1roundideal}, where each row contains a set of
simultaneously-measurable observables that give Alice's answers, and
likewise each column for Bob.

\begin{figure}[H]
  \centering
  \begin{tabular}{c | c | c}
    ZI & IZ & ZZ \\ \hline
    IX & XI & XX \\ \hline
    -ZX & -XZ & YY
  \end{tabular}
  \caption{The ideal strategy for a single round of magic square. Alice
    and Bob share the state $\ket{\text{EPR}}^{\ot 2}$.}
  \label{fig:1roundideal}
\end{figure}

The game we study in this paper is the $n$-fold parallel repetition of
the above game.
\begin{definition}
  \label{def:parallel_magic}
  The $n$-fold parallel repeated Magic Square game is a game with two
  players, Alice and Bob, and one verifier. The player sends Alice a
  vector $\vec{r} \in \{0, 1,2\}^{n}$ and Bob a vector $\vec{c} \in
  \{0, 1, 2\}^{n}$, where each coordinate of $\vr$ and $\vc$ is chosen
  uniformly at random. Alice responds with two $n$-bit strings
  $\vec{a}_0, \vec{a}_1$, and Bob with two $n$-bit strings $\vec{b}_0,
  \vec{b}_1$. The players win if for every $k \in [n]$, the $k$th
  components of Alice and Bob's answers $a_{0,k}, a_{1,k}, b_{0,k},
  b_{1,k}$ satisfy the win conditions of the Magic Square game with
  input $r_k$ and $c_k$.
\end{definition}
Throughout this paper we will refer to the non-local entangled
strategy applied by the players according to the following
definitions:

\begin{definition}
  Let $\{ P^{\vec a_0, \vec a_1 }_{\vec r} \}_{\vec a_0, \vec a_1}$ denote the set of orthogonal projectors describing Alice's measurement when she receives input $\vec r$.

  Likewise, let $\{ Q^{\vec b_0, \vec b_1 }_{\vec c} \}_{\vec b_0, \vec b_1}$ denote the set of orthogonal projectors describing Bob's measurement when he receives input $\vec c$. 
\end{definition}

\begin{definition}
  Define $\vec a_{2} \equiv \vec a_{0} + \vec a_{1} \pmod 2$ an $\vec b_{2} \equiv \vec b_{0} + \vec b_{1} + \vec 1 \pmod 2$.

\end{definition}

\begin{definition}
  Define the column-$\vec c$ output observables for Alice as $A_{\vec r, \vec p}^{\vec c} \equiv \sum_{\vec a_0, \vec a_1}(-1)^{\vec a_{\vec c} \cdot \vec p}P^{\vec a_0, \vec a_1 }_{\vec r}$.  

  Where $a_{\vec c}$ is defined to be the $n$ dimensional vector whose $i^{th}$ component is defined by $(a_{\vec c})_{i} \equiv (\vec{a_{c_i}})_{i}$.

  Similarly, define the row-$\vec r$ observables for Bob as $B_{\vec c, \vec q}^{\vec r} \equiv \sum_{\vec b_0, \vec b_1}(-1)^{ \vec{b}_{\vec{r}} \cdot \vec q}Q^{\vec b_0, \vec b_1 }_{\vec c}$.

\end{definition} 
\begin{remark}
  By definition, it follows that $A_{\vec r, \vec p}^{\vec c} =
  A_{\vec r, \vec{p} }^{\vec{c'}}$ if $\vec{c}$ and $\vec{c'}$ differ only
  on rounds where the coordinate of $\vec{p}$ is $0$, and likewise for
  $B$ and $\vec{r}$.
  \label{rem:ignore_output}
\end{remark}
The win conditions for magic square:
\begin{fact} \label{fact:magicwin}
  Suppose Alice and Bob win the magic square game with probability $\geq
  1 - \eps$. Then it holds that 
  \begin{equation}
    \forall \vec{p}, \quad \E_{\vec{r}, \vec{c}} \bra{\psi} A_{\vec r, \vec p}^{\vec c} B_{\vec
      c, \vec p}^{\vec r} \ket{\psi} \geq 1 - \eps.
    \label{eq:magicwin}
  \end{equation}
\end{fact}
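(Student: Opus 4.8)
The plan is to translate the combinatorial win condition of \defref{parallel_magic} into the operator inequality \eqref{eq:magicwin} by recognizing the left-hand side as the expected product of two $\pm 1$ outcomes. First I would record that $A_{\vec r,\vec p}^{\vec c}$ is a genuine $\pm 1$ observable: since $\{P_{\vec r}^{\vec a_0,\vec a_1}\}$ are orthogonal projectors summing to the identity, $(A_{\vec r,\vec p}^{\vec c})^2=I$, and similarly for $B_{\vec c,\vec p}^{\vec r}$; moreover $A$ acts only on Alice's space and $B$ only on Bob's, so they commute. Expanding the definitions,
\[
\bra{\psi}A_{\vec r,\vec p}^{\vec c}B_{\vec c,\vec p}^{\vec r}\ket{\psi}
=\sum_{\vec a_0,\vec a_1,\vec b_0,\vec b_1}(-1)^{\vec a_{\vec c}\cdot\vec p+\vec b_{\vec r}\cdot\vec p}\,
\bra{\psi}P_{\vec r}^{\vec a_0,\vec a_1}Q_{\vec c}^{\vec b_0,\vec b_1}\ket{\psi},
\]
and each $\bra{\psi}P_{\vec r}^{\vec a_0,\vec a_1}Q_{\vec c}^{\vec b_0,\vec b_1}\ket{\psi}=\|P_{\vec r}^{\vec a_0,\vec a_1}Q_{\vec c}^{\vec b_0,\vec b_1}\ket{\psi}\|^2\ge 0$ is precisely the probability that, on inputs $(\vec r,\vec c)$, Alice answers $(\vec a_0,\vec a_1)$ and Bob answers $(\vec b_0,\vec b_1)$.

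The heart of the argument is to observe that the phase $(-1)^{\vec a_{\vec c}\cdot\vec p+\vec b_{\vec r}\cdot\vec p}$ is $+1$ on the winning event. Unwinding $(a_{\vec c})_i=(\vec a_{c_i})_i$ and $(b_{\vec r})_i=(\vec b_{r_i})_i$ (together with the shifts defining $\vec a_2,\vec b_2$), the round-$k$ win condition is exactly $(a_{\vec c})_k=(b_{\vec r})_k$. Hence if the players win every round $k$ with $p_k=1$ — in particular if they win all $n$ rounds — then $\vec a_{\vec c}\cdot\vec p=\sum_{k:\,p_k=1}(a_{\vec c})_k=\sum_{k:\,p_k=1}(b_{\vec r})_k=\vec b_{\vec r}\cdot\vec p \pmod 2$, so the phase is $+1$; this also matches Remark~\ref{rem:ignore_output}, which says $A_{\vec r,\vec p}^{\vec c}$ and $B_{\vec c,\vec p}^{\vec r}$ depend on $\vec c,\vec r$ only through the support of $\vec p$. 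Consequently, conditioned on the inputs, the summand above equals $+1$ on the subevent ``win every round in $\mathrm{supp}(\vec p)$'' and is at least $-1$ in general, so $\bra{\psi}A_{\vec r,\vec p}^{\vec c}B_{\vec c,\vec p}^{\vec r}\ket{\psi}\ge 2\Pr[\text{win}\mid\vec r,\vec c]-1$.

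Finally I would take the expectation over the uniformly random $\vec r,\vec c$ and use the hypothesis $\E_{\vec r,\vec c}\Pr[\text{win}\mid\vec r,\vec c]\ge 1-\eps$, yielding $\E_{\vec r,\vec c}\bra{\psi}A_{\vec r,\vec p}^{\vec c}B_{\vec c,\vec p}^{\vec r}\ket{\psi}\ge 1-2\eps$, which gives \eqref{eq:magicwin} (the constant in front of $\eps$ is immaterial for everything that follows, and can be removed by conditioning throughout on ``win on $\mathrm{supp}(\vec p)$'' rather than ``win'' if an exact $1-\eps$ is wanted). I do not anticipate a real obstacle: this is a warm-up fact. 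The only point needing care is the index bookkeeping in the definitions of $\vec a_{\vec c},\vec b_{\vec r},\vec a_2,\vec b_2$ — in particular checking that the $+\vec 1$ in $\vec b_2$ is exactly what makes the per-round constraint read ``$(a_{\vec c})_k=(b_{\vec r})_k$'' — together with keeping the averaging over inputs so that the single hypothesis on the overall win probability is all that is used.
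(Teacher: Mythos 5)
Your proof is essentially the natural one, and the paper itself states this result as a ``Fact'' without proof, so there is no competing argument in the paper to compare against. Your reduction is correct: you expand $\bra{\psi}A_{\vec r,\vec p}^{\vec c}B_{\vec c,\vec p}^{\vec r}\ket{\psi}$ as a signed sum of answer probabilities, observe that the phase is $(-1)^{(\vec a_{\vec c}+\vec b_{\vec r})\cdot\vec p}$, that this equals $+1$ on the winning event (indeed, whenever the number of ``lost'' rounds in $\mathrm{supp}(\vec p)$ is even), and that therefore the expectation is $1-2\Pr[\text{odd parity of losses on }\mathrm{supp}(\vec p)]\geq 1-2\eps$.

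Two small remarks. First, the bound you actually obtain, $1-2\eps$, is the correct one; the paper's stated $1-\eps$ drops the factor of two that necessarily appears when converting a probability to a $\pm 1$-observable expectation (this same slippage occurs elsewhere in the paper, e.g.\ the appendix's ``$\geq 1-9\eps$'' should be $1-18\eps$). Since every downstream use is $O(\eps)$, this is immaterial. Second, your parenthetical claim that the factor of two ``can be removed by conditioning throughout on `win on $\mathrm{supp}(\vec p)$' rather than `win''' does not work: replacing the loss event by the (smaller) event ``lose some round in $\mathrm{supp}(\vec p)$'' still yields $\E[AB]=1-2\Pr[\text{odd parity}]\geq 1-2\Pr[\text{lose on }\mathrm{supp}(\vec p)]$, so the $2$ is structural and not removable by reconditioning. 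This is a minor side remark and does not affect the correctness of your main derivation.
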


In Remark~\ref{rem:ignore_output}, we noted that we can freely change
the output column for Alice (resp. row for Bob) on the ``ignored''
rounds. In the following lemma, we show that we can also change the
\emph{input} row (resp. column), up to an $O(\eps)$ error, provided
that the strategy is $\eps$ close to optimal.

\begin{lemma} \label{lem:moveatoa}
  Suppose Alice and Bob have an $\eps$-optimal strategy. Then, $\forall i, r, \vec c$,
  \[\left |  1 -  \E_{\vec r, \vec{r'}:  r'_{i} = r_i = r} \bra{\psi} A^{\vec c}_{\vec r,  \vec{e}_i} \cdot A_{\vec r' , \vec{e}_i}^{\vec c} \ket{\psi} \right |   \leq 36 \eps \]

  \anote{Not too sure about the distribution, is it correct?}
\mnote{We think this should work for the uniform distribution, not sure if works for the distributions that just have uniform marginals.  Should check.}
\end{lemma}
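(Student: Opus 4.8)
The plan is to express the desired quantity $\E \bra{\psi} A^{\vec c}_{\vec r, \vec e_i} A^{\vec c}_{\vec r', \vec e_i} \ket{\psi}$, where $\vec r, \vec r'$ agree on coordinate $i$ (with value $r$), in terms of Bob's operators using the win condition of Fact~\ref{fact:magicwin}, and then to exploit the fact that Bob's side does not depend on which of $\vec r$ or $\vec r'$ we are using, since $A^{\vec c}_{\vec r, \vec e_i}$ and $B^{\vec r}_{\vec c, \vec e_i}$ refer to the single round $i$ where $\vec p = \vec e_i$ and, by Remark~\ref{rem:ignore_output}, the operator $B^{\vec r}_{\vec c, \vec e_i}$ depends on $\vec r$ only through $r_i = r$. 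So the idea is a standard ``pinching''/telescoping argument: insert $B$-operators to ``swap'' $A_{\vec r}$ for $A_{\vec r'}$.

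Concretely, first I would use Fact~\ref{fact:magicwin} with $\vec p = \vec e_i$ to get that $\E_{\vec r, \vec c} \bra{\psi} A^{\vec c}_{\vec r, \vec e_i} B^{\vec r}_{\vec c, \vec e_i} \ket{\psi} \geq 1 - \eps$. Since $A$ and $B$ are $\pm 1$ observables, this is equivalent to $\E_{\vec r, \vec c} \| (A^{\vec c}_{\vec r, \vec e_i} - B^{\vec r}_{\vec c, \vec e_i}) \ket{\psi} \|^2 \leq 2\eps$ (expanding the square and using that the cross term equals twice the given inner product while the diagonal terms are each $1$). By Markov-type averaging, conditioning on $r_i = r$ for a fixed $r$ costs at most a factor of $3$ (since $r_i = r$ with probability $1/3$), so $\E_{\vec r, \vec c : r_i = r} \| (A^{\vec c}_{\vec r, \vec e_i} - B^{\vec r}_{\vec c, \vec e_i}) \ket{\psi} \|^2 \leq 6\eps$; and similarly for $\vec r'$. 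The key point is that on the event $r_i = r'_i = r$, the operator $B^{\vec r}_{\vec c, \vec e_i}$ equals $B^{\vec r'}_{\vec c, \vec e_i}$, because by Remark~\ref{rem:ignore_output} $B^{\vec r}_{\vec c, \vec q}$ depends on $\vec r$ only on the rounds where $\vec q$ is nonzero — and here $\vec q = \vec e_i$, so only round $i$ matters, where $\vec r$ and $\vec r'$ agree. Writing $B$ for this common operator, we have $A^{\vec c}_{\vec r, \vec e_i} \ket{\psi} \approx B \ket{\psi} \approx A^{\vec c}_{\vec r', \vec e_i} \ket{\psi}$ in the appropriate averaged $L^2$ sense.

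Then I would bound $| 1 - \E \bra{\psi} A^{\vec c}_{\vec r, \vec e_i} A^{\vec c}_{\vec r', \vec e_i} \ket{\psi} |$. Write $\E \bra{\psi} A_{\vec r} A_{\vec r'} \ket{\psi} = \E \bra{\psi} A_{\vec r} B \ket{\psi} + \E \bra{\psi} A_{\vec r} (A_{\vec r'} - B) \ket{\psi}$. The first term is within $3\eps$ of $1$ by the above (after the conditioning); the second term is bounded in absolute value by $\E \|A_{\vec r}\ket{\psi}\| \cdot \|(A_{\vec r'} - B)\ket{\psi}\| \leq \sqrt{\E \|(A_{\vec r'} - B)\ket{\psi}\|^2}$ by Cauchy–Schwarz (using $\|A_{\vec r}\ket{\psi}\| = 1$), which is $O(\sqrt\eps)$. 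Chaining these and being slightly careful with the numerical constants gives a bound of the form $O(\eps)$ — indeed, to get the clean $36\eps$ one should avoid the $\sqrt\eps$ route and instead telescope directly at the level of $L^2$ distances: $\|(A_{\vec r} - A_{\vec r'})\ket\psi\| \leq \|(A_{\vec r} - B)\ket\psi\| + \|(B - A_{\vec r'})\ket\psi\|$, square, take expectations, use $(x+y)^2 \leq 2x^2 + 2y^2$, and then relate $\E \bra\psi A_{\vec r} A_{\vec r'} \ket\psi$ to $2 - \E\|(A_{\vec r} - A_{\vec r'})\ket\psi\|^2$ via the observable identity $\bra\psi A_{\vec r} A_{\vec r'}\ket\psi = 1 - \tfrac12\|(A_{\vec r} - A_{\vec r'})\ket\psi\|^2$ when $A_{\vec r}, A_{\vec r'}$ are $\pm1$ observables and... wait, that identity needs $A_{\vec r}, A_{\vec r'}$ to share eigenbasis, so more honestly one uses $2\,\mathrm{Re}\bra\psi A_{\vec r}A_{\vec r'}\ket\psi = \|A_{\vec r}\ket\psi\|^2 + \|A_{\vec r'}\ket\psi\|^2 - \|(A_{\vec r}-A_{\vec r'})\ket\psi\|^2 = 2 - \|(A_{\vec r}-A_{\vec r'})\ket\psi\|^2$, and the inner product is real by symmetry of the averaging over $\vec r \leftrightarrow \vec r'$. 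Tracking constants: each conditioning on a fixed value of one coordinate costs a factor of $3$, giving $6\eps$ for each of the two triangle-inequality pieces, then the factor-$2$ from $(x+y)^2 \le 2x^2+2y^2$ and a factor-$2$ from the observable identity, landing at $2 \cdot 2 \cdot (6\eps + 6\eps)$-ish $= 36\eps$ after being careful.

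The main obstacle I expect is bookkeeping the probabilistic conditioning correctly: the lemma's distribution draws $\vec r, \vec r'$ with the constraint $r_i = r'_i = r$, and I need the marginal over $(\vec r, \vec c)$ appearing after conditioning to still be compatible with applying Fact~\ref{fact:magicwin} (which is stated for the uniform distribution). Since conditioning on $r_i = r$ for one coordinate only reweights by a bounded factor and leaves the other coordinates and $\vec c$ uniform, this is fine, but the annotated margin comments in the lemma statement suggest the authors themselves were unsure whether the precise distribution matters — the honest version of the proof should note that we only ever condition on single-coordinate events and apply Fact~\ref{fact:magicwin} before conditioning, so the constant $36$ is the price of two such conditionings plus the two triangle-inequality/observable-identity factors of $2$. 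Everything else is routine manipulation of $\pm1$ observables and Cauchy–Schwarz.
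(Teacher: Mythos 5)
Your approach is essentially the one the paper uses; the paper just packages it in a coherent ``ancilla-register'' formalism (extended state $\ket{\sigma}$ and extended operators $T, T', S$) and invokes \lemref{maybesaveeps}, which is exactly the Cauchy--Schwarz/parallelogram argument you do by hand — the two give identical constants. The one place your write-up has a genuine gap in the bookkeeping is the treatment of $\vec c$: the lemma is stated \emph{for all fixed} $\vec c$, and Fact~\ref{fact:magicwin} averages over $\vec c$, so you must explain why conditioning on a single coordinate $c_i$ (cost a factor of $3$) rather than on all of $\vec c$ (cost $3^n$) suffices. The reason is Remark~\ref{rem:ignore_output}: $A^{\vec c}_{\vec r, \vec e_i}$ depends on $\vec c$ only through $c_i$, so $\bra{\psi} A^{\vec c}_{\vec r, \vec e_i} A^{\vec c}_{\vec r', \vec e_i}\ket{\psi}$ is unchanged if you average over $\vec c_{-i}$ (which is exactly what the paper's $\vec s_{-i}$ register does). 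You invoke Remark~\ref{rem:ignore_output} only in the direction ``$B$ depends on $\vec r$ only through $r_i$'', which lets you use a common $B$ for $\vec r$ and $\vec r'$, but not in the symmetric direction for $A$ and $c_i$. Once you add that observation and condition on both $r_i$ and $c_i$ (giving $\E\|(A-B)\psi\|^2 \le 18\eps$ per piece rather than your initially claimed $6\eps$), your chain $1 - \E\bra{\psi}A A'\ket{\psi} = \tfrac12 \E\|(A-A')\psi\|^2 \le \E\|(A-B)\psi\|^2 + \E\|(B-A')\psi\|^2 \le 36\eps$ closes cleanly — the factor $2$ from $(x+y)^2 \le 2x^2+2y^2$ and the factor $\tfrac12$ from the $\pm1$-observable identity cancel, so the ``$2\cdot 2\cdot(6\eps+6\eps)$-ish'' accounting in your last paragraph is wrong even though the final answer is right.
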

\begin{proof}
  To start we define an extended state $\ket{\sigma} \equiv \ket{\psi} \ot \frac{1}{\sqrt{3^{-(n-1)}}}\sum_{\vec r_{-i}} \ket{\vec r_{-i}} \ot \frac{1}{\sqrt{3^{-(n-1)}}}\sum_{\vec r'_{-i}} \ket{\vec r'_{-i}} \ot \frac{1}{\sqrt{3^{-(n-1)}}}\sum_{\vec s_{-i}} \ket{\vec s_{-i}} $ as well as extended operators:
  \begin{align*}
  & T \equiv \sum_{\vec r_{-i}} A^{\vec c}_{\vec r,  \vec{e}_i}  \ot  \proj{\vec r_{-i}} \ot I \ot  I  = \sum_{\vec r_{-i}} \sum_{\vec s_{-i}} A^{c_i \cup \vec s_{-i}}_{\vec r, \vec{e}_i}  \ot  \proj{\vec r_{-i}} \ot I \ot  \proj{\vec s_{-i}}\\
  & \intertext{Note that, by Remark \ref{rem:ignore_output}, these two definitions are equivalent because $A^{\vec c}_{\vec r, \cdot \vec{e}_i} $ is identically equal to $A^{c_i \cup \vec s_{-i}}_{\vec r, \cdot \vec{e}_i}$ by definition, regardless of the value of $\vec s_{-i}$.  Further define }
  & T' \equiv \sum_{\vec r'_{-i}} A^{\vec c}_{\vec r',  \vec{e}_i} \ot I  \ot  \proj{\vec r'_{-i}} \ot I = \sum_{\vec r'_{-i}} \sum_{\vec s_{-i}} A^{c_i \cup \vec s_{-i}}_{\vec r',  \vec{e}_i} \ot I  \ot  \proj{\vec r'_{-i}} \ot  \proj{\vec s_{-i}}\\
  &\intertext{and}
  & S \equiv \sum_{\vec r_{-i}} \sum_{\vec s_{-i}} B^{r_i \cup \vec r_{-i}}_{\vec c_i \cup s_{-i},  \vec{e}_i} \ot  \proj{\vec r_{-i}}  \ot I \ot  \proj{\vec s_{-i}} = \sum_{\vec r_{-i}} \sum_{\vec s_{-i}} B^{r_i \cup \vec r_{-i}}_{\vec c_i \cup s_{-i},  \vec{e}_i} \ot  \proj{\vec r_{-i}}  \ot \sum_{\vec r'_{-i}} \proj{\vec r'_{-i}} \ot  \proj{\vec s_{-i}} \\
  &= \sum_{\vec r'_{-i}} \sum_{\vec s_{-i}} B^{r'_i \cup \vec r'_{-i}}_{\vec c_i \cup s_{-i},  \vec{e}_i} \ot \sum_{\vec r_{-i}} \proj{\vec r_{-i}}  \ot  \proj{\vec r'_{-i}} \ot \proj{\vec s_{-i}} = \sum_{\vec r'_{-i}} \sum_{\vec s_{-i}} B^{r_i \cup \vec r'_{-i}}_{\vec c_i \cup s_{-i},  \vec{e}_i} \ot I  \ot  \proj{\vec r'_{-i}} \ot  \proj{\vec s_{-i}}
  \end{align*}
  
  Where, to conclude equivalence of the different versions of the last definition, we are using Remark \ref{rem:ignore_output} as well as the fact that $r_i = r'_i = r$, some fixed value.

  Now, note that:
  
  \begin{align*}
  &\bra{\sigma} T \cdot S \ket{\sigma} = \left (\bra{\psi} \ot \frac{1}{\sqrt{3^{-(n-1)}}}\sum_{\vec r_{-i}} \bra{\vec r_{-i}} \ot \frac{1}{\sqrt{3^{-(n-1)}}}\sum_{\vec r'_{-i}} \bra{\vec r'_{-i}} \ot \frac{1}{\sqrt{3^{-(n-1)}}}\sum_{\vec s_{-i}} \bra{\vec s_{-i}} \right )   \\
  & \times \left ( \sum_{\vec r_{-i}} \sum_{\vec s_{-i}} A^{c_i \cup \vec s_{-i}}_{\vec r, \vec{e}_i}  \ot  \proj{\vec r_{-i}} \ot I \ot  \proj{\vec s_{-i}} \right ) \left (\sum_{\vec r_{-i}} \sum_{\vec s_{-i}} B^{r_i \cup \vec r_{-i}}_{\vec c_i \cup s_{-i},  \vec{e}_i} \ot  \proj{\vec r_{-i}}  \ot I \ot  \proj{\vec s_{-i}} \right ) \\
  & \times \left (\ket{\psi} \ot \frac{1}{\sqrt{3^{-(n-1)}}}\sum_{\vec r_{-i}} \ket{\vec r_{-i}} \ot \frac{1}{\sqrt{3^{-(n-1)}}}\sum_{\vec r'_{-i}} \ket{\vec r'_{-i}} \ot \frac{1}{\sqrt{3^{-(n-1)}}}\sum_{\vec s_{-i}} \ket{\vec s_{-i}} \right ) \\
  & = \frac{1}{3^{-2(n-1)}} \sum_{\vec r_{-i}, \vec s_{-i}} \bra{\psi} A^{c_i \cup \vec s_{-i}}_{\vec r, \vec{e}_i} B^{r_i \cup \vec r_{-i}}_{\vec c_i \cup s_{-i},  \vec{e}_i} \ket{\psi} \cdot \left ( \frac{1}{\sqrt{3^{-(n-1)}}}\sum_{\vec r'_{-i}} \bra{\vec r'_{-i}}\right ) \left( \frac{1}{\sqrt{3^{-(n-1)}}}\sum_{\vec r'_{-i}} \ket{\vec r'_{-i}}\right) \\
  & = \frac{1}{3^{-2(n-1)}} \sum_{\vec r_{-i}, \vec s_{-i}} \bra{\psi} A^{c_i \cup \vec s_{-i}}_{\vec r, \vec{e}_i} B^{r_i \cup \vec r_{-i}}_{\vec c_i \cup s_{-i},  \vec{e}_i} \ket{\psi}  = \E_{\vec r_{-i}, \vec s_{-i}}  \bra{\psi} A^{c_i \cup \vec s_{-i}}_{\vec r, \vec{e}_i} B^{r_i \cup \vec r_{-i}}_{\vec c_i \cup s_{-i},  \vec{e}_i} \ket{\psi} \geq 1 - 9 \eps
  \end{align*}
  Where the last line follows by Fact \ref{fact:magicwin}.  Similarly,
    \begin{align*}
    &\bra{\sigma} T' \cdot S \ket{\sigma} = \left (\bra{\psi} \ot \frac{1}{\sqrt{3^{-(n-1)}}}\sum_{\vec r_{-i}} \bra{\vec r_{-i}} \ot \frac{1}{\sqrt{3^{-(n-1)}}}\sum_{\vec r'_{-i}} \bra{\vec r'_{-i}} \ot \frac{1}{\sqrt{3^{-(n-1)}}}\sum_{\vec s_{-i}} \bra{\vec s_{-i}} \right )   \\
    & \times \left (\sum_{\vec r'_{-i}} \sum_{\vec s_{-i}} A^{c_i \cup \vec s_{-i}}_{\vec r',  \vec{e}_i} \ot I  \ot  \proj{\vec r'_{-i}} \ot  \proj{\vec s_{-i}}\right ) \left (\sum_{\vec r'_{-i}} \sum_{\vec s_{-i}} B^{r_i \cup \vec r'_{-i}}_{\vec c_i \cup s_{-i},  \vec{e}_i} \ot I  \ot  \proj{\vec r'_{-i}} \ot  \proj{\vec s_{-i}}\right ) \\
    & \times \left (\ket{\psi} \ot \frac{1}{\sqrt{3^{-(n-1)}}}\sum_{\vec r_{-i}} \ket{\vec r_{-i}} \ot \frac{1}{\sqrt{3^{-(n-1)}}}\sum_{\vec r'_{-i}} \ket{\vec r'_{-i}} \ot \frac{1}{\sqrt{3^{-(n-1)}}}\sum_{\vec s_{-i}} \ket{\vec s_{-i}} \right ) \\
    & = \frac{1}{3^{-2(n-1)}} \sum_{\vec r'_{-i}, \vec s_{-i}} \bra{\psi} A^{c_i \cup \vec s_{-i}}_{\vec r', \vec{e}_i} B^{r_i \cup \vec r'_{-i}}_{\vec c_i \cup s_{-i},  \vec{e}_i} \ket{\psi} \cdot \left ( \frac{1}{\sqrt{3^{-(n-1)}}}\sum_{\vec r_{-i}} \bra{\vec r_{-i}}\right ) \left( \frac{1}{\sqrt{3^{-(n-1)}}}\sum_{\vec r_{-i}} \ket{\vec r_{-i}}\right) \\
    & = \frac{1}{3^{-2(n-1)}} \sum_{\vec r'_{-i}, \vec s_{-i}} \bra{\psi} A^{c_i \cup \vec s_{-i}}_{\vec r', \vec{e}_i} B^{r_i \cup \vec r'_{-i}}_{\vec c_i \cup s_{-i},  \vec{e}_i} \ket{\psi}  = \E_{\vec r'_{-i}, \vec s_{-i}}  \bra{\psi} A^{c_i \cup \vec s_{-i}}_{\vec r', \vec{e}_i} B^{r_i \cup \vec r'_{-i}}_{\vec c_i \cup s_{-i},  \vec{e}_i} \ket{\psi} \geq 1 - 9 \eps
    \end{align*}

  Where the last line again follows by Fact \ref{fact:magicwin}.  It follows by Lemma \ref{lem:maybesaveeps}, that 
  
  \[\bra{\sigma} T \cdot T' \ket{\sigma} \geq 1-  36 \eps \]
  
  Noting that  
  
  \begin{align*}
  &\bra{\sigma} T \cdot T' \ket{\sigma} = \left (\bra{\psi} \ot \frac{1}{\sqrt{3^{-(n-1)}}}\sum_{\vec r_{-i}} \bra{\vec r_{-i}} \ot \frac{1}{\sqrt{3^{-(n-1)}}}\sum_{\vec r'_{-i}} \bra{\vec r'_{-i}} \ot \frac{1}{\sqrt{3^{-(n-1)}}}\sum_{\vec s_{-i}} \bra{\vec s_{-i}} \right )   \\
  & \times \left (\sum_{\vec r_{-i}} A^{\vec c}_{\vec r,  \vec{e}_i}  \ot  \proj{\vec r_{-i}} \ot I \ot  I \right  ) \left (\sum_{\vec r'_{-i}} A^{\vec c}_{\vec r',  \vec{e}_i} \ot I  \ot  \proj{\vec r'_{-i}} \ot I\right ) \\
  & \times \left (\ket{\psi} \ot \frac{1}{\sqrt{3^{-(n-1)}}}\sum_{\vec r_{-i}} \ket{\vec r_{-i}} \ot \frac{1}{\sqrt{3^{-(n-1)}}}\sum_{\vec r'_{-i}} \ket{\vec r'_{-i}} \ot \frac{1}{\sqrt{3^{-(n-1)}}}\sum_{\vec s_{-i}} \ket{\vec s_{-i}} \right ) \\
  & = \frac{1}{3^{-2(n-1)}} \sum_{\vec r_{-i}, \vec r'_{-i}} \bra{\psi} A^{\vec c}_{\vec r,  \vec{e}_i} A^{\vec c}_{\vec r',  \vec{e}_i} \ket{\psi}  = \E_{\vec r_{-i}, \vec r'_{-i}:  :  r'_{i} = r_i = r}  \bra{\psi} A^{\vec c}_{\vec r,  \vec{e}_i} A^{\vec c}_{\vec r',  \vec{e}_i} \ket{\psi}
  \end{align*}

  So, we have,
  
  \begin{align*}
  &  \left |  1 -  \E_{\vec r, \vec{r'}:  r'_{i} = r_i = r} \bra{\psi} A^{\vec c}_{\vec r,  \vec{e}_i} \cdot A_{\vec r' , \vec{e}_i}^{\vec c} \ket{\psi} \right |  =  \left | 1 - \bra{\sigma} T \cdot T' \ket{\sigma} \right | \leq 36 \eps
  \end{align*}
\end{proof}
\section{Results}
In this section, we state and prove our technical results on the
structure of strategies for the parallel repeated Magic Square
game. We first give an overview of the proof and then fill in the
technical details.
\subsection{Overview}
Our result has two main technical components. The first is a theorem
that, given a near-optimal strategy, shows how to construct
observables on each players' Hilbert space that approximately satisfy a set of
pairwise commutation and anticommutation relations.  
\begin{theorem}
  Suppose that two players Alice and Bob have an entangled strategy
  for the $n$-round parallel repeated Magic Square game, which wins
  with probability at least $1-\eps$.  Then, if we adjoin an ancilla
  register to Alice's space in the appropriate state
  $\ket{\text{ancilla}}_{A}$ (and similarly for Bob in the appropriate state $\ket{\text{ancilla}}_{B}$), there exist observables
  $\tA^{c}_{r, k}$ indexed by $r, c \in \{0,1,2\}$ and $k \in [n]$
  acting on Alice's space such that 
  \begin{equation}
    \begin{aligned}
      &\forall k, r, c, r', c', \qquad d_{\psi'}(\tA^c_{r,k} \tA^{c'}_{r',k},
      (-1)^{f(r,r',c,c')} \tA^{c'}_{r',k} \tA^c_{r,k}) &\leq O(\sqrt{\eps})
      \\
      &\forall k \neq k', r, c, r', c', \qquad d_{\psi'}(\tA^c_{r,k}
      \tA^{c'}_{r',k'}, \tA^{c'}_{r',k'} \tA^{c}_{r,k}) &\leq O(\sqrt{\eps}).
    \end{aligned}
    \label{eq:magic_anticom_alice}
  \end{equation}
  \anote{Fix alignment above}
  where $\ket{\psi'} =\ket{\psi} \ot \ket{\text{ancilla}}_{A} \ot \ket{\text{ancilla}}_{B}  $ denotes the state together with the ancilla
  registers, and $f(r,r', c, c') = 1$ if $r \neq r'$ and $c \neq c'$,
  and $0$ otherwise.

  Likewise, there exist
  observables $\tB^{c}_{r, k}$ on Bob's
  space such that
  \begin{equation}
    \begin{aligned}
      &\forall k, r, c, r', c', \qquad d_{\psi'}(\tB^r_{c,k} \tB^{r'}_{c',k},
      (-1)^{f(r,r',c,c')} \tB^{r'}_{c',k} \tB^r_{c,k}) &\leq O(\sqrt{\eps})
      \\
      &\forall k \neq k', r, c, r', c', \qquad d_{\psi'}(\tB^r_{c,k}
      \tB^{r'}_{c',k'}, \tB^{r'}_{c',k'} \tB^{r}_{c,k}) &\leq O(\sqrt{\eps}).
    \end{aligned}
    \label{eq:magic_anticom_bob}
  \end{equation}

  Moreover, the following consistency relations hold in expectation:
  \begin{align}
    \forall \vc, \vp, \quad \E_{\vr} d_{\psi'}(A^{\vc}_{\vr, \vp} \ot
    I_{\text{ancilla}}, \prod_{k=1}^{n} (\tA^{c_k}_{r_k, k})^{p_k})^2 &\leq
                                                  O(n \sqrt{\eps})  \label{eq:consistency_alice}
    \\
    \forall \vr, \vp, \quad \E_{\vc} d_{\psi'}(B^{\vr}_{\vc, \vp} \ot
    I_{\text{ancilla}}, \prod_{k=1}^{n} (\tB^{r_k}_{c_k, k})^{p_k})^2 &\leq
                                                  O(n \sqrt{\eps})  \label{eq:consistency_bob}
  \end{align}
  \label{thm:approx_phase}
\end{theorem}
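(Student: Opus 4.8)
\emph{Overview of the plan.} I would extract, from a near-optimal strategy, one family of observables $\{\tA^c_{r,k}\}$ and $\{\tB^r_{c,k}\}$ that simultaneously supports all three groups of relations: the within-round relations \eqref{eq:magic_anticom_alice},\eqref{eq:magic_anticom_bob} come from the single-round rigidity of \cite{WBMS16} applied ``in parallel'' to each round $k$; the cross-round commutation comes from \lemref{moveatoa}; and the consistency relations \eqref{eq:consistency_alice},\eqref{eq:consistency_bob} come from a hybrid argument over rounds. Concretely, let $\ket{\text{ancilla}}_A$ consist of $n$ registers, the $k$-th holding the uniform superposition over ``filler'' inputs $\vec u\in\{0,1,2\}^{n-1}$ for the rounds other than $k$ (and symmetrically for Bob), and set
\[
\tA^c_{r,k}\ :=\ \sum_{\vec u}\ A^{\,c\cup\vec 0_{-k}}_{\,r\cup\vec u,\ \vec e_k}\ \otimes\ \proj{\vec u}_{k}\ \otimes\ I_{\text{other registers}},
\]
with $\tB^r_{c,k}$ defined symmetrically; here $r\cup\vec u$ is the input whose $k$-th coordinate is $r$ and whose remaining coordinates are $\vec u$, and by Remark~\ref{rem:ignore_output} the Bob-coordinates of the superscript are irrelevant. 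Each $\tA^c_{r,k}$ is block-diagonal with $\pm1$ blocks, hence an observable, and for fixed $(r,k)$ the three operators $\tA^0_{r,k},\tA^1_{r,k},\tA^2_{r,k}$ arise from one projective measurement (obtained by coarse-graining Alice's real measurement for an input with $k$-th coordinate $r$); in particular $\tA^0_{r,k}\tA^1_{r,k}=\tA^2_{r,k}$ holds \emph{exactly}, and the corresponding Bob identity $\tB^0_{c,k}\tB^1_{c,k}\tB^2_{c,k}=-I$ holds exactly because Bob's per-round outputs satisfy $b_0\oplus b_1\oplus b_2=1$.

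\emph{Within-round relations.} Specialize Fact~\ref{fact:magicwin} to $\vec p=\vec e_k$ and use Remark~\ref{rem:ignore_output}: after averaging over the filler registers the win condition reads exactly $\E_{r,c}\bra{\psi'}\tA^c_{r,k}\tB^r_{c,k}\ket{\psi'}\ge 1-\eps$, so each of the nine terms (being at most $1$) is at least $1-9\eps$, whence $\|\tA^c_{r,k}\ket{\psi'}-\tB^r_{c,k}\ket{\psi'}\|=O(\sqrt\eps)$ for all $r,c,k$. Thus for each $k$ the state $\ket{\psi'}$ together with $\{\tA^c_{r,k}\},\{\tB^r_{c,k}\}$ is an $O(\eps)$-optimal single-round Magic Square strategy, which I would feed into the analysis of \cite{WBMS16}. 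The exact ``row'' identities combined with the $O(\sqrt\eps)$ consistency give, for instance, $\tA^0_{0,k}\tA^0_{1,k}\tA^0_{2,k}\ket{\psi'}\approx\tB^2_{0,k}\tB^1_{0,k}\tB^0_{0,k}\ket{\psi'}=-\ket{\psi'}$ (using that Alice's and Bob's observables act on disjoint factors, and that Bob's same-column observables commute and multiply to $-I$); that is, each ``column'' product of Alice observables equals $-I$ on $\ket{\psi'}$ up to $O(\sqrt\eps)$. The Mermin--Peres argument of \cite{WBMS16} then converts ``rows $=I$ exactly, columns $=-I$ approximately'' into the pairwise commutation/anticommutation relations of \eqref{eq:magic_anticom_alice} with robustness $O(\sqrt\eps)$; \eqref{eq:magic_anticom_bob} follows symmetrically.

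\emph{Cross-round relations and consistency.} For $k\ne k'$ the ancilla registers used by $\tA^c_{r,k}$ and $\tA^{c'}_{r',k'}$ are disjoint, so it suffices to approximately commute the quantum parts $A^{c\cup\vec 0}_{r\cup\vec u,\vec e_k}$ and $A^{c'\cup\vec 0}_{r'\cup\vec v,\vec e_{k'}}$ on $\ket{\psi}$. Since two of Alice's column observables with the \emph{same} input commute exactly (they are functions of one measurement $\{P_{\vec r}\}$), I would apply \lemref{moveatoa} with index $i=k$ to move the second operator's $k$-th input coordinate and with $i=k'$ to move the first operator's $k'$-th coordinate, reaching a common input at cost $O(\sqrt\eps)$ in $\|\cdot\|_\psi$, commute exactly, and undo the moves; this proves the cross-round lines of \eqref{eq:magic_anticom_alice} and \eqref{eq:magic_anticom_bob}. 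Finally, because $A^{\vec c}_{\vec r,\vec p}=\prod_{k:\,p_k=1}A^{\vec c}_{\vec r,\vec e_k}$ holds exactly (all factors share the input $\vec r$, and the output phases multiply), I would obtain \eqref{eq:consistency_alice} by replacing the factors $A^{\vec c}_{\vec r,\vec e_k}\otimes I$ by $\tA^{c_k}_{r_k,k}$ one at a time; a single replacement costs, in squared norm averaged over $\vec r$, exactly $\E_{\vec r}\|(A^{c_k\cup\vec 0}_{\vec r,\vec e_k}-A^{c_k\cup\vec 0}_{r_k\cup\vec u,\vec e_k})\ket\psi\|^2=O(\eps)$ by \lemref{moveatoa}, while the intervening operators are moved out of the way using the exact multiplicativity of $A^{\vec c}_{\vec r,\cdot}$ and the cross-round commutation just established. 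The errors accumulate to $O(n^2\eps)$ in squared distance, which together with the trivial bound $d_{\psi'}\le 2$ yields the stated $O(n\sqrt\eps)$; \eqref{eq:consistency_bob} is symmetric.

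\emph{Main obstacle.} I expect the cross-round step to be the crux: one must choose the ancilla and the definition of $\tA^c_{r,k}$ so that the \emph{same} family of observables supports the per-round rigidity of \cite{WBMS16}, the cross-round commutation, and the consistency relations, and in particular so that each invocation of \lemref{moveatoa} only ever perturbs coordinates of rounds that are not currently being analyzed. Making all of these approximate commutations compose with only a linear loss in $n$ is the technical heart of the argument.
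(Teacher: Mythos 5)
Your definition of the dilated observables, your within-round argument via the single-round rigidity of~\cite{WBMS16}, and your high-level decomposition into within-round/cross-round/consistency lemmas all closely match the paper (cf.\ \lemref{single_round_phase}, \lemref{com_between_rounds}, \lemref{consistency_alice}). However, your proofs of the cross-round and consistency parts have a genuine gap. \lemref{moveatoa} only controls $\E\,\|(A^{\vc}_{\vr,\ve_i}-A^{\vc}_{\vr',\ve_i})\ket{\psi}\|^2$ for operators acting \emph{directly} on $\ket\psi$. In your cross-round step you need to change the input of the leftmost factor in a product like $A^{\vc^*}_{\vr,\ve_k}A^{\vc^*}_{\vr',\ve_{k'}}\ket\psi$, but that factor acts on $A^{\vc^*}_{\vr',\ve_{k'}}\ket\psi$, not on $\ket\psi$, and the operators $A^{\vc^*}_{\vr,\ve_k}$ and $A^{\vc^*}_{\vr',\ve_{k'}}$ need not commute. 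Trying to bring the leftmost factor next to $\ket\psi$ is exactly the approximate commutation you are trying to prove, so the argument is circular. The same problem recurs in the consistency hybrid: after the first replacement, the operator you want to change next is separated from $\ket\psi$ by a dilated $\tA^{c_{k'}}_{r_{k'},k'}$, which does not commute exactly with the undilated $A^{\vc}_{\vr,p_j\ve_j}$ (their ``quantum parts'' carry different row inputs), and each use of approximate commutation would contribute another $O(\sqrt\eps)$, driving the cost above the claimed $O(n\sqrt\eps)$.

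The paper's resolution is precisely the ``global consistency'' trick you don't use: rather than comparing two Alice operators with different inputs directly, it first replaces the Alice operator by the corresponding Bob operator via $d_\psi(A^{\vc}_{\vr,\ve_k},B^{\vr}_{\vc,\ve_k})\le O(\sqrt\eps)$, which costs $O(\sqrt\eps)$ and acts on $\ket\psi$. Once on Bob's side, the operator commutes \emph{exactly} with everything on Alice's side, so it can be moved past any intervening Alice operators for free, and on Bob's side the relevant operators share a common column input $\vc$ and hence commute exactly (see the proof of \lemref{com_between_rounds}). Similarly, the consistency hybrid (\lemref{movetatotb2} and \lemref{longindstep}) replaces each $\tA^{c_i}_{r_i,i}$ by $\tB^{r_i}_{c_i,i}$ and parks the $\tB$'s on the far side where they commute with all Alice operators, so only $n$ swaps of cost $O(\sqrt\eps)$ are incurred. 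This Alice$\leftrightarrow$Bob bounce is what makes every approximation happen ``next to'' $\ket\psi$ and keeps the loss linear in $n$; without it, your plan as written does not go through, and this is exactly the obstacle you flag at the end.
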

\begin{proof}[Proof of \thmref{approx_phase}]
  The single-round phase relations in
  Equations~\eqref{eq:magic_anticom_alice}
  and~\eqref{eq:magic_anticom_bob} follow from
  \lemref{single_round_phase}. The commutation relations between
  rounds follow from \lemref{com_between_rounds}. The consistency
  relations (Equations~\eqref{eq:consistency_alice}
  and~\eqref{eq:consistency_bob}) follow from
  \lemref{consistency_alice}.
  \anote{Make sure to add the analogous statements for Bob to the lemmas!}
\end{proof}
Having constructed these observables, we use them to build an isometry
that ``extracts'' a $2n$-qubit state out of the shared state of Alice
and Bob. This isometry is \emph{local}: it does not create any
entanglement between Alice and Bob. Moreover, it maps the
measurements in the players' strategy to $2n$-qubit measurements that
are close to the ideal strategy.
\begin{theorem}
  \label{thm:iso}
  Suppose that two players \mnote{Alice and Bob} share an entangled
  state in a Hilbert space $\HH$ and operators $\tA^{c}_{r,k}, \tB^{r}_{c,k}$ satisfying
  Equations~\eqref{eq:magic_anticom_alice}
  and~\eqref{eq:magic_anticom_bob}. Then there exists an isometry
  $V: \HH \to \HH \ot \C^{2n} \ot \C^{2n} \ot \C^{2n} \ot \C^{2n}$,
  and for every $\vs, \vt \in
  \{0,1\}^{2n}$, there exists an operator $\WA_{\vs, \vt}$ on Alice's 
  space, and for every $\vu, \vv \in \{0,1\}^{2n}$ there exists an
  operator $\WB_{\vu, \vv}$ on Bob's space, such that
  \begin{align}
    \forall \va, \vb, \vc, \vd, \qquad \left| \bra{\phi} \sigma_X^A(\vs) \sigma_Z^A(\vt)  \sigma_X^B(\vu)
                  \sigma_Z^B(\vv) \ket{\phi} -\bra{\psi} \WA_{\vs,\vt}
    \WB_{\vu,\vv} \ket{\psi} \right| \leq O(n^2 \sqrt{\eps}),
  \end{align}
  where $\ket{\phi} = V(\ket{\psi})$, $\sigma_X^A, \sigma_Z^A$ are
  Pauli operators acting on the second output register of $V$, and
  $\sigma_X^B, \sigma_Z^B$ are Pauli operators acting on the fourth
  output register of $V$.
\end{theorem}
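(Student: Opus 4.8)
The plan is to follow the McKague-style ``SWAP isometry'' recipe, in three stages: (i) read off from the given operators a clean set of $2n$ logical single-qubit observables for each player; (ii) assemble the swap isometry $V=V_A\ot V_B$ out of these observables; (iii) push the logical observables through $V$ and show they act on the extracted registers like exact Paulis, up to a $\poly(n)\sqrt\eps$ error. For stage (i), I would compare the index structure of the $\tA^c_{r,k}$ with the ideal single-round table (Figure~\ref{fig:1roundideal}): round $k$ carries logical qubits $2k-1,2k$, and I would set $\ZA_{2k-1}:=\tA^0_{0,k}$ (the ``$ZI$'' entry), $\ZA_{2k}:=\tA^1_{0,k}$ (``$IZ$''), $\XA_{2k-1}:=\tA^1_{1,k}$ (``$XI$''), $\XA_{2k}:=\tA^0_{1,k}$ (``$IX$''), and symmetrically define $\XB_j,\ZB_j$ from the $\tB$'s. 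These are exact $\pm1$ observables, and a direct check of the sign function $f$ in \eqref{eq:magic_anticom_alice} (and the second, cross-round line) shows they inherit, at $\ket{\psi'}$, the full qubit algebra: $\XA_j,\ZA_j$ approximately anticommute, while each of $\XA_j,\ZA_j$ approximately commutes with each of $\XA_{j'},\ZA_{j'}$ for $j'\neq j$, every relation holding to $O(\sqrt\eps)$.

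For stage (ii), I would take $V_A$ to be the standard isometry adjoining ancilla qubits in state $\ket{0}$ and applying, in order, a layer of Hadamards on the ancillas, a layer of controlled-$\ZA_j$ gates (ancilla $j$ controlling $\ZA_j$ on Alice's space), another Hadamard layer, and a layer of controlled-$\XA_j$ gates --- the ``phase-and-flip'' swap in the style of \cite{McK10,McK16}, whose two $2n$-qubit output registers per player arise from the two controlled layers. Because each $\ZA_j,\XA_j$ is an \emph{exact} observable, every layer is an exact unitary, so $V_A$ is genuinely an isometry; it touches only Alice's space, $V_B$ only Bob's, hence $V=V_A\ot V_B$ is local. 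I would then define $\WA_{\vs,\vt}:=\prod_{j=1}^{2n}(\XA_j)^{s_j}(\ZA_j)^{t_j}$ in a fixed order, and $\WB_{\vu,\vv}$ symmetrically; note these are built directly out of the given operators, as is needed for the downstream rigidity statement.

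For stage (iii), the core computation is the single-operator swap identity. Pushing an output-register Pauli $\sigma_X^{(j)}$ (or $\sigma_Z^{(j)}$) leftward through the four layers of $V_A$, the Hadamards toggle $X\leftrightarrow Z$ on the ancilla, $\sigma_Z$ on a control commutes exactly with every controlled gate, and $\sigma_X$ on its own control produces (using only $\XA_j^2=I$, which is exact) a logical factor $\XA_j$ on Alice's space; carrying that factor across the remaining controlled layers costs one approximate commutation $[\XA_j,\XA_{j'}]$ or $[\XA_j,\ZA_{j'}]$ per gate, i.e.\ $O(n)$ errors of size $O(\sqrt\eps)$, so $\|(\sigma_X^{(j)}V_A-V_A\XA_j)\ket{\psi'}\|=O(n\sqrt\eps)$ and likewise for $Z$. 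Composing this over the at most $4n$ factors of $\WA_{\vs,\vt}$, doing the same for Bob, using $\ket{\phi}=V\ket{\psi'}$ together with the fact that $V_AV_A^\dagger$ (resp.\ $V_BV_B^\dagger$) fixes the Alice- (resp.\ Bob-) part of $\ket\phi$, and applying the triangle inequality across the $A\!\mid\!B$ cut yields $\|V\,\WA_{\vs,\vt}\WB_{\vu,\vv}\ket{\psi'}-\sigma_X^A(\vs)\sigma_Z^A(\vt)\sigma_X^B(\vu)\sigma_Z^B(\vv)\ket{\phi}\|=O(n^2\sqrt\eps)$; the stated bound then follows by Cauchy--Schwarz since $\bra{\psi'}\WA_{\vs,\vt}\WB_{\vu,\vv}\ket{\psi'}=\bra\phi V\,\WA_{\vs,\vt}\WB_{\vu,\vv}V^\dagger\ket\phi$.

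The main obstacle is concealed in ``one approximate commutation per gate.'' The relations \eqref{eq:magic_anticom_alice}--\eqref{eq:magic_anticom_bob} are only guaranteed in the state-dependent distance \emph{at $\ket{\psi'}$}, whereas in the push-through each logical factor must be commuted past controlled gates while acting on intermediate states of the form $(\text{product of }\XA\text{'s and }\ZA\text{'s})\ket{\psi'}$ --- on which a commutator such as $[\XA_j,\XA_{j'}]$ need not be small a priori, since these operators can have operator norm $2$. One therefore needs a ``robustness under composition'' lemma: that a commutator applied to any bounded-length product of the logical observables is still $O(\poly(n)\sqrt\eps)$ when applied to $\ket{\psi'}$. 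This does not follow formally from the pairwise bounds; it must be proved by induction on the number of operators in the product, using that they are genuine observables and carefully ordering the reorderings so the accumulated error grows polynomially rather than exponentially in $n$. Pinning down the right polynomial degree (to land at $O(n^2\sqrt\eps)$) is the delicate part; the remainder is bookkeeping.
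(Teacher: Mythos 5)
Your architecture matches the paper's: the same SWAP isometry (the paper writes McKague's isometry algebraically as a phase-weighted sum over $\va,\vb,\vc,\vd,\ve,\vf$, which is just another presentation of your four-layer circuit), the same single-round Pauli assignments read off from Figure~\ref{fig:1roundideal}, the same product definitions of $\WA_{\vs,\vt},\WB_{\vu,\vv}$, and the same strategy of reducing the claim to a reordering lemma for words in the approximate Paulis. You also correctly flag the one real obstacle: the hypotheses control state-dependent distances only \emph{at} $\ket{\psi'}$, while any push-through applies commutators to states of the form $(\text{product of }\XA,\ZA)\ket{\psi'}$.

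The gap is in your proposed resolution of that obstacle. You claim the robustness-under-composition lemma ``must be proved by induction on the number of operators in the product, using that they are genuine observables and carefully ordering the reorderings.'' That cannot work from the ingredients you allow: once $A_{k-1}$ has acted, the commutator $[A_{k-2},A_k]$ is evaluated against $A_{k-1}\ket{\psi'}$, about which the pairwise bound says nothing, and unitarity/Hermiticity do not repair this --- a careful ordering only rearranges which uncontrolled state the error lands on. What the paper's reordering lemmas (\lemref{switch3}, \lemref{switchmany}, feeding \lemref{approx_pauli}) actually use is an \emph{additional structural ingredient}: cross-player consistency $d_{\psi'}(X_j, X^B_j),\ d_{\psi'}(Z_j, Z^B_j)\le O(\sqrt\eps)$. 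The trick is to commute an Alice operator one step (one commutation error), then immediately ``park'' the operator it just passed on Bob's side via consistency (one consistency error); the parked Bob operator commutes \emph{exactly} with every remaining Alice operator by the tensor-product structure, so it can be slid out of the way for free and converted back at the end. This round trip to the opposite player, not clever bookkeeping, is what keeps the error at $O(k^2)$ rather than blowing up, and it is not derivable from \eqref{eq:magic_anticom_alice}--\eqref{eq:magic_anticom_bob} alone (indeed the paper supplies it separately, through \lemref{switchtatb} and \eqref{eq:approx_single_pauli}). Your write-up never introduces this mechanism, so the step you yourself identify as delicate is missing its key idea.
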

The proof of this theorem is deferred to \secref{isometry}. As a
corollary, we show that the output state of the isometry has high
overlap with the state $\EPR^{\ot 2n}$ consisting of $2n$ EPR pairs
shared between Alice and Bob.
\begin{corollary}
  Suppose that two players \mnote{Alice and Bob} have an entangled
  strategy for the $n$-round parallel repeated Magic Square game,
  which wins with probability at least $1-\eps$.  Then, letting
  $\ket{\phi} = V(\ket{\psi})$ as in \thmref{iso},
  \[ \bra{\phi} \pEPR^{\ot 2n} \ot I_{\text{junk}} \ket{\phi} \geq 1 -
  O(n^2 \sqrt{\eps}), \]
  whwere the identity operator $I_{\text{juk}}$ acts on the first, third, and fifth
  register of the isometry output.
\end{corollary}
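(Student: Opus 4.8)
The plan is to combine \thmref{approx_phase} and \thmref{iso} with the win condition of Fact~\ref{fact:magicwin}, expanding the target projector $\pEPR^{\ot 2n}$ over the Pauli basis so that the per-correlation error $O(n^2\sqrt\eps)$ of \thmref{iso} is paid only once, in expectation. First, apply \thmref{approx_phase} to the given $\eps$-optimal strategy to obtain Alice's observables $\tA^{c}_{r,k}$ and Bob's $\tB^{r}_{c,k}$ obeying the (anti)commutation relations \eqref{eq:magic_anticom_alice}--\eqref{eq:magic_anticom_bob} and the consistency relations \eqref{eq:consistency_alice}--\eqref{eq:consistency_bob}; feed these into \thmref{iso} to obtain the local isometry $V$, the operators $\WA_{\vs,\vt}$ on Alice's space and $\WB_{\vu,\vv}$ on Bob's, the state $\ket{\phi}=V(\ket{\psi})$, and the bound $\big|\bra{\phi}\sigma_X^A(\vs)\sigma_Z^A(\vt)\sigma_X^B(\vu)\sigma_Z^B(\vv)\ket{\phi}-\bra{\psi}\WA_{\vs,\vt}\WB_{\vu,\vv}\ket{\psi}\big|\le O(n^2\sqrt\eps)$ for all $\vs,\vt,\vu,\vv\in\{0,1\}^{2n}$ (identity on the junk registers understood throughout).

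Next I would use the single-pair identity $\pEPR=\tfrac14\bigl(I+\sigma_X\ot\sigma_X+\sigma_Z\ot\sigma_Z-\sigma_Y\ot\sigma_Y\bigr)$ together with $-\sigma_Y\ot\sigma_Y=(\sigma_X\sigma_Z)\ot(\sigma_X\sigma_Z)$, which tensored over the $2n$ extracted qubit pairs yields
\[
\pEPR^{\ot 2n}\ot I_{\mathrm{junk}}=\frac{1}{4^{2n}}\sum_{\vs,\vt\in\{0,1\}^{2n}}\bigl(\sigma_X^A(\vs)\sigma_Z^A(\vt)\,\sigma_X^B(\vs)\sigma_Z^B(\vt)\bigr)\ot I_{\mathrm{junk}}.
\]
Hence $\bra{\phi}\pEPR^{\ot 2n}\ot I_{\mathrm{junk}}\ket{\phi}=\E_{\vs,\vt}\bra{\phi}\sigma_X^A(\vs)\sigma_Z^A(\vt)\sigma_X^B(\vs)\sigma_Z^B(\vt)\ket{\phi}$, and since this is an \emph{average} of quantities each within $O(n^2\sqrt\eps)$ of its target, \thmref{iso} gives
\[
\bra{\phi}\pEPR^{\ot 2n}\ot I_{\mathrm{junk}}\ket{\phi}\ \ge\ \E_{\vs,\vt}\,\bra{\psi}\WA_{\vs,\vt}\WB_{\vs,\vt}\ket{\psi}\ -\ O(n^2\sqrt\eps).
\]

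It remains to show $\E_{\vs,\vt}\bra{\psi}\WA_{\vs,\vt}\WB_{\vs,\vt}\ket{\psi}\ge 1-O(\poly(n\eps))$, which is where the game enters. I would use that, by the construction underlying \thmref{iso} (deferred to \secref{isometry}, in the spirit of McKague's swap isometries), $\WA_{\vs,\vt}$ acts on $\ket{\psi}$ like a product $\prod_{m}(\hat X^A_m)^{s_m}(\hat Z^A_m)^{t_m}$ of single-qubit ``approximate Paulis'' drawn from the $\tA$'s, and likewise $\WB_{\vs,\vt}$ from the $\tB$'s. The two ingredients are then: (i) the per-qubit correlations $\hat X^B_m\ket{\psi}\approx\hat X^A_m\ket{\psi}$ and $\hat Z^B_m\ket{\psi}\approx\hat Z^A_m\ket{\psi}$, obtained by identifying---through the consistency relations \eqref{eq:consistency_alice}--\eqref{eq:consistency_bob} and Remark~\ref{rem:ignore_output}---$\hat X^A_m,\hat Z^A_m$ on $\ket{\psi}$ with the game observables $A^{\vc}_{\vr,\vp}$ for $\vr$ in the ``$XX$''-row resp.\ the ``$ZZ$''-row, doing the same on Bob's side, and invoking Fact~\ref{fact:magicwin}; and (ii) the approximate (anti)commutation relations \eqref{eq:magic_anticom_alice}--\eqref{eq:magic_anticom_bob}, inherited by the $\hat{}$-operators, which let me reorder the factors of $\WB_{\vs,\vt}$ at a cost of $O(\sqrt\eps)$ per swap. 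Given these, a telescoping over the $\le 2n$ qubits---bring each Bob factor $\hat Z^B_m$ (resp.\ $\hat X^B_m$) next to $\ket{\psi}$, replace it by $\hat Z^A_m$ (resp.\ $\hat X^A_m$) via (i), and commute that Alice operator through to cancel the matching factor inside $\WA_{\vs,\vt}$ (observables squaring to $I$)---collapses $\bra{\psi}\WA_{\vs,\vt}\WB_{\vs,\vt}\ket{\psi}$ to $\bra{\psi}\psi\rangle=1$ up to $O(n)$ commutation errors of order $\sqrt\eps$ and $O(n)$ correlation errors of order $\poly(n\eps)$; averaging over $(\vs,\vt)$ preserves this, and combining with the previous display proves the corollary. (Alternatively, one can invoke only the $4n$ stabilizer generators $\sigma_X^A(\vec{e}_m)\sigma_X^B(\vec{e}_m),\ \sigma_Z^A(\vec{e}_m)\sigma_Z^B(\vec{e}_m)$ and the commuting-projector union bound $I-\pEPR^{\ot 2n}\preceq\sum_m(I-\Pi_m)$, with $\Pi_m$ the $m$-th single-pair EPR projector; this is cleaner but pays the \thmref{iso} error once per generator.)

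I expect the third step to be the main obstacle. \thmref{iso} on its own says nothing about the game, so one must push the win guarantee of Fact~\ref{fact:magicwin} through the consistency relations to conclude that the pulled-back Pauli pairs $\WA_{\vs,\vt}\WB_{\vs,\vt}$ fix $\ket{\psi}$ in expectation. The delicate case is the ``$\sigma_Y\ot\sigma_Y$''-type terms, i.e.\ those $(\vs,\vt)$ with $s_m=t_m=1$ on some qubit $m$: there $\sigma_X\sigma_Z$ on a single extracted qubit is not the observable of any single game question, so the correlation cannot be read directly off Fact~\ref{fact:magicwin} but must be obtained by splitting $\hat X^A_m\hat Z^A_m$ and using the separately testable $X$- and $Z$-correlations at qubit $m$, which is exactly where the round-wise anticommutation relations of \eqref{eq:magic_anticom_alice}--\eqref{eq:magic_anticom_bob} are needed to reorder operators at cost only $O(\sqrt\eps)$.
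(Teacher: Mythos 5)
Your proposal is correct in outline but takes a genuinely different route from the paper. The paper proves the corollary in two short steps: \lemref{parallel_ideal_test} shows that $\bra{\phi} M_n \ket{\phi} \geq 1 - O(n^2\sqrt\eps)$, where $M_n = \E_{\vr,\vc,\vp}\bigl(\sigma_X^A(\va)\sigma_Z^A(\vb)\sigma_X^B(\vc)\sigma_Z^B(\vd)\bigr)$ is the ``ideal magic-square test'' operator, obtained by pushing Fact~\ref{fact:magicwin} through the consistency lemmas and then \thmref{iso}; then \lemref{honest_measurement} observes that $M_1$ has a spectral gap (top eigenvalue $1$ on the unique eigenvector $\EPR^{\ot 2}$, all other eigenvalues $\leq 5/9$), so $M_n \leq \pEPR^{\ot 2n} + \tfrac{5}{9}(I - \pEPR^{\ot 2n})$, which converts the $M_n$-overlap into EPR-overlap at the cost of a constant factor $9/4$. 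You instead expand $\pEPR^{\ot 2n}$ directly as a uniform average of $\sigma_X^A(\vs)\sigma_Z^A(\vt)\sigma_X^B(\vs)\sigma_Z^B(\vt)$ over all $\vs,\vt$, apply \thmref{iso} termwise, and then bound $\E_{\vs,\vt}\bra{\psi}\WA_{\vs,\vt}\WB_{\vs,\vt}\ket{\psi}$ by hand. The paper's route has the advantage that the spectral-gap lemma handles all of the ``unmeasured'' Pauli strings at once; one only ever has to control the Paulis that literally appear in the game observable $M_n$, which are exactly the ones the consistency lemmas hand you. Your route must instead establish closeness of $\bra{\psi}\WA_{\vs,\vt}\WB_{\vs,\vt}\ket{\psi}$ to $1$ uniformly over all $4^{2n}$ pairs $(\vs,\vt)$, which is precisely the content of \lemref{approx_pauli} (eq.~\eqref{eq:approx_pauli_con}) together with the fact that the single-round observables square to $I$ and commute across rounds; the net effect is that you re-derive a special case of \lemref{parallel_ideal_test} by hand rather than reusing it.

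Two remarks on your step (iii), since that is where you anticipated trouble. First, you worry about the ``$\sigma_Y\ot\sigma_Y$''-type pairs and attribute the difficulty to the anticommutation relations. In fact no anticommutation is needed there: writing $\WA_{\vs,\vt}\ket{\psi} \approx (Z^B)^{\vt}(X^B)^{\vs}\ket{\psi}$ via eq.~\eqref{eq:approx_pauli_con} reduces $\bra{\psi}\WA_{\vs,\vt}\WB_{\vs,\vt}\ket{\psi}$ to $\bra{\psi}(X^B)^{\vs}(Z^B)^{\vt}(Z^B)^{\vt}(X^B)^{\vs}\ket{\psi}$, and this collapses to $1$ using only the facts that each $Z^B_i$, $X^B_i$ squares to $I$ and that same-type operators on different rounds approximately commute (the commutation relations, not anticommutation, in eq.~\eqref{eq:approx_single_pauli}). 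Second, your error accounting says ``$O(n)$ commutation errors,'' but cancelling $(Z^B)^{\vt}(Z^B)^{\vt}$ and then $(X^B)^{\vs}(X^B)^{\vs}$ by interleaving requires $O(n^2)$ pairwise swaps, each costing $O(\sqrt\eps)$, for a total $O(n^2\sqrt\eps)$ that happens to match the \thmref{iso} term; the count you wrote would give an optimistic $O(n\sqrt\eps)$ and should be fixed. Finally, your proposed simplification via the $4n$ stabilizer generators and the union bound $I - \pEPR^{\ot 2n} \preceq \sum_m(I-\Pi_m)$ would pay the $O(n^2\sqrt\eps)$ error from \thmref{iso} once per generator, i.e.\ $O(n^3\sqrt\eps)$ total, which is strictly weaker than what the corollary asserts; this is precisely the inefficiency that the paper's spectral-gap lemma is designed to avoid.
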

\begin{proof}
This follows from \lemref{parallel_ideal_test} and \lemref{honest_measurement}.
\end{proof}
\subsection{Single-round observables}
\label{sec:single_round_observables}
\begin{definition}
  Let $k \in [n]$ be the index of a round, and denote the single round
  observables associated with that round by $A^{c}_{r,k} :=
  A^{\vc}_{\vr, \vec{e}_k}$ and $B^{r}_{c,k} := B^{\vr}_{\vc, \vec{e}_k}$, where $\vc$ and $\vr$ are any vectors
  whose $k$th coordinates are $r$ and $c$ respectively, and
  $\vec{e}_k$ is the vector with a $1$ in the $k$th position and $0$s
  elsewhere.  
\end{definition}

\begin{definition}
  For each round $k$, define the state $\ket{\mathrm{ancilla}_k}_k :=
  \frac{1}{\sqrt{3^{n - 1}}} \sum_{\vec{r}_{-k} \in \{0,1,2\}^{n-1}}
  \ket{\vec{r}_{-k}}$. Define the dilated state 
  \[\ket{\psi'} := \ket{\psi} \ot \anck{1}_1^A \ot \dots \ot
  \anck{n}_n^A \ot \anck{1}_1^B \ot \dots \ot
    \anck{n}_n^B \]
  and define dilated observables on Alice's side
  \begin{align} &\tilde{A}^{c}_{r, k} := \sum_{\vec{r}_{-k}} \sum_{\vec{a}_0,
                  \vec{a}_1} (-1)^{(\vec{a}_c)_k}
                  P^{\vec{a}_0 , \vec{a}_1}_{\vec r } \ot I_1 \ot \dots \ot I_{k -1} \ot
                  \proj{\vec{r}_{-k}} \ot I_{k+1} \dots \ot I_{n} \nonumber \\
                & = \sum_{\vec{r}_{-k}} A_{\vec s , \vec{e}_k}^{\vec c} \ot I_1 \ot \dots \ot I_{k -1} \ot
                  \proj{\vec{r}_{-k}} \ot I_{k+1} \dots \ot I_{n} \nonumber 
  \end{align}
  
  Where $\vec c$ in the last line can be any $\vec c$ satisfying $\vec c_k = c$, and wherever we write a sum over $\vec r_{-k}$ it is implicit that $r_k$ is fixed to be $r_k = r$.

  Observe that the operators $\tilde{A}^{c}_{r,k}$ are true
  observables, i.e. they are Hermitian and square to $I$. Moreover,
  $\tilde{A}^{c}_{r,k}$ simulates the two-outcome POVM whose
  elements are given by
  $M^{a_c} := \E_{\vec{r}_{-k}} P^{a_c}_{\vec{r}, k}$.
  
  Similarly, define dilated observables on Bob's side
    \begin{align} &\tilde{B}^{r}_{c, k} := \sum_{\vec{c}_{-k}} \sum_{\vec{b}_0,
                    \vec{b}_1} (-1)^{(\vec{b}_r)_k}
                    Q^{\vec{b}_0 , \vec{b}_1}_{\vec c } \ot I_1 \ot \dots \ot I_{k -1} \ot
                    \proj{\vec{r}_{-k}} \ot I_{k+1} \dots \ot I_{n} \nonumber \\
                  & = \sum_{\vec{c}_{-k}} B_{\vec c , \vec{e}_k}^{\vec r} \ot I_1 \ot \dots \ot I_{k -1} \ot
                    \proj{\vec{r}_{-k}} \ot I_{k+1} \dots \ot I_{n} \nonumber 
    \end{align}
    
 Where $\vec r$ in the last line can be any $\vec r$ satisfying $\vec r_k = r$, and wherever we write a sum over $\vec c_{-k}$ it is implicit that $c_k$ is fixed to be $c_k = c$.

    Observe that the operators $\tilde{B}^{r}_{c, k}$ are true
    observables, i.e. they are Hermitian and square to $I$. Moreover,
    $\tilde{B}^{r}_{c, k}$ simulates the two-outcome POVM whose
    elements are given by
    $M^{b_c} := \E_{\vec{c}_{-k}} P^{b_r}_{\vec{c}, k}$.

\end{definition}

\begin{lemma}
  \label{lem:single_round_phase}
  For all $k, r, r', c, c'$, it holds that \[ \|( \tA^{c}_{r, k} \tA^{c'}_{r', k}
  - (-1)^{f(r,r', c, c')} \tA^{c'}_{r', k} \tA^{c}_{r,k})\ket{\psi'} \|
  \leq O(\sqrt{\eps}).\]
  
  The analogous statement also holds for Bob operators.  \mnote{Should we leave this sentence like this?}
\end{lemma}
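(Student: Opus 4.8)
The plan is to reduce to a single round of the game and then invoke the robust rigidity of the single-round Magic Square game established by Wu, Bancal, McKague, and Scarani~\cite{WBMS16}.

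First I would observe that, for each fixed round $k$, the dilated observables $\tA^c_{r,k}$ and $\tB^r_{c,k}$ acting on $\ket{\psi'}$ are exactly the observables of a \emph{single-round} Magic Square strategy. Indeed, $\tA^c_{r,k}$ implements the two-outcome measurement ``measure the round-$k$ ancilla register $\anck{k}_k^A$ in the computational basis to obtain a uniformly random $\vec r_{-k}$, apply the $n$-round POVM $\{P^{\vec a_0,\vec a_1}_{r \cup \vec r_{-k}}\}$, and report the $k$-th coordinate of $\vec a_c$'', and symmetrically for $\tB^r_{c,k}$ on Bob's side; the other ancilla registers of $\ket{\psi'}$ are untouched. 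Since every coordinate of the $n$-round inputs is uniform and independent, the winning probability of this single-round strategy equals the probability that round $k$ is won in the $n$-round game, which is at least the probability that \emph{all} rounds are won, hence at least $1-\eps$. So the single-round strategy is $\eps$-close to optimal.

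Next I would apply~\cite{WBMS16} to this single-round strategy: its robustness analysis produces a local isometry under which $\ket{\psi'}$ is $O(\sqrt{\eps})$-close to $\EPR^{\ot 2}$ and each single-round observable is $O(\sqrt{\eps})$-close, in state-dependent norm, to the corresponding ideal observable $\sigma^c_r$ of Figure~\ref{fig:1roundideal}. By direct inspection of that table the ideal observables satisfy $\sigma^c_r\,\sigma^{c'}_{r'} = (-1)^{f(r,r',c,c')}\,\sigma^{c'}_{r'}\,\sigma^c_r$ \emph{exactly} --- two cells commute when they share a row or a column, and anticommute otherwise --- so transferring this identity back to the $\tA^c_{r,k}$ through the isometry costs only $O(\sqrt{\eps})$ and gives the claim, and the statement for Bob follows by the same argument with Alice and Bob (equivalently, rows and columns) interchanged. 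The one point needing care is transferring a \emph{product} of two observables through the isometry: for that I would first replace the inner Alice observable by the matching Bob observable using the consistency relation $\tA^c_{r,k}\ket{\psi'} \approx \tB^r_{c,k}\ket{\psi'}$, which holds up to $O(\sqrt{\eps})$ by Fact~\ref{fact:magicwin} applied with $\vec p = \vec{e}_k$, so that it commutes past the outer Alice observable, and only then invoke the single-operator transfer bound.

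The real content --- and the main obstacle --- lives inside~\cite{WBMS16}. The commuting relations are comparatively easy: a common row gives $\tA^0_{r,k}\tA^1_{r,k} = \tA^2_{r,k}$ as an exact operator identity, and a common column is handled by chaining the consistency relations $\tA^c_{r,k}\ket{\psi'} \approx \tB^r_{c,k}\ket{\psi'}$ with the exact commutation of Bob's same-input observables. The anticommuting relations (different row \emph{and} different column) are the heart of Mermin--Peres rigidity: one first derives identities expressing the observable for the third row (or column) in terms of those for the first two --- e.g.\ $\tA^c_{2,k}\ket{\psi'} \approx -\,\tA^c_{0,k}\tA^c_{1,k}\ket{\psi'}$ --- crucially using the $+\vec 1$ in $\vec b_2 = \vec b_0 + \vec b_1 + \vec 1$, which is what makes the columns of the square multiply to $-I$ while the rows multiply to $+I$; one then propagates these identities through the $3 \times 3$ structure, alternately using a consistency relation to shift an operator onto the opposite (commuting) party and a row- or column-product identity to re-expand it. Carrying out this chase with only $O(\sqrt{\eps})$ accumulated error is the technical core, and I would import it directly from~\cite{WBMS16}.
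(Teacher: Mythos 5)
Your proposal is correct and takes essentially the same route as the paper: identify $\tA^c_{r,k}$, $\tB^r_{c,k}$ as the observables of a single-round Magic Square strategy winning with probability $\ge 1-\eps$, and run the Mermin--Peres chain argument of~\cite{WBMS16} on them --- which is exactly what the paper does by substituting the dilated operators into Appendix~\ref{app:singleroundcase}, where the nine win conditions hold for $\ket{\psi'}$ with error $9\eps$ once the round-$k$ inputs are fixed (Fact~\ref{fact:magicwin} with $\vp = \ve_k$). The one superfluous element in your write-up is the detour through a single-round isometry: the approximate (anti)commutation relations come directly out of the win conditions by alternately invoking Alice--Bob consistency and row/column-product identities --- the ``chase'' you describe in your final paragraph --- without any isometry, and this is the natural order of operations since the present lemma is itself one of the ingredients the paper later uses to \emph{construct} its isometry in \thmref{iso}.
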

\begin{proof}
  Follows from single round analysis.  See Appendix \ref{app:singleroundcase}.  Replacing the operators $A^c_r$ in that analysis with $\tA^{c}_{r, k}$, and replacing $B^r_c$ in that analysis with $\tB^{r}_{c, k}$ one may observe that the analysis in  Appendix \ref{app:singleroundcase} still holds.    \mnote{Hopefully.  Still need to incorporate the average over all other rounds.  Should check --- note to myself}
\end{proof}
\begin{lemma}
  \label{lem:com_between_rounds}
  For all $k \neq k', r, r', c, c'$, it holds that 
  \[ \| (\tA^c_{r, k} \tA^{c'}_{r', k'}  -
  \tA^{c'}_{r', k'} \tA^{c}_{r,k} )\ket{\psi'}\| \leq O(\sqrt{\eps}).\]
  
  The analogous statement also holds for Bob operators.  \mnote{Should we leave this sentence like this?}
\end{lemma}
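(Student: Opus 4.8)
The plan is to unfold the dilated commutator into an expectation of commutators of \emph{bare} single-round observables, and then to route each such commutator through Bob's observables, exploiting the fact that all of Bob's single-round observables coming from one fixed Bob input are simultaneously diagonalised by a single projective measurement and hence commute exactly.

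First, since $\tA^c_{r,k}$ and $\tA^{c'}_{r',k'}$ act on disjoint ancilla registers and respectively ``coherently average'' $A^{\vc}_{\vr,\vec{e}_k}$ over $\vr_{-k}$ and $A^{\vc'}_{\vr',\vec{e}_{k'}}$ over $\vr'_{-k'}$, expanding in the ancilla basis gives
\[
  \bigl\| (\tA^c_{r,k}\tA^{c'}_{r',k'} - \tA^{c'}_{r',k'}\tA^c_{r,k})\ket{\psi'} \bigr\|^2
  = \E_{\vr_{-k},\,\vr'_{-k'}} \bigl\| \bigl[\, A^{\vc}_{(r,\vr_{-k}),\vec{e}_k}\,,\; A^{\vc'}_{(r',\vr'_{-k'}),\vec{e}_{k'}} \,\bigr]\ket{\psi} \bigr\|^2,
\]
for any $\vc,\vc'$ with $c_k=c$, $c'_{k'}=c'$. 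By Remark~\ref{rem:ignore_output}, $A^{\vd}_{\vr,\vec{e}_k}$ depends on $\vd$ only through $d_k$, so without changing the right-hand side I may take $\vc=\vc'=\vd$ to be a single vector drawn uniformly at random subject to $d_k=c$ and $d_{k'}=c'$, and insert an expectation over this $\vd$.

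Now fix $\vr_{-k},\vr'_{-k'},\vd$; let $\vr^{\ast}$ be any vector with $r^{\ast}_k=r$ and $r^{\ast}_{k'}=r'$; and set $P=A^{\vd}_{(r,\vr_{-k}),\vec{e}_k}$, $Q=A^{\vd}_{(r',\vr'_{-k'}),\vec{e}_{k'}}$, $B_P=B^{\vr^{\ast}}_{\vd,\vec{e}_k}$, $B_Q=B^{\vr^{\ast}}_{\vd,\vec{e}_{k'}}$. The win condition, Fact~\ref{fact:magicwin}, applied with $\vp=\vec{e}_k$ and with $\vp=\vec{e}_{k'}$ --- together with the fact that Alice's and Bob's operators commute, so the relevant inner products are real and at most $1$ --- gives, after conditioning on the three pinned coordinates (which costs only a constant factor),
\[
  \E_{\vr_{-k},\vr'_{-k'},\vd}\, \|(P-B_P)\ket{\psi}\|^2 \le O(\eps), \qquad \E_{\vr_{-k},\vr'_{-k'},\vd}\, \|(Q-B_Q)\ket{\psi}\|^2 \le O(\eps).
\]
With these in hand I run the telescoping chain
\[
  PQ\ket\psi \approx P B_Q\ket\psi = B_Q P\ket\psi \approx B_Q B_P\ket\psi = B_P B_Q\ket\psi \approx B_P Q\ket\psi = Q B_P\ket\psi \approx QP\ket\psi,
\]
in which every ``$=$'' is an \emph{exact} identity --- the two outer ones because $B_P,B_Q$ (on Bob) commute with $P,Q$ (on Alice), and the middle one because $B_P$ and $B_Q$ are both $\pm1$-linear combinations of the single projective measurement $\{Q^{\vb_0,\vb_1}_{\vd}\}$ --- while every ``$\approx$'' replaces a vector $O\ket\psi$, with $O\in\{P,Q,B_P,B_Q\}$ sitting \emph{directly against} $\ket\psi$, by its approximant, so that the norm-$1$ operator to its left does not inflate the error. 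Summing the four error terms, squaring, and taking the expectation over $\vr_{-k},\vr'_{-k'},\vd$ yields $\E\,\|[P,Q]\ket\psi\|^2 = O(\eps)$, i.e.\ the stated $O(\sqrt\eps)$ bound; the claim for the $\tB$ operators follows by the symmetric argument with the roles of Alice and Bob exchanged.

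The step I expect to be the crux is exactly the one the chain is designed to finesse: the two dilated operators are genuinely marginals of \emph{different} Alice measurements --- their ancilla registers hold independent copies of the other round's input letter --- so they do not commute as operators, and the substitution $P\ket\psi\approx B_P\ket\psi$ is valid only when $P$ is adjacent to the state. Detouring through a \emph{common} Bob input $\vd$, where exact commutation comes for free, and arranging that at each approximation step the replaced operator touches $\ket\psi$, is what makes everything go through; invoking the win condition on rounds $k$ and $k'$ simultaneously --- the ``global consistency check'' --- is precisely what lets the same $\vd$ serve on both sides. (One could alternatively perform the input alignment using Lemma~\ref{lem:moveatoa} in place of the free averaging over $\vd$.)
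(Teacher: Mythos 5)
Your proof is correct and takes essentially the same approach as the paper's: expand the dilated commutator as an average over the ancilla-indexed bare commutators, route each Alice operator through a common-input Bob operator (where exact commutation holds), and route back, keeping the replaced operator adjacent to the state so the error terms are controlled directly by the win condition. Your telescoping chain is a cleaner bookkeeping of the same six-step triangle-inequality argument the paper performs under $\E_{\vec c|k,k'}$, and the observation that the expectation over a common Bob input $\vd$ replaces the need for Lemma~\ref{lem:moveatoa} matches the paper's use of $\E_{\vec c|k,k'}$ together with Remark~\ref{rem:ignore_output}.
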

\begin{proof}
  Let $\vec{c}$ be any choice of columns such that $c_k = c, c_{k'} =
  c'$.

  Recall that by equation \eqref{eq:magicwin} we have that 
  
  \begin{equation}
    \forall \vec{p}, \quad \E_{\vec{r}, \vec{c}} \bra{\psi} A_{\vec r, \vec p}^{\vec c} B_{\vec
      c, \vec p}^{\vec r} \ket{\psi} \geq 1 - \eps.
  \end{equation}
  
  Setting $\vec p = \vec{e}_{k}$ gives that, for all fixed values of $r_{k}$ and $c_{k}$,
  
  \begin{equation}
    \forall k, \quad \E_{\vec{r}_{-k}, \vec{c}_{-k}} \bra{\psi} A_{\vec r, \vec{e}_{k}}^{\vec c} B_{\vec
      c, \vec{e}_{k}}^{\vec r} \ket{\psi} \geq 1 - 9 \eps.
  \end{equation}
  
  So,
  
  \begin{equation}
    \forall k, \quad \E_{\vec{r}_{-k}, \vec{c}_{-k}} d_\psi \left (A_{\vec r, \vec{e}_{k}}^{\vec c} B_{\vec
        c, \vec{e}_{k}}^{\vec r} \right )^2 \leq 18
    \eps 
    \label{eq:switchab}
  \end{equation}
  \anote{Factor of 2 could be absorbed into def of $d_\psi$.}
  
  % By Lemma \ref{lem:coherent_average2} it follows that, for all $r', c'$, 
  
  % \begin{equation}
  %   \quad  d_\psi \left (\tA^{c'}_{r', k'} \tB^{r'}_{c', k'} \right )^2 \leq 9 \eps
  % \end{equation}
  Further, recall that $A^{\vec{c}}_{\vec{r},
    \vec{e}_k} = A^{\vec{c}'}_{\vec{r}, \vec{e}_k}$
  as long as the $k$th coordinate of $\vec{c}$ and
  $\vec{c}'$ agree. Denote by $\E_{\vec{c} | k,k'}$ the uniform
  distribution over choices of column vector $\vec{c}$ such that
  $c_k = c$ and $c_{k'} = c'$. Then
    \begin{align*}
    d_\psi(\tA^c_{r,k} \tA^{c'}_{r', k'},
    \tA^{c'}_{r'k'}, \tA^c_{r,k}) &= \E_{\vec{c} | k, k'}
                                    d_{\psi'}\big(\sum_{\vec{r}_{-k}} \sum_{\vec{r}'_{-k'}}
                                    A^{\vec{c}}_{\vec{r}, \vec{e}_k} 
                                    A^{\vec{c}}_{\vec{r}',
                                    \vec{e}_{k'}}
                                    \ot\proj{\vec{r}_{-k},
                                    \vec{r}'_{-k'}}_{k,k'}, \\
      &\qquad\qquad \sum_{\vec{r}_{-k}} \sum_{\vec{r}'_{-k'}}
                                    A^{\vec{c}}_{\vec{r}',
                                    \vec{e}_{k'}}
                                    A^{\vec{c}}_{\vec{r}, \vec{e}_k} 
                                    \ot\proj{\vec{r}_{-k},
                                    \vec{r}'_{-k'}}_{k,k'}\big) \\
      \intertext{Note that the column vector $\vec{c}$ is common to
      both $A$ operators. Also, as a convention, wherever there is a sum or expectation over $\vr_{-k}$ or $\vr'_{-k'}$ in this proof, it is implicit that the values of $r_k$ and $r'_{k'}$ are fixed to be $r_k = r$ and $r'_{k'} = r'$. Now, we apply \lemref{tri} to move
      the leftmost $A$ operator to Bob.}
      &\leq \E_{\vec{c} | k, k'} \Big[
                                    d_{\psi'}\big(\sum_{\vec{r}_{-k}}
        \sum_{\vec{r}'_{-k'}}
        A^{\vec{r}}_{\vec{c},
                                    \vec{e}_{k}}
                                    B^{\vec{r}'}_{\vec{c}, \vec{e}_{k'}} 
                                    \ot\proj{\vec{r}_{-k},
                                    \vec{r}'_{-k'}}_{k,k'}, \\
      &\qquad\qquad \sum_{\vec{r}_{-k}} \sum_{\vec{r}'_{-k'}}
                                    A^{\vec{c}}_{\vec{r}',
                                    \vec{e}_{k'}}
                                    A^{\vec{c}}_{\vec{r}, \vec{e}_k} 
                                    \ot\proj{\vec{r}_{-k},
                                    \vec{r}'_{-k'}}_{k,k'}\big)  + \\
        &\qquad\qquad d_{\psi'} \big(\sum_{\vec{r}_{-k}}A^{\vec{c}}_{\vec{r},
          \vec{e}_k} \ot \proj{\vr_{-k}} \ot I_{k'}
          \sum_{\vec{r}'_{-k'}}A^{\vc}_{\vr', \ve_{k'}} \ot I_{k} \ot
          \proj{\vr'_{-k'}}, \\
      &\qquad\qquad\qquad
          \sum_{\vec{r}_{-k}} A^{\vc}_{\vr, \ve_k} \ot \proj{\vr_{-k}} \ot I_{k'}\sum_{\vec{r}'_{-k'}}B^{\vr'}_{\vc,
        \ve_{k'}}\ot I_{k} \ot \proj{\vr'_{-k'}}\big) \Big] \\
      \intertext{Note that $\|\sum_{\vec{r}_{-k}}A^{\vec{c}}_{\vec{r},
          \vec{e}_k} \ot \proj{\vr_{-k}} \ot I_{k'}\| \leq 1$. Hence,
      applying \lemref{tri2} and \lemref{coherent_average}, we get}
      &\leq \E_{\vec{c} | k, k'} \Big[
                                    d_{\psi'}\big(\sum_{\vec{r}_{-k}}
        \sum_{\vec{r}'_{-k'}}
        A^{\vec{r}}_{\vec{c},
                                    \vec{e}_{k}}
                                    B^{\vec{r}'}_{\vec{c}, \vec{e}_{k'}} 
                                    \ot\proj{\vec{r}_{-k},
                                    \vec{r}'_{-k'}}_{k,k'}, \\
      &\qquad\qquad \sum_{\vec{r}_{-k}} \sum_{\vec{r}'_{-k'}}
                                    A^{\vec{c}}_{\vec{r}',
                                    \vec{e}_{k'}}
                                    A^{\vec{c}}_{\vec{r}, \vec{e}_k} 
                                    \ot\proj{\vec{r}_{-k},
                                    \vec{r}'_{-k'}}_{k,k'}\big)  + \\
        &\qquad\qquad \E_{\vec{r}'_{-k'}} d_\psi(A^{\vc}_{\vr', \ve_{k'}},
          B^{\vr'}_{\vc,
        \ve_{k'}}) \Big] \\
      \intertext{By performing the same steps on the other $A$
      operator, we obtain}
      &\leq \E_{\vec{c} | k, k'} \Big[
                                    d_{\psi'}\big(\sum_{\vec{r}_{-k}} \sum_{\vec{r}'_{-k'}}
                                    B^{\vec{r}'}_{\vec{c}, \vec{e}_{k'}} 
                                    B^{\vec{r}}_{\vec{c},
                                    \vec{e}_{k}}
                                    \ot\proj{\vec{r}_{-k},
                                    \vec{r}'_{-k'}}_{k,k'}, \\
      &\qquad\qquad \sum_{\vec{r}_{-k}} \sum_{\vec{r}'_{-k'}}
                                    A^{\vec{c}}_{\vec{r}',
                                    \vec{e}_{k'}}
                                    A^{\vec{c}}_{\vec{r}, \vec{e}_k} 
                                    \ot\proj{\vec{r}_{-k},
                                    \vec{r}'_{-k'}}_{k,k'}\big)  + \\
        &\qquad\qquad \E_{\vec{r}_{-k}} d_\psi(A^{\vec{c}}_{\vec{r}, \vec{e}_k}, B^{\vec{r}}_{\vec{c},
        \vec{e}_k}) + \E_{\vr'_{-k'}} d_\psi(A^{\vc}_{\vr', \ve_{k'}}, B^{\vr'}_{\vc,
        \ve_{k'}}) \Big] \\
      \intertext{Now the $B$ operators can be commuted exactly since
      they share the same input $\vc$.}
      &\leq \E_{\vec{c} | k, k'} \Big[
                                    d_{\psi'}\big(\sum_{\vec{r}_{-k}} \sum_{\vec{r}'_{-k'}}
                                    B^{\vec{r}}_{\vec{c}, \vec{e}_{k}} 
                                    B^{\vec{r}'}_{\vec{c},
                                    \vec{e}_{k'}}
                                    \ot\proj{\vec{r}_{-k},
                                    \vec{r}'_{-k'}}_{k,k'}, \\
      &\qquad\qquad \sum_{\vec{r}_{-k}} \sum_{\vec{r}'_{-k'}}
                                    A^{\vec{c}}_{\vec{r}',
                                    \vec{e}_{k'}}
                                    A^{\vec{c}}_{\vec{r}, \vec{e}_k} 
                                    \ot\proj{\vec{r}_{-k},
                                    \vec{r}'_{-k'}}_{k,k'}\big)  + \\
        &\qquad\qquad \E_{\vr_{-k}} d_\psi(A^{\vec{c}}_{\vec{r}, \vec{e}_k}, B^{\vec{r}}_{\vec{c},
        \vec{e}_k}) + \E_{\vr'_{-k'}} d_\psi(A^{\vc}_{\vr', \ve_{k'}}, B^{\vr'}_{\vc,
        \ve_{k'}}) \Big] \\
      \intertext{We move the $B$s back to Alice by reversing the
      previous steps, again using Lemmas~\ref{lem:tri},
      \ref{lem:tri2}, and \ref{lem:coherent_average}}
            &\leq \E_{\vec{c} | k, k'} \Big[
                                    d_{\psi'}\big(\sum_{\vec{r}_{-k}} \sum_{\vec{r}'_{-k'}}
                                    A^{\vec{c}}_{\vec{r}', \vec{e}_{k'}} 
                                    A^{\vec{c}}_{\vec{r},
                                    \vec{e}_{k}}
                                    \ot\proj{\vec{r}_{-k},
                                    \vec{r}'_{-k'}}_{k,k'}, \\
      &\qquad\qquad \sum_{\vec{r}_{-k}} \sum_{\vec{r}'_{-k'}}
                                    A^{\vec{c}}_{\vec{r}',
                                    \vec{e}_{k'}}
                                    A^{\vec{c}}_{\vec{r}, \vec{e}_k} 
                                    \ot\proj{\vec{r}_{-k},
                                    \vec{r}'_{-k'}}_{k,k'}\big)  + \\
        &\qquad\qquad 2\E_{\vr_{-k}}d_\psi(A^{\vec{c}}_{\vec{r}, \vec{e}_k}, B^{\vec{r}}_{\vec{c},
        \vec{e}_k}) + 2\E_{\vr'_{-k'}}d_\psi(A^{\vc}_{\vr', \ve_{k'}}, B^{\vr'}_{\vc,
        \ve_{k'}})  \Big] \\
      &= \E_{\vec{c} | k, k'} (2\E_{vr_{-k}}d_\psi(A^{\vec{c}}_{\vec{r}, \vec{e}_k}, B^{\vec{r}}_{\vec{c},
        \vec{e}_k}) + 2\E_{\vr'_{-k'}}d_\psi(A^{\vc}_{\vr', \ve_{k'}}, B^{\vr'}_{\vc,
        \ve_{k'}})) \\
      \intertext{Finally, we bound this by \myeqref{switchab}.  Note that \myeqref{switchab} is stated with $\E_{ \vec r_{-k}, \vec{c}_{-k}}$, but this implies the same statement with  $\E_{\vec{c} | k, k'}\E_{\vec r_{-k}}$ with an additional constant factor of 3.  Similarly for $\E_{\vec{c} | k, k'}\E_{\vec r_{-k'}}$.  So, continuing our computation: }
      &\leq 4 \cdot 3 \cdot 3 \sqrt{2 \eps} = 36\sqrt{2\eps}.
  \end{align*}
  \anote{Check square roots, also some of the invocations of
    \lemref{tri} should really be Cauchy-Schwartz.}
\end{proof}

\begin{lemma}
\label{lem:switchtatb}
\[\forall r,c,k, \quad d_{\psi'} \left (\tA^{c}_{r,
    k},\tilde{B}^{r}_{c, k}\right ) \leq  O(\sqrt{\eps})\] 
\end{lemma}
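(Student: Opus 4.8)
The plan is to reduce the claim directly to the round-$k$ specialization of Fact~\ref{fact:magicwin}, using the key structural fact that $\tA^{c}_{r,k}$ and $\tB^{r}_{c,k}$ act on disjoint tensor factors of $\ket{\psi'}$ and hence are exactly commuting observables.

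First I would record the round-$k$ win bound. Starting from \myeqref{magicwin} with $\vp = \ve_k$ and restricting the average over $\vr,\vc$ to the event $\{r_k = r,\ c_k = c\}$ --- which has probability $1/9$ and so costs a factor of at most $9$ --- I get $\E_{\vr_{-k},\vc_{-k}}\bra{\psi} A^{\vc}_{\vr,\ve_k} B^{\vr}_{\vc,\ve_k}\ket{\psi} \geq 1 - 9\eps$, exactly as in the opening lines of the proof of \lemref{com_between_rounds}.

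Next I would rewrite both dilated observables in a form that makes the computation transparent. The observable $\tA^{c}_{r,k}$ acts only on Alice's Hilbert space together with Alice's $k$th ancilla register, and $\tB^{r}_{c,k}$ acts only on Bob's Hilbert space together with Bob's $k$th ancilla register, so the two operators commute and $\tA^{c}_{r,k}\tB^{r}_{c,k}$ is again an observable. Inserting a resolution of the identity on the other player's $k$th ancilla register and using Remark~\ref{rem:ignore_output} (the column vector appearing inside $\tA^{c}_{r,k}$ may be taken to be any vector with $k$th coordinate $c$, and symmetrically for $\tB^{r}_{c,k}$), I can write both operators as sums over $\vr_{-k},\vc_{-k}\in\{0,1,2\}^{n-1}$ that are \emph{simultaneously} diagonal in Alice's and Bob's $k$th ancilla registers, with block $(\vr_{-k},\vc_{-k})$ carrying the operator $A^{\vc}_{\vr,\ve_k}$ (resp.\ $B^{\vr}_{\vc,\ve_k}$) for $\vr = (r;\vr_{-k})$, $\vc = (c;\vc_{-k})$. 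Since the $k$th ancilla registers of $\ket{\psi'}$ are the uniform superpositions $3^{-(n-1)/2}\sum_{\vr_{-k}}\ket{\vr_{-k}}$ and $3^{-(n-1)/2}\sum_{\vc_{-k}}\ket{\vc_{-k}}$, and all other ancilla registers are untouched, plugging these forms into the overlap collapses all cross terms and gives $\bra{\psi'}\tA^{c}_{r,k}\tB^{r}_{c,k}\ket{\psi'} = \E_{\vr_{-k},\vc_{-k}}\bra{\psi}A^{\vc}_{\vr,\ve_k}B^{\vr}_{\vc,\ve_k}\ket{\psi} \geq 1 - 9\eps$.

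Finally, since $\tA^{c}_{r,k}$ and $\tB^{r}_{c,k}$ square to the identity, $d_{\psi'}(\tA^{c}_{r,k},\tB^{r}_{c,k})^2 = \|(\tA^{c}_{r,k}-\tB^{r}_{c,k})\ket{\psi'}\|^2 = 2 - 2\bra{\psi'}\tA^{c}_{r,k}\tB^{r}_{c,k}\ket{\psi'} \leq 18\eps$, which gives the claimed $O(\sqrt{\eps})$ bound; the statement for Bob's operators (or with the two sides interchanged) follows identically. I do not anticipate a real obstacle here: the content of the lemma is essentially that \myeqref{magicwin} already bounds the overlap once one sees that the two dilated observables live on disjoint tensor factors. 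The only point needing care is the bookkeeping in the previous paragraph --- making sure, via Remark~\ref{rem:ignore_output}, that the ``ignored'' coordinates of the input row and column vectors inside $\tA^{c}_{r,k}$ and $\tB^{r}_{c,k}$ can be chosen consistently with one another and with the ancilla registers the operators are diagonal in, so that their product yields exactly the terms $A^{\vc}_{\vr,\ve_k}B^{\vr}_{\vc,\ve_k}$ of \myeqref{magicwin} with a single shared $\vr$ and single shared $\vc$.
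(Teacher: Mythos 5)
Your proposal is correct and follows essentially the same route as the paper's own proof: both hinge on the same key trick of using the freedom in the choice of column vector inside $\tA^{c}_{r,k}$ (via Remark~\ref{rem:ignore_output}) to make it agree with Bob's $k$th ancilla register (and symmetrically for $\tB^{r}_{c,k}$), so that the state-dependent distance reduces to $\E_{\vr_{-k},\vc_{-k}} d_\psi(A^{\vc}_{\vr,\ve_k}, B^{\vr}_{\vc,\ve_k})^2$, which is then bounded by the round-$k$ win probability. The only cosmetic difference is that you expand the squared distance by hand via $d^2 = 2 - 2\langle \tA\tB\rangle$ whereas the paper invokes Lemma~\ref{lem:coherent_average} for the same computation.
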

\begin{proof}
  In the argument below, let $\vec{r}$ be the row vectors agreeing
  with $r$ on index $k$ and $\vec{r}_{-k}$ on the remaining indices;
  likewise for $\vec{c}$ (note that $\vec{r}_{-k}$ is stored in
  Alice's register and and $\vec{c}_{-k}$ in Bob's). The main trick is
  to use the
  freedom of choice of $\vec{c}$ on Alice's operators to pick
  $\vec{c}$ agreeing with Bob's ancilla register $\vec{c}_{-k}$.
\begin{align*}
  d_{\psi'}(\tA^c_{r,k} , \tB^{r}_{c,k})^2 
  &= \| \frac{1}{3^{n-1}}\sum_{\vec{r}_{-k}, \vc_{-k}}
    A^{\vec{c}}_{\vr, \ve_k} \ket{\psi}_{AB}
    \ot \ket{\vec{r}_{-k}}^A_{k} \ot
    \ket{\vec{c}_{-k}}^B_k -  \\
  &\qquad\qquad \frac{1}{3^{n-1}}\sum_{\vec{r}_{-k}, \vc_{-k}}
    B^{\vec{r}}_{\vc, \ve_k} \ket{\psi}_{AB}
    \ot \ket{\vec{r}_{-k}}^A_{k} \ot
    \ket{\vec{c}_{-k}}^B_k\|^2 \\
  \intertext{By \lemref{coherent_average} with $i = (\vec{r}_{-k},
  \vc_{-k})$,}
  &= \E_{\vec{r}_{-k}, \vc_{-k}} d_{\psi'}(A^{\vc}_{\vr, k},
    B^{\vr}_{\vc, k})^2 \\
  \intertext{This is bounded by the probability that round $k$ of the
  test succeeds with inputs $r$ and $c$}
  &\leq O(\eps).
\end{align*}
\end{proof}

\begin{lemma} \label{lem:movetatotb2}
$\forall \vec r, \vec c, \vec p$ and $\forall i \in [n]$ 

$$\left  | \bra{\psi'} (\prod_{k=n}^{i+1} (\tB^{r_k}_{c_k, k})^{p_k}) A_{\vec r, \vec{p}}^{\vc}(\prod_{k=1}^i \tA^{c_k}_{r_k, k})\ket{\psi'}  - \bra{\psi'} (\prod_{k=n}^i (\tB^{r_k}_{c_k, k})^{p_k}) A_{\vec r, \vec{p}}^{\vc}(\prod_{k=1}^{i-1} \tA^{c_k}_{r_k, k})\ket{\psi'}\right | \leq O(\sqrt{\eps})  $$
\end{lemma}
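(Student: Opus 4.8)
The plan is to read this lemma as a single ``transfer'' step: the two quantities differ only in that the round‑$i$ factor appears either as $\tA^{c_i}_{r_i,i}$ at the innermost slot of Alice's product, or, after crossing the anchor $A^{\vc}_{\vr,\vp}$, as $\tB^{r_i}_{c_i,i}$ at the innermost slot of Bob's product. (Here the $\tA^{c_k}_{r_k,k}$ are understood to carry the exponents $p_k$, as the surrounding argument requires; when $p_i=0$ the two sides are literally equal, so assume $p_i=1$.) The key structural observation is that $\tB^{r_i}_{c_i,i}$ acts only on Bob's Hilbert space and Bob's $i$‑th ancilla register, while $A^{\vc}_{\vr,\vp}$ and every $\tA^{c_k}_{r_k,k}$ act only on Alice's side, so the transfer can be carried out with exact commutation once the single ``swap'' $\tA^{c_i}_{r_i,i}\leftrightarrow\tB^{r_i}_{c_i,i}$ has been performed against the state.

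Concretely, I would first write the first expression as $\bra{\psi'}\,X\,\tA^{c_i}_{r_i,i}\ket{\psi'}$, where $X:=\bigl(\prod_{k=n}^{i+1}(\tB^{r_k}_{c_k,k})^{p_k}\bigr)A^{\vc}_{\vr,\vp}\bigl(\prod_{k=1}^{i-1}(\tA^{c_k}_{r_k,k})^{p_k}\bigr)$ is a product of $\pm1$‑observables and hence has operator norm at most $1$, so $\|X^\dagger\ket{\psi'}\|\le 1$. Cauchy--Schwarz together with \lemref{switchtatb} then gives
\[
  \bigl|\bra{\psi'}X\tA^{c_i}_{r_i,i}\ket{\psi'}-\bra{\psi'}X\tB^{r_i}_{c_i,i}\ket{\psi'}\bigr|\le\bigl\|(\tA^{c_i}_{r_i,i}-\tB^{r_i}_{c_i,i})\ket{\psi'}\bigr\|=d_{\psi'}(\tA^{c_i}_{r_i,i},\tB^{r_i}_{c_i,i})\le O(\sqrt{\eps}).
\]
Finally I would commute $\tB^{r_i}_{c_i,i}$ leftward through $X$ --- which is exact, since it only meets operators on Alice's side --- so that it sits just left of $A^{\vc}_{\vr,\vp}$; since $\prod_{k=n}^{i+1}(\tB^{r_k}_{c_k,k})^{p_k}$ is ordered by decreasing $k$, adjoining $\tB^{r_i}_{c_i,i}$ at its right end produces exactly $\prod_{k=n}^{i}(\tB^{r_k}_{c_k,k})^{p_k}$, i.e.\ the second expression. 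Combining the two displays finishes the proof.

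The only thing one must be careful about is the order of operations: the swap $\tA^{c_i}_{r_i,i}\to\tB^{r_i}_{c_i,i}$ has to be made at the state \emph{before} any commuting. A naive approach that first drags $\tA^{c_i}_{r_i,i}$ leftward past $\tA^{c_1}_{r_1,1},\dots,\tA^{c_{i-1}}_{r_{i-1},i-1}$ would invoke \lemref{com_between_rounds} up to $i-1$ times and degrade the bound to $O(n\sqrt{\eps})$; doing the swap first leaves only cross‑party commutations, which are exact, so the whole step costs a single $O(\sqrt{\eps})$. This is really the only potential pitfall; there is no serious obstacle.
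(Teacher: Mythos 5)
Your proof is correct and matches the paper's argument: both make a single state-level swap $\tA^{c_i}_{r_i,i}\to\tB^{r_i}_{c_i,i}$ via \lemref{switchtatb} (costing $O(\sqrt{\eps})$), then commute $\tB^{r_i}_{c_i,i}$ exactly leftward past the remaining Alice operators to reassemble the second expression. Your direct Cauchy--Schwarz packaging is a slightly cleaner version of the paper's add-and-subtract plus triangle-inequality step, and your remark about the implicit $p_k$ exponents on the $\tA$ factors (and the trivial $p_i=0$ case) correctly patches a small typo in the lemma's statement.
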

\begin{proof}
 Fixing $\vec r, \vec c, \vec p$, and fixing  $i \in [n]$ we have
  
  \begin{align*}
  &\left  | \bra{\psi'} (\prod_{k=n}^{i+1} (\tB^{r_k}_{c_k, k})^{p_k}) A_{\vec r, \vec{p}}^{\vc}(\prod_{k=1}^i \tA^{c_k}_{r_k, k})\ket{\psi'}  - \bra{\psi'} (\prod_{k=n}^i (\tB^{r_k}_{c_k, k})^{p_k}) A_{\vec r, \vec{p}}^{\vc}(\prod_{k=1}^{i-1} \tA^{c_k}_{r_k, k})\ket{\psi'}\right | \\
  &  = \left  | \bra{\psi'} (\prod_{k=n}^{i+1} (\tB^{r_k}_{c_k, k})^{p_k}) A_{\vec r, \vec{p}}^{\vc}(\prod_{k=1}^{i-1} \tA^{c_k}_{r_k, k}) \tB^{r_i}_{c_i, i}\ket{\psi'}   +  \bra{\psi'} (\prod_{k=n}^{i+1} (\tB^{r_k}_{c_k, k})^{p_k}) A_{\vec r,\vec{p}}^{\vc}(\prod_{k=1}^{i-1} \tA^{c_k}_{r_k, k}) \left (\tA^{r_i}_{c_i,i} - \tB^{r_i}_{c_i,i} \right )\ket{\psi'} \right . \\
  &\qquad \left . - \bra{\psi'} (\prod_{k=n}^i (\tB^{r_k}_{c_k, k})^{p_k}) A_{\vec r, \vec{p}}^{\vc}(\prod_{k=1}^{i-1} \tA^{c_k}_{r_k, k})\ket{\psi'}\right | \\
  & \leq \left  | \bra{\psi'} (\prod_{k=n}^{i+1} (\tB^{r_k}_{c_k, k})^{p_k}) A_{\vec r, \vec{p}}^{\vc}(\prod_{k=1}^{i-1} \tA^{c_k}_{r_k, k}) \tB^{r_i}_{c_i, i}\ket{\psi'}  - \bra{\psi'} (\prod_{k=n}^i (\tB^{r_k}_{c_k, k})^{p_k}) A_{\vec r, \vec{p}}^{\vc}(\prod_{k=1}^{i-1} \tA^{c_k}_{r_k, k})\ket{\psi'}\right | \\
  & \qquad + \left | \bra{\psi'} (\prod_{k=n}^{i+1} (\tB^{r_k}_{c_k, k})^{p_k}) A_{\vec r, \vec{p}}^{\vc}(\prod_{k=1}^{i-1} \tA^{c_k}_{r_k, k}) \left (\tA^{r_i}_{c_i,i} - \tB^{r_i}_{c_i,i} \right )\ket{\psi'} \right | \\
  & \leq \left  | \bra{\psi'} (\prod_{k=n}^{i+1} (\tB^{r_k}_{c_k, k})^{p_k}) \tB^{r_i}_{c_i, i} A_{\vec r, \vec{p}}^{\vc}(\prod_{k=1}^{i-1} \tA^{c_k}_{r_k, k}) \tB^{r_i}_{c_i, i}\ket{\psi'}  - \bra{\psi'} (\prod_{k=n}^i (\tB^{r_k}_{c_k, k})^{p_k}) A_{\vec r, \vec{p}}^{\vc}(\prod_{k=1}^{i-1} \tA^{c_k}_{r_k, k})\ket{\psi'}\right | \\
  & \qquad +  d_{\psi'}(\tA^{c_i}_{r_i,i}, \tB^{r_i}_{c_i,i}) \\
  &\leq 0 + O(\sqrt{\eps}) = O(\sqrt{\eps}).
  \end{align*}
  Here the last inequality uses Lemma \ref{lem:switchtatb}, and the second to last inequality uses that $\tB^{r_i}_{c_i,i}$ commutes with all Alice operators, and that $(\prod_{k=n}^{i+1} (\tB^{r_k}_{c_k, k})^{p_k}) A_{\vec r, \vec{p}}^{\vc}(\prod_{k=1}^{i-1} \tA^{c_k}_{r_k, k})$ is a unitary, so that 
  
  $$\left | \bra{\psi'} (\prod_{k=n}^{i+1} (\tB^{r_k}_{c_k, k})^{p_k}) A_{\vec r, \vec{p}}^{\vc}(\prod_{k=1}^{i-1} \tA^{c_k}_{r_k, k}) \left (\tA^{r_i}_{c_i,i} - \tB^{r_i}_{c_i,i} \right )\ket{\psi'} \right | \leq \|\left  (\tA^{r_i}_{c_i,i} - \tB^{r_i}_{c_i,i} \right )\ket{\psi'}  \| = d_{\psi'}(\tA^{c_i}_{r_i,i}, \tB^{r_i}_{c_i,i}).$$
\end{proof}

\begin{lemma} \label{lem:longindstep}
\begin{align}
    &  \Big| \E_{\vec r } \Big(\bra{\psi'} (\prod_{k=n}^{i+2}
      (\tB^{r_k}_{c_k, k})^{p_k}) (\prod_{k=n}^{i+1} A^{\vec c}_{\vec
      r, p_k \cdot \vec{e}_k} \ot I)( \tA^{c_{i+1}}_{r_{i+1},
      i+1})\ket{\psi'}  \nonumber \\
  &\qquad\qquad - \bra{\psi'} (\prod_{k=n}^{i+1} (\tB^{r_k}_{c_k, k})^{p_k}) (\prod_{k=n}^i A^{\vec c}_{\vec r, p_k \cdot \vec{e}_k}\ot I)( \tA^{c_i}_{r_i, i})\ket{\psi'} \Big) \Big| \leq O(\sqrt{\eps}) \\
    \intertext{and} 
    & \E_{\vec r } \left (1  - \bra{\psi'}   A^{\vec c}_{\vec r, p_n \cdot \vec{e}_n}( \tA^{c_n}_{r_n, n})\ket{\psi'} \right ) \leq  O(\sqrt{\eps}) \label{lastcase} 
    \end{align} 
\end{lemma}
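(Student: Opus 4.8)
The plan is to read the displayed inequality as the generic step of a telescoping sum. I would adopt the convention that every single-round operator below carries the exponent $p_k$ of its round index (collapsing to $I$ when $p_k=0$), matching the $\tB$-factors already written in the lemma, and that every single-round $A$-observable is silently tensored with identity on the ancilla registers. With this convention the first bracket in the lemma is exactly the second bracket with $i$ replaced by $i+1$, so, setting
\[
R_i \;:=\; \E_{\vr}\,\bra{\psi'}\Big(\prod_{k=n}^{i+1}(\tB^{r_k}_{c_k,k})^{p_k}\Big)\Big(\prod_{k=n}^{i}A^{\vc}_{\vr,p_k\vec{e}_k}\Big)(\tA^{c_i}_{r_i,i})^{p_i}\ket{\psi'},
\]
the lemma amounts to $|R_{i+1}-R_i|\le O(\sqrt\eps)$ together with $|1-R_n|\le O(\sqrt\eps)$, the latter being exactly \eqref{lastcase}. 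The three ingredients I would use are: a ``single-round consistency'' estimate that comes straight from \lemref{moveatoa}; that estimate at round $n$, which gives \eqref{lastcase}; and, for the telescoping step, converting the trailing $\tA$ to a $\tB$ via \lemref{switchtatb}, sliding that $\tB$ past the Alice operators at zero cost, merging it into the $\tB$-product, and absorbing the single leftover discrepancy with the first ingredient.

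First I would establish the single-round estimate: for every round $j$,
\[
\E_{\vr}\,d_{\psi'}\!\big(A^{\vc}_{\vr,p_j\vec{e}_j}\,(\tA^{c_j}_{r_j,j})^{p_j}\big)^2 \;\le\; O(\eps).
\]
This is trivial for $p_j=0$; for $p_j=1$ I would unfold $\tA^{c_j}_{r_j,j}=\sum_{\vr'_{-j}}A^{\vc}_{r_j\cup\vr'_{-j},\vec{e}_j}\ot\proj{\vr'_{-j}}$, use that Alice's $j$-th ancilla is a uniform superposition to get $\bra{\psi'}A^{\vc}_{\vr,\vec{e}_j}\tA^{c_j}_{r_j,j}\ket{\psi'}=\E_{\vr'_{-j}}\bra{\psi}A^{\vc}_{\vr,\vec{e}_j}A^{\vc}_{r_j\cup\vr'_{-j},\vec{e}_j}\ket{\psi}$, note $A^{\vc}_{\vr,\vec{e}_j}\tA^{c_j}_{r_j,j}$ is a product of observables (hence unitary) so $d_{\psi'}(\cdot)^2=2(1-\mathrm{Re}\bra{\psi'}\cdot\ket{\psi'})$, and average over $\vr=(r_j,\vr_{-j})$ and apply \lemref{moveatoa} with index $j$ at each fixed value of $r_j$, which bounds $\E_\vr(1-\mathrm{Re}(\cdot))$ by $36\eps$. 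Taking $j=n$ gives $1-\mathrm{Re}\,R_n=\tfrac12\E_\vr d_{\psi'}(\cdot)^2\le O(\eps)$, which is \eqref{lastcase}.

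For the telescoping step I would write $L_i$ for the first bracket-expectation (so $L_i=R_{i+1}$). Its rightmost operator on $\ket{\psi'}$ is $(\tA^{c_{i+1}}_{r_{i+1},i+1})^{p_{i+1}}$, with a unitary to its left, so by \lemref{switchtatb} replacing it by $(\tB^{r_{i+1}}_{c_{i+1},i+1})^{p_{i+1}}$ changes the value by at most $d_{\psi'}(\tA^{c_{i+1}}_{r_{i+1},i+1},\tB^{r_{i+1}}_{c_{i+1},i+1})\le O(\sqrt\eps)$ uniformly in $\vr$; being a Bob operator it then commutes \emph{exactly} with every $A^{\vc}_{\vr,p_k\vec{e}_k}$, so I would slide it left past the whole $A$-product and merge it into the $\tB$-product as its index-$(i+1)$ factor. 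This turns $L_i$, up to $O(\sqrt\eps)$, into $L_i':=\E_\vr\bra{\psi'}U_\vr\ket{\psi'}$ with $U_\vr:=\big(\prod_{k=n}^{i+1}(\tB^{r_k}_{c_k,k})^{p_k}\big)\big(\prod_{k=n}^{i+1}A^{\vc}_{\vr,p_k\vec{e}_k}\big)$ a unitary. Peeling the last factor off the $A$-product in $R_i$ shows $R_i=\E_\vr\bra{\psi'}U_\vr\big(A^{\vc}_{\vr,p_i\vec{e}_i}(\tA^{c_i}_{r_i,i})^{p_i}\big)\ket{\psi'}$ with the \emph{same} $U_\vr$, whence
\[
|L_i'-R_i| = \Big|\E_\vr\bra{\psi'}U_\vr\big(I-A^{\vc}_{\vr,p_i\vec{e}_i}(\tA^{c_i}_{r_i,i})^{p_i}\big)\ket{\psi'}\Big| \;\le\; \E_\vr d_{\psi'}\!\big(A^{\vc}_{\vr,p_i\vec{e}_i}(\tA^{c_i}_{r_i,i})^{p_i}\big) \;\le\; \sqrt{\E_\vr d_{\psi'}(\cdot)^2} \;\le\; O(\sqrt\eps),
\]
using $\|U_\vr^\dagger\ket{\psi'}\|=1$, Cauchy--Schwarz, and the single-round estimate at $j=i$. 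Combining, $|L_i-R_i|\le|L_i-L_i'|+|L_i'-R_i|\le O(\sqrt\eps)$.

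The fussiest point will be the operator bookkeeping in the telescoping step: I must make sure that no two single-round observables on the same party are ever commuted past each other, since each such swap would cost another $O(\sqrt\eps)$ (via \lemref{com_between_rounds}) and $\Omega(n)$ of them would ruin the per-step bound. That is exactly what dictates the route above --- the freshly produced $\tB^{r_{i+1}}_{c_{i+1},i+1}$ only ever travels past Alice operators, for which commutation is exact, and the $\tB$-product itself is never reordered. A second, harmless subtlety is to track real parts rather than raw matrix elements, since $\bra{\psi'}A^{\vc}_{\vr,\vec{e}_j}\tA^{c_j}_{r_j,j}\ket{\psi'}$ need not be real while \lemref{moveatoa} bounds its full modulus.
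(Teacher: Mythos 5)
Your proposal is correct and takes essentially the same route as the paper's proof: both subtract-and-add a $\tB^{r_{i+1}}_{c_{i+1},i+1}$ term (bounded via \lemref{switchtatb}), slide it at zero cost past the Alice operators to merge into the $\tB$-product, and bound the residual $\E_{\vr}\, d_{\psi'}\!\big(I,\,A^{\vc}_{\vr,p_i\vec{e}_i}\,\tA^{c_i}_{r_i,i}\big)$ by expanding the ancilla, invoking \lemref{moveatoa}, and applying Jensen. Your version is marginally cleaner in consistently carrying the $p_k$ exponents (which the paper's own margin note flags as incomplete) and in isolating the single-round estimate as a reusable sub-claim, but the decomposition and the two key lemmas used are identical.
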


\begin{proof}
\begin{align*}
    &  \left | \E_{\vec r } \left (\bra{\psi'} (\prod_{k=n}^{i+2} \tB^{r_k}_{c_k, k}) (\prod_{k=n}^{i+1} A^{\vec c}_{\vec r, p_k \cdot \vec{e}_k} \ot I)( \tA^{c_{i+1}}_{r_{i+1}, i+1})\ket{\psi'}  - \bra{\psi'} (\prod_{k=n}^{i+1} \tB^{r_k}_{c_k, k}) (\prod_{k=n}^i A^{\vec c}_{\vec r, p_k \cdot \vec{e}_k}\ot I)( \tA^{c_i}_{r_i, i})\ket{\psi'} \right ) \right | \\
    &  =  \left | \E_{\vec r } \left (\bra{\psi'} (\prod_{k=n}^{i+2} \tB^{r_k}_{c_k, k}) (\prod_{k=n}^{i+1} A^{\vec c}_{\vec r, p_k \cdot \vec{e}_k} \ot I)( \tA^{c_{i+1}}_{r_{i+1}, i+1})\ket{\psi'} - \bra{\psi'} (\prod_{k=n}^{i+2} \tB^{r_k}_{c_k, k}) (\prod_{k=n}^{i+1} A^{\vec c}_{\vec r, p_k \cdot \vec{e}_k} \ot I)( \tB^{r_{i+1}}_{c_{i+1}, i+1})\ket{\psi'} \right .  \right . \\
    & \left . \left . + \bra{\psi'} (\prod_{k=n}^{i+2} \tB^{r_k}_{c_k, k}) (\prod_{k=n}^{i+1} A^{\vec c}_{\vec r, p_k \cdot \vec{e}_k} \ot I)( \tB^{r_{i+1}}_{c_{i+1}, i+1})\ket{\psi'}  - \bra{\psi'} (\prod_{k=n}^{i+1} \tB^{r_k}_{c_k, k}) (\prod_{k=n}^i A^{\vec c}_{\vec r, p_k \cdot \vec{e}_k}\ot I)( \tA^{c_i}_{r_i, i})\ket{\psi'} \right ) \right | \\
    & \leq   \left | \E_{\vec r } \left (\bra{\psi'} (\prod_{k=n}^{i+2} \tB^{r_k}_{c_k, k}) (\prod_{k=n}^{i+1} A^{\vec c}_{\vec r, p_k \cdot \vec{e}_k} \ot I)( \tA^{c_{i+1}}_{r_{i+1}, i+1})\ket{\psi'} - \bra{\psi'} (\prod_{k=n}^{i+2} \tB^{r_k}_{c_k, k}) (\prod_{k=n}^{i+1} A^{\vec c}_{\vec r, p_k \cdot \vec{e}_k} \ot I)( \tB^{r_{i+1}}_{c_{i+1}, i+1})\ket{\psi'} \right )  \right | \\
    & + \left | \E_{\vec r } \left (  \bra{\psi'} (\prod_{k=n}^{i+2} \tB^{r_k}_{c_k, k}) (\prod_{k=n}^{i+1} A^{\vec c}_{\vec r, p_k \cdot \vec{e}_k} \ot I)( \tB^{r_{i+1}}_{c_{i+1}, i+1})\ket{\psi'}  - \bra{\psi'} (\prod_{k=n}^{i+1} \tB^{r_k}_{c_k, k}) (\prod_{k=n}^i A^{\vec c}_{\vec r, p_k \cdot \vec{e}_k}\ot I)( \tA^{c_i}_{r_i, i})\ket{\psi'} \right ) \right | \\
    & \leq \left | \E_{\vec r } \left (  d_{\psi'}(\tA^{c_{i+1}}_{r_{i+1}, i+1}, \tB^{r_{i+1}}_{c_{i+1}, i+1})  \right )  \right | \\
    & +  \left | \E_{\vec r } \left (  \bra{\psi'} (\prod_{k=n}^{i+1} \tB^{r_k}_{c_k, k}) (\prod_{k=n}^{i+1} A^{\vec c}_{\vec r, p_k \cdot \vec{e}_k} \ot I)\ket{\psi'}  - \bra{\psi'} (\prod_{k=n}^{i+1} \tB^{r_k}_{c_k, k}) (\prod_{k=n}^{i+1} A^{\vec c}_{\vec r, p_k \cdot \vec{e}_k}\ot I) \cdot A^{\vec c}_{\vec r, p_i \cdot \vec{e}_i}\ot I \cdot ( \tA^{c_i}_{r_i, i})\ket{\psi'} \right ) \right | \\
    & \leq \left | \E_{\vec r } \left (  d_{\psi'}(\tA^{c_{i+1}}_{r_{i+1}, i+1}, \tB^{r_{i+1}}_{c_{i+1}, i+1})  \right )  \right | +  \left | \E_{\vec r } \left ( d_{\psi'}(I, (A^{\vec c}_{\vec r, p_i \cdot \vec{e}_i}\ot I) \cdot  \tA^{c_i}_{r_i, i})  \right ) \right | \\
    \end{align*}
   
Where the second to last inequality uses the fact that $\| \bra{\psi'} (\prod_{k=n}^{i+2} \tB^{r_k}_{c_k, k}) (\prod_{k=n}^{i+1} A^{\vec c}_{\vec r, p_k \cdot \vec{e}_k} \ot I) \| = 1$, and the third inequality uses that fact that $\|\bra{\psi'} (\prod_{k=n}^{i+1} \tB^{r_k}_{c_k, k}) (\prod_{k=n}^{i+1} A^{\vec c}_{\vec r, p_k \cdot \vec{e}_k}\ot I)\| = 1$.  Now, applying Lemma \ref{lem:switchtatb}, we have

\begin{align}
    &  \left | \E_{\vec r } \left (\bra{\psi'} (\prod_{k=n}^{i+2} \tB^{r_k}_{c_k, k}) (\prod_{k=n}^{i+1} A^{\vec c}_{\vec r, p_k \cdot \vec{e}_k} \ot I)( \tA^{c_{i+1}}_{r_{i+1}, i+1})\ket{\psi'}  - \bra{\psi'} (\prod_{k=n}^{i+1} \tB^{r_k}_{c_k, k}) (\prod_{k=n}^i A^{\vec c}_{\vec r, p_k \cdot \vec{e}_k}\ot I)( \tA^{c_i}_{r_i, i})\ket{\psi'} \right ) \right | \nonumber \\
    & \leq  \E_{\vec r } \left (  d_{\psi'}(\tA^{c_{i+1}}_{r_{i+1}, i+1}, \tB^{r_{i+1}}_{c_{i+1}, i+1})  \right )   +   \E_{\vec r } \left ( d_{\psi'}(I, (A^{\vec c}_{\vec r, p_i \cdot \vec{e}_i}\ot I) \cdot  \tA^{c_i}_{r_i, i})  \right ) \nonumber \\
    & \leq O(\sqrt{\eps}) +  \E_{\vec r } \left ( d_{\psi'}(I, (A^{\vec c}_{\vec r, p_i \cdot \vec{e}_i}\ot I) \cdot  \tA^{c_i}_{r_i, i})  \right ) \label{eq:jump1} \\
\end{align}

And we note that

    \begin{align*}   
    &  \E_{\vec r } \left ( d_{\psi'}(I, (A^{\vec c}_{\vec r, p_i \cdot \vec{e}_i}\ot I) \cdot  \tA^{c_i}_{r_i, i})^2  \right )  \\
    & =  \E_{\vec r } \left ( \| \ket{\psi'} -  (A^{\vec c}_{\vec r, p_i \cdot \vec{e}_i}\ot I) \cdot  \tA^{c_i}_{r_i, i} \ket{\psi'} \|^2 \right )  \\
    & = \E_{\vec r } \left ( 2 - 2 \bra{\psi'} (A^{\vec c}_{\vec r, p_i \cdot \vec{e}_i}\ot I) \cdot  \tA^{c_i}_{r_i, i} \ket{\psi'}  \right ) \\
    &  = \E_{\vec r } \left ( 2 - 2 \left (\bra{\psi}\ot \frac{1}{\sqrt{3^{n - 1}}} \sum_{\vec{r}_{-k} \in \{0,1,2\}^{n-1}}
                            \bra{\vec{r}_{-k}}\right ) (A^{\vec c}_{\vec r, p_i \cdot \vec{e}_i}\ot I) \cdot  \left (  \sum_{\vec{r'}:  r'_{i} = r_i} A_{\vec r' , \vec{e}_i}^{\vec c} \ot
                      \proj{\vec{r'}_{-i}} \right )  \times ...  \right . \\
     & \left . ... \times \left (\ket{\psi}\ot \frac{1}{\sqrt{3^{n - 1}}} \sum_{\vec{r}_{-k} \in \{0,1,2\}^{n-1}}
                        \ket{\vec{r}_{-k}}\right )  \right ) \\
     &  = \E_{\vec r } \left ( 2 - 2 \cdot  \frac{1}{3^{n - 1}}   \sum_{\vec{r'}:  r'_{i} = r_i} \bra{\psi} A^{\vec c}_{\vec r, p_i \cdot \vec{e}_i} \cdot A_{\vec r' , \vec{e}_i}^{\vec c} \ket{\psi} \cdot  \bra{\vec{r'}_{-i}}\proj{\vec{r'}_{-i}} \ket{\vec{r'}_{-i}}  \right ) \\
     &= \E_{\vec r } \left ( 2 - 2 \cdot  \E_{\vec{r'}:  r'_{i} = r_i}   \bra{\psi} A^{\vec c}_{\vec r, p_i \cdot \vec{e}_i} \cdot A_{\vec r' , \vec{e}_i}^{\vec c} \ket{\psi}  \right )  = 2 \left ( 1 -  \E_{\vec r, \vec{r'}:  r'_{i} = r_i} \bra{\psi} A^{\vec c}_{\vec r, p_i \cdot \vec{e}_i} \cdot A_{\vec r' , \vec{e}_i}^{\vec c} \ket{\psi} \right ) \\
     & \leq 2 \cdot 3 \cdot 36 \eps \numberthis \label{prejensen} 
    \end{align*}

Where the last inequality follows from Lemma \ref{lem:moveatoa}.   Furthermore, by Jensen's inequality it follows that:

\begin{align*}   
&  \E_{\vec r } \left ( d_{\psi'}(I, (A^{\vec c}_{\vec r, p_i \cdot \vec{e}_i}\ot I) \cdot  \tA^{c_i}_{r_i, i})  \right ) \leq \sqrt{\E_{\vec r } \left ( d_{\psi'}(I, (A^{\vec c}_{\vec r, p_i \cdot \vec{e}_i}\ot I) \cdot  \tA^{c_i}_{r_i, i})^2  \right )}  \leq  O(\sqrt{\eps})
\end{align*}

Now, resuming the calculation in equation \eqref{eq:jump1}, we have that  

\begin{align*}
    &  \left | \E_{\vec r } \left (\bra{\psi'} (\prod_{k=n}^{i+2} \tB^{r_k}_{c_k, k}) (\prod_{k=n}^{i+1} A^{\vec c}_{\vec r, p_k \cdot \vec{e}_k} \ot I)( \tA^{c_{i+1}}_{r_{i+1}, i+1})\ket{\psi'}  - \bra{\psi'} (\prod_{k=n}^{i+1} \tB^{r_k}_{c_k, k}) (\prod_{k=n}^i A^{\vec c}_{\vec r, p_k \cdot \vec{e}_k}\ot I)( \tA^{c_i}_{r_i, i})\ket{\psi'} \right ) \right | \nonumber \\
    & \leq O(\sqrt{\eps}) +  \E_{\vec r } \left ( d_{\psi'}(I, (A^{\vec c}_{\vec r, p_i \cdot \vec{e}_i}\ot I) \cdot  \tA^{c_i}_{r_i, i})  \right ) \leq O(\sqrt{\eps})
    \end{align*}

\mnote{We should try to get this with $\sqrt{\eps}$ instead of $\eps^{1/4}$, since the latter is worse than Andrea's result!!! }

Finally, note that, since Equation \ref{prejensen} is valid for every $i$, it follows by the same calculation, with $i = n$, that:

\[ \left |  \E_{\vec r } \left (1  - \bra{\psi'}   A^{\vec c}_{\vec r, p_n \cdot \vec{e}_n}( \tA^{c_n}_{r_n, n})\ket{\psi'} \right ) \right | \leq  O(\eps) \leq O(\sqrt{\eps}) \]

\end{proof}

\begin{lemma}\label{lem:consistency_alice}
\[ \forall \vec c, \vec p, \quad \E_{\vec r } d_{\psi'} \left(A^{\vec{c}}_{\vec{r}, \vec p} \ot I, \prod_{k=1}^n (\tA^{c_k}_{r_k, k})^{p_k} \right )^2 \leq O(n \sqrt{\eps})  \]

\mnote{Need to account for the value of $\vec{p}$ throughout this Lemma and all sub-lemmas that it cites!!!} 

The analogous statement also holds for Bob operators \mnote{Should we leave this sentence this way?}
\end{lemma}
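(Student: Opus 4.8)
The plan is to first remove the squared distance, then prove the resulting inner-product estimate by a hybrid argument that rewrites $\prod_{k}(\tA^{c_k}_{r_k,k})^{p_k}$ into the global observable $A^{\vec c}_{\vec r,\vec p}$ one round at a time. For the reduction, note $A^{\vec c}_{\vec r,\vec p}\ot I$ and each $\tA^{c_k}_{r_k,k}$ are genuine observables (Hermitian, squaring to $I$), so $\prod_k(\tA^{c_k}_{r_k,k})^{p_k}$ is unitary and
\[
d_{\psi'}\!\left(A^{\vec c}_{\vec r,\vec p}\ot I,\ \prod_{k=1}^n(\tA^{c_k}_{r_k,k})^{p_k}\right)^2
= 2 - 2\,\Re\bra{\psi'}\big(A^{\vec c}_{\vec r,\vec p}\ot I\big)\prod_{k=1}^n(\tA^{c_k}_{r_k,k})^{p_k}\ket{\psi'}.
\]
Averaging over $\vr$, it therefore suffices to show $\big|1 - \E_{\vr}\bra{\psi'}(A^{\vec c}_{\vec r,\vec p}\ot I)\prod_{k}(\tA^{c_k}_{r_k,k})^{p_k}\ket{\psi'}\big| \le O(n\sqrt{\eps})$.

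\textbf{Factoring the global observable.} I would then use that $A^{\vec c}_{\vec r,\vec p'} = \sum_{\vec a_0,\vec a_1}(-1)^{\vec a_{\vec c}\cdot\vec p'}P^{\vec a_0,\vec a_1}_{\vec r}$ is a function of the single projective measurement $\{P^{\vec a_0,\vec a_1}_{\vec r}\}$, so $A^{\vec c}_{\vec r,\vec p} = \prod_{k=1}^n A^{\vec c}_{\vec r,p_k\vec e_k}$ with pairwise-commuting factors, and the factor for any round with $p_k=0$ equals $I$. Hence only rounds in the support of $\vec p$ contribute, and the hybrid should be indexed by that support; the final bound will in fact read $O(|\vec p|\sqrt\eps)\le O(n\sqrt\eps)$.

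\textbf{The telescoping.} The heart of the argument is a chain of hybrids interpolating between $\prod_k(\tA^{c_k}_{r_k,k})^{p_k}\ket{\psi'}$ and $A^{\vec c}_{\vec r,\vec p}\ket{\psi'}$, which is exactly what \lemref{longindstep} (with \lemref{movetatotb2} handling the passage across to Bob's side) carries out step by step. Each step peels off one round $i$ by: (a) swapping the round-$i$ Alice observable $\tA^{c_i}_{r_i,i}$ for the round-$i$ Bob observable $\tB^{r_i}_{c_i,i}$ via \lemref{switchtatb}, which moves it onto Bob's space where it commutes with every Alice operator; (b) absorbing the matching global factor $A^{\vec c}_{\vec r,p_i\vec e_i}$ using \lemref{moveatoa}, which states that a near-optimal strategy is insensitive, up to $O(\eps)$, to the value of the ignored input rows $\vr_{-i}$ --- precisely the discrepancy between $A^{\vec c}_{\vec r,p_i\vec e_i}$ (with $\vr_{-i}$ fixed) and $\tA^{c_i}_{r_i,i}$ (which coherently averages over $\vr_{-i}$ in its ancilla register), so that $d_{\psi'}(I,(A^{\vec c}_{\vec r,p_i\vec e_i}\ot I)\,\tA^{c_i}_{r_i,i})^2\le O(\eps)$ after Jensen; (c) re-ordering single-round operators of different rounds using the approximate commutation of \lemref{com_between_rounds} (and its Bob analogue). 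Each step costs $O(\sqrt\eps)$, there are at most $n$ of them, and the boundary case --- the single-round estimate established inside \lemref{longindstep} together with the endpoint of \lemref{movetatotb2} --- closes the chain, so the telescoped total is $O(n\sqrt\eps)$. The Bob statement is obtained verbatim after exchanging $A\leftrightarrow B$, $\tA\leftrightarrow\tB$, and $\vr\leftrightarrow\vc$.

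\textbf{Main obstacle.} The delicate point is controlling how error accumulates across the $n$ hybrid steps: each step incurs error both from the $\tA\leftrightarrow\tB$ swap and from re-ordering operators that commute only approximately, and the bookkeeping must keep this $O(\sqrt\eps)$ per round --- not degrading as rounds are consumed --- so the total stays $O(n\sqrt\eps)$ rather than, say, $O(n\sqrt\eps)$ times an extra round-dependent factor. One must also verify that invoking \lemref{moveatoa} inside the hybrid, where the relevant distribution over the ignored coordinates is a conditional $\E_{\vec c\mid k}$ rather than the fully uniform one, costs only a constant factor, and that the coherent average over $\vr_{-i}$ built into $\tA^{c_i}_{r_i,i}$ genuinely matches the fixed-$\vr$ global operator (via the same coherent-averaging identity used in \lemref{switchtatb}). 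Finally the $\vec p$-dependence must be tracked throughout, treating the $p_k=0$ rounds as trivial identities.
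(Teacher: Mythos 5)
Your proposal captures the paper's proof structure essentially exactly: expand $d_{\psi'}^2 = 2 - 2\langle\psi'|(A^{\vec c}_{\vec r,\vec p}\ot I)\prod_k(\tA^{c_k}_{r_k,k})^{p_k}|\psi'\rangle$, factor $A^{\vec c}_{\vec r,\vec p} = \prod_k A^{\vec c}_{\vec r,p_k\vec e_k}$ into commuting single-round pieces, then telescope using exactly the two lemmas you name (\lemref{movetatotb2} for the $\tA\to\tB$ swap, \lemref{longindstep} for absorbing the global single-round factors, with \lemref{switchtatb} and \lemref{moveatoa} doing the work underneath), accumulating $O(\sqrt\eps)$ per round.

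One small inaccuracy worth flagging: your step (c) invokes the \emph{approximate} inter-round commutation from \lemref{com_between_rounds}, but the paper's hybrid is carefully designed so that this is never needed for the consistency lemma. After a $\tA^{c_i}_{r_i,i}\to\tB^{r_i}_{c_i,i}$ swap the operator sits on Bob's tensor factor and commutes \emph{exactly} with every Alice operator, and the global factors $A^{\vec c}_{\vec r,p_ie_i}$ commute exactly among themselves because they are functions of the same projective measurement $\{P^{\vec a_0,\vec a_1}_{\vec r}\}$; the Alice $\tA$'s and Bob $\tB$'s are always kept in a fixed order so no Alice--Alice or Bob--Bob reordering occurs. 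This matters for the bookkeeping concern you raise: if one did reorder via \lemref{com_between_rounds} at $O(\sqrt\eps)$ per swap, with $O(n)$ swaps per hybrid step one would land at $O(n^2\sqrt\eps)$ rather than the claimed $O(n\sqrt\eps)$. Dropping step (c) entirely gives you the paper's argument.
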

\begin{proof}
For simplicity of notation, throughout this proof, we will denote $A^{\vec{c}}_{\vec{r}} \ot I$ simply by $A^{\vec{c}}_{\vec{r}}$.  Start by noting that we have the following exact property:
  \[ A_{\vec r, \vec p}^{\vc} A_{\vec r, \vec{p}'}^{\vc} = A_{\vec{r},
    \vec{p} + \vec{p'}}^{\vc}. \]
  As a consequence, we may decompose each observable $A_{\vec r,
    \vp}^{\vc}$ into a product of \emph{single-round} observables
    
     \[ A_{\vec r, \vec{p}}^{\vc} = A_{\vr, p_1}^{\vc} \dots A_{\vr,
        p_k}^{\vc}. \]
        
  So, fixing any value of $\vec c$ , and  $\vec p$, we have 
  
  \begin{align*}
  & \E_{\vec r } d_{\psi'} \left(A_{\vec r, \vec{p}}^{\vc} , \prod_{k=1}^n (\tA^{c_k}_{r_k, k})^{p_k} \right )^2 \\
  &= \E_{\vec r } \left ( \bra{\psi'}A_{\vec r, \vec{p}}^{\vc \dagger} A_{\vec r, \vec{p}}^{\vc} \ket{\psi'} + \bra{\psi'} (\prod_{k=1}^n (\tA^{c_k}_{r_k, k})^{p_k})^{\dagger}(\prod_{k=1}^n (\tA^{c_k}_{r_k, k})^{p_k}) \ket{\psi'}  - \bra{\psi'}A_{\vec r, \vec{p}}^{\vc \dagger}(\prod_{k=1}^n (\tA^{c_k}_{r_k, k})^{p_k})\ket{\psi'} \right . \\
  & \left . - \bra{\psi'} (\prod_{k=1}^n (\tA^{c_k}_{r_k, k})^{p_k})^{\dagger} A_{\vec r, \vec{p}}^{\vc} \ket{\psi'} \right ) = 2 \E_{\vec r } \left (1  - \bra{\psi'}A_{\vec r, \vec{p}}^{\vc}(\prod_{k=1}^n (\tA^{c_k}_{r_k, k})^{p_k})\ket{\psi'}  \right )
  \end{align*} 
  Where, in the second equality we are using the fact that $A_{\vec r, \vec{p}}^{\vc}$ is Hermitian to get that \[\bra{\psi'}A_{\vec r, \vec{p}}^{\vc}(\prod_{k=1}^n (\tA^{c_k}_{r_k, k})^{p_k})\ket{\psi'}  =   \bra{\psi'}A_{\vec r, \vec{p}}^{\vc \dagger}(\prod_{k=1}^n (\tA^{c_k}_{r_k, k})^{p_k})\ket{\psi'} = \bra{\psi'} (\prod_{k=1}^n (\tA^{c_k}_{r_k, k})^{p_k})^{\dagger} A_{\vec r, \vec{p}}^{\vc} \ket{\psi'}.\]  Continuing, we have
  \begin{align*}
    & \E_{\vec r } d_{\psi'} \left(A_{\vec r, \vec{p}}^{\vc} , \prod_{k=1}^n (\tA^{c_k}_{r_k, k})^{p_k} \right )^2 \\
    &\qquad = 2 \E_{\vec r } \left (1  - \bra{\psi'}A_{\vec r, \vec{p}}^{\vc}(\prod_{k=1}^n (\tA^{c_k}_{r_k, k})^{p_k})\ket{\psi'}  \right ) \\
    &\qquad= 2 \E_{\vec r } \left (1  - \bra{\psi'} (\prod_{k=n}^2 (\tB^{r_k}_{c_k, k})^{p_k}) A_{\vec r, \vec{p}}^{\vc}( \tA^{c_1}_{r_1, 1})\ket{\psi'} \right . \\
    &\qquad\qquad \left. - \sum_{i= n}^1 \left ( \bra{\psi'} (\prod_{k=n}^{i+1} (\tB^{r_k}_{c_k, k})^{p_k}) A_{\vec r, \vec{p}}^{\vc}(\prod_{k=1}^i (\tA^{c_k}_{r_k, k})^{p_k})\ket{\psi'}  - \bra{\psi'} (\prod_{k=n}^i (\tB^{r_k}_{c_k, k})^{p_k}) A_{\vec r, \vec{p}}^{\vc}(\prod_{k=1}^{i-1} (\tA^{c_k}_{r_k, k})^{p_k})\ket{\psi'} \right)  \right ) \\
    &\qquad \leq  2 \E_{\vec r } \left (1  - \bra{\psi'} (\prod_{k=n}^2 (\tB^{r_k}_{c_k, k})^{p_k}) A_{\vec r, \vec{p}}^{\vc}( \tA^{c_1}_{r_1, 1})\ket{\psi'} \right ) \\
    &\qquad\qquad + \sum_{i= n}^1 2 \E_{\vec r } \left( \left | \bra{\psi'} (\prod_{k=n}^{i+1} (\tB^{r_k}_{c_k, k})^{p_k} ) A_{\vec r, \vec{p}}^{\vc}(\prod_{k=1}^i (\tA^{c_k}_{r_k, k})^{p_k})\ket{\psi'}  - \bra{\psi'} (\prod_{k=n}^i (\tB^{r_k}_{c_k, k})^{p_k}) A_{\vec r, \vec{p}}^{\vc}(\prod_{k=1}^{i-1} (\tA^{c_k}_{r_k, k})^{p_k})\ket{\psi'} \right|  \right ).\\
    \intertext{We now apply Lemma \ref{lem:movetatotb2} inside the expectation:}
    &\qquad \leq  2 \E_{\vec r } \left (1  - \bra{\psi'} (\prod_{k=n}^2 (\tB^{r_k}_{c_k, k})^{p_k}) A_{\vec r, \vec{p}}^{\vc}( \tA^{c_1}_{r_1, 1})\ket{\psi'} \right ) +  \sum_{i= 1}^n 2 \cdot O(\sqrt{\eps}) \\
    & \qquad =   2 \E_{\vec r } \left (1  - \bra{\psi'} (\prod_{k=n}^2 (\tB^{r_k}_{c_k, k})^{p_k}) A_{\vec r, \vec{p}}^{\vc}( \tA^{c_1}_{r_1, 1})\ket{\psi'} \right ) +   O(n \sqrt{\eps}) \\
    &\qquad = 2  \E_{\vec r } \left (1  - \bra{\psi'} (\prod_{k=n}^2 (\tB^{r_k}_{c_k, k})^{p_k}) (\prod_{k=n}^1 A^{\vec c}_{\vec r, p_k \cdot \vec{e}_k})( \tA^{c_1}_{r_1, 1})\ket{\psi'} \right ) +   O(n \sqrt{\eps}) \\
    &\qquad \leq  2 \left |  \E_{\vec r } \left (1  - \bra{\psi'}   A^{\vec c}_{\vec r, p_n \cdot \vec{e}_n}( \tA^{c_n}_{r_n, n})\ket{\psi'} \right ) \right | + 2 \sum_{i= n-1}^1 \Big|  \E_{\vec r }
      \Big(\bra{\psi'} (\prod_{k=n}^{i+2} (\tB^{r_k}_{c_k, k})^{p_k})
      (\prod_{k=n}^{i+1} A^{\vec c}_{\vec r, p_k \cdot \vec{e}_k})(
      \tA^{c_{i+1}}_{r_{i+1}, i+1})\ket{\psi'} \\
    &\qquad\qquad - \bra{\psi'} (\prod_{k=n}^{i+1} \tB^{r_k}_{c_k, k}) (\prod_{k=n}^i A^{\vec c}_{\vec r, p_k \cdot \vec{e}_k})( \tA^{c_i}_{r_i, i})\ket{\psi'} \Big) \Big|  +   O(n \sqrt{\eps})   \\     
    & \leq 2 \cdot O( \sqrt{\eps}) + 2(n-1) O(\sqrt{\eps})  +   O(n \sqrt{\eps}) = O(n \sqrt{\eps}) 
    \end{align*}
       
       \mnote{Need to finish putting in the $()^{p_k}$ corrections here!}
       
  Where the last inequality follows by Lemma \ref{lem:longindstep}.
 
\end{proof}

\subsection{The Isometry}
\label{sec:isometry}
\begin{definition}
  Define the \emph{single round} ``approximate Pauli'' operators on
  Alice's space by:
  \begin{align*}
    X_{2k-1}&= \tA_{1,k}^1 \\
    X_{2k} &= \tA_{1,k}^0 \\
    Z_{2k-1} &= \tA_{0,k}^0 \\
    Z_{2k} &= \tA_{0,k}^1.
  \end{align*}
  Likewise define the single round approximate Pauli operators on
  Bob's space by
  \begin{align*}
    X^B_{2k-1}&= \tB_{1,k}^1 \\
    X^B_{2k} &= \tB_{1,k}^0 \\
    Z^B_{2k-1} &= \tB_{0,k}^0 \\
    Z^B_{2k} &= \tB_{0,k}^1.
  \end{align*}
\end{definition}
\begin{lemma}[Approximate single-round Pauli relations]
  Suppose Alice and Bob share an entangled strategy that wins with
  probability $1 - \eps$. Then the single-round Pauli operators as
  defined above satisfy the following relations:
  \begin{equation}
  \begin{aligned}
    &\forall i, \quad d_\psi(X_i, X^B_i) &\leq \sqrt{\eps} \\
    &\forall i, \quad d_\psi(Z_i, Z^B_i) &\leq \sqrt{\eps} \\
    &\forall i, \quad d_\psi(X_iZ_i , -Z_i X_i) &\leq \sqrt{\eps} \\
    &\forall i \neq j, \quad d_\psi(X_iX_j, X_j X_i) &\leq \sqrt{\eps} \\
    &\forall i \neq j, \quad d_\psi(Z_iZ_j, Z_jZ_i) &\leq \sqrt{\eps}.
  \end{aligned}
  \label{eq:approx_single_pauli}
  \end{equation}
\end{lemma}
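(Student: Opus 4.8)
The plan is to observe that \eqref{eq:approx_single_pauli} requires no new computation: each of its five lines is a direct specialization of one of the three lemmas already established above, obtained by translating the approximate-Pauli labels back into the $(r,c,k)$-indexing of the dilated single-round observables $\tA^c_{r,k}$ and $\tB^r_{c,k}$. Throughout, the state-dependent distances are taken with respect to the dilated state $\ket{\psi'}$.

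First I would handle the two Alice--Bob agreement relations $d_\psi(X_i, X_i^B) \leq \sqrt{\eps}$ and $d_\psi(Z_i, Z_i^B) \leq \sqrt{\eps}$. By construction each $X_i$ (resp.\ $Z_i$) equals some $\tA^c_{r,k}$, and the corresponding $X_i^B$ (resp.\ $Z_i^B$) is the Bob operator $\tB^r_{c,k}$ attached to the same round $k$ and the same cell $(r,c)$ of the magic square; hence each of these inequalities is exactly an instance of \lemref{switchtatb}.

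Next, for the single-qubit anticommutation $d_\psi(X_iZ_i, -Z_iX_i) \leq \sqrt{\eps}$: for $i = 2k-1$ this is $d_\psi(\tA^1_{1,k}\tA^0_{0,k}, -\tA^0_{0,k}\tA^1_{1,k})$, and for $i = 2k$ it is $d_\psi(\tA^0_{1,k}\tA^1_{0,k}, -\tA^1_{0,k}\tA^0_{1,k})$. In both cases the two row labels differ and the two column labels differ, so $f(r,r',c,c') = 1$, and the bound is precisely \lemref{single_round_phase}. For the commutation relations with $i \neq j$ I would split into two cases. If $i$ and $j$ lie in the same round $k$, i.e.\ $\{i,j\} = \{2k-1,2k\}$, then the two $X$-operators $\tA^1_{1,k},\tA^0_{1,k}$ share the row label $r=1$ and the two $Z$-operators $\tA^0_{0,k},\tA^1_{0,k}$ share the row label $r=0$, so $f = 0$ and \lemref{single_round_phase} again supplies the bound. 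If $i$ and $j$ lie in distinct rounds $k \neq k'$, the bound is immediately \lemref{com_between_rounds}. Assembling the $O(\sqrt{\eps})$ estimates from all cases (absorbing constants into $\sqrt{\eps}$) yields \eqref{eq:approx_single_pauli}.

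There is no genuine obstacle here: the substance is entirely contained in Lemmas~\ref{lem:single_round_phase}, \ref{lem:com_between_rounds}, and \ref{lem:switchtatb}. The only point requiring care is the index bookkeeping --- one must check that the $(r,c)$-labels attached to $X_{2k-1}, X_{2k}, Z_{2k-1}, Z_{2k}$ (chosen so that the indices $2k-1$ and $2k$ play the roles of the first and second qubit of round $k$, mirroring the ideal single-round strategy of Figure~\ref{fig:1roundideal}) produce $f = 1$ for each in-round $X$--$Z$ pair and $f = 0$ for each in-round $X$--$X$ or $Z$--$Z$ pair, and that the Bob operators $X_i^B, Z_i^B$ are indexed so as to be the consistency partners of $X_i, Z_i$ under \lemref{switchtatb}.
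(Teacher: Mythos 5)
Correct, and essentially the paper's own route: the paper dispatches the two consistency lines via \lemref{switchtatb} and the remaining three via \thmref{approx_phase}, whose relevant guarantees~\eqref{eq:magic_anticom_alice} are in turn exactly \lemref{single_round_phase} (in-round, with sign $(-1)^{f(r,r',c,c')}$) and \lemref{com_between_rounds} (across rounds) --- you have simply unwound that one level of indirection and verified the $(r,c)$-bookkeeping explicitly. One caveat worth flagging: your reading of $X^B_i, Z^B_i$ as the cell-$(r,c)$ partners of $X_i, Z_i$ is the intended one, but the paper's displayed definitions of $X^B_{2k}$ and $Z^B_{2k}$ appear to have their row and column labels swapped relative to $X_{2k}, Z_{2k}$ (one gives the cell $(0,1)$ operator and the other $(1,0)$, the reverse of the Alice side); taken literally \lemref{switchtatb} would not apply to those pairs, so your interpretation is the correct fix.
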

\begin{proof}
  The consistency relations follow from \lemref{switchtatb}. The other
  relations come from \thmref{approx_phase}.
\end{proof}
We will now build up multi-round Paulis from products of these.
\begin{lemma}[Approximate Pauli relations]
  Suppose $X_i$, $Z_i$ are observables on Alice and $X^B_i, Z^B_i$ are
  observables on Bob indexed by $i \in [n]$ satisfying
  \myeqref{approx_single_pauli}.
  Let $X^{\va} := \prod_{i=1}^{n} X_i^{a_i}$ and $Z^{\vb} :=
  \prod_{i=1}^{n} Z_i^{b_i}$, and likewise let $(X^B)^{\va} :=
  \prod_{i=n}^{1} (X^B_i)^{a_i}$ and $(Z^B)^{\vb} := \prod_{i=n}^{1} (Z^B_i)^{b_i}$. Then
  \begin{align}
    \forall \va, \vb, \va', \vb', \quad d_\psi((X^{\va}Z^{\vb})(X^{\va'}Z^{\vb'}), (-1)^{\va' \cdot
    \vb} X^{\va + \va'}
    Z^{\vb + \vb'} ) &\leq O(n^2\sqrt{\eps})     \label{eq:approx_pauli_phase}\\
    \forall \va, \vb,  \quad d_\psi((X^{\va}Z^{\vb}), (Z^B)^{\vb} (X^B)^{\va})
                     &\leq
                       O(n\sqrt{\eps}) \label{eq:approx_pauli_con} .
  \end{align}
  \label{lem:approx_pauli}
\end{lemma}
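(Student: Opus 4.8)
The plan is to reduce everything to the single-round relations \myeqref{approx_single_pauli} using two elementary moves. First, left-multiplication by any product of the observables $X_i,Z_i,X^B_i,Z^B_i$ leaves $d_\psi$ unchanged, since such a product is unitary; this is the only invariance of $d_\psi$ we have — it is \emph{not} invariant under right-multiplication, and controlling the operator context sitting to the right of whatever is being manipulated is the delicate point throughout. Second, Alice's operators commute \emph{exactly} with Bob's, since they act on distinct tensor factors.

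The workhorse is a \emph{transfer estimate}: $d_\psi(X^{\va},(X^B)^{\va})\le n\sqrt\eps$ and $d_\psi(Z^{\vb},(Z^B)^{\vb})\le n\sqrt\eps$. I would prove this by induction on the number of nontrivial factors: peel the rightmost factor off $X^{\va}=X_1^{a_1}\cdots X_n^{a_n}$ (the remaining prefix is unitary and rides along for free), replace $X_n^{a_n}$ by $(X^B_n)^{a_n}$ at cost $\le\sqrt\eps$ using $d_\psi(X_n,X^B_n)\le\sqrt\eps$, commute the new Bob factor all the way to the left past the remaining Alice operators (exact), and apply the inductive hypothesis to the prefix; similarly for $Z$. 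Granting this, \myeqref{approx_pauli_con} is immediate: from $X^{\va}Z^{\vb}\ket{\psi}$, replace $Z^{\vb}$ by $(Z^B)^{\vb}$ (cost $\le n\sqrt\eps$, prefix $X^{\va}$ unitary), commute the Bob block left past $X^{\va}$ (exact), then replace $X^{\va}$ by $(X^B)^{\va}$ (cost $\le n\sqrt\eps$, prefix $(Z^B)^{\vb}$ unitary), landing on $(Z^B)^{\vb}(X^B)^{\va}\ket{\psi}$, for a total of $O(n\sqrt\eps)$.

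For \myeqref{approx_pauli_phase} I would first establish a \emph{block-pass estimate}: for any index $j$, observable $M_j\in\{X_j,Z_j\}$, and product $W$ of the $X_i$'s (resp.\ the $Z_i$'s) in increasing order of $i$ having a definite commutation sign $(-1)^s$ against $M_j$ (so $s$ counts the matching-index anticommutations), $d_\psi(M_jW,(-1)^sWM_j)\le O(n\sqrt\eps)$, and its mirror $d_\psi(WM_j,(-1)^sM_jW)\le O(n\sqrt\eps)$. This is the step that actually produces signs. One transfers the entire tail of $W$ to Bob once (cost $O(n\sqrt\eps)$), then slides $M_j$ through $W$ from left to right, at each elementary step withdrawing exactly one factor back from Bob to Alice so that the relevant single-round (anti)commutation relation of \myeqref{approx_single_pauli} applies directly to $\ket{\psi}$ while the rest of the tail stays parked on Bob's side, where it commutes exactly with every Alice operator in play; each elementary step then costs $O(\sqrt\eps)$, so the whole block-pass costs $O(n\sqrt\eps)$. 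Now \myeqref{approx_pauli_phase} follows by a bubble sort of $(X^{\va}Z^{\vb})(X^{\va'}Z^{\vb'})\ket{\psi}$: slide each of the $\le n$ factors of $X^{\va'}$ leftward, first past the whole block $Z^{\vb}$ — this is where the global sign $(-1)^{\va'\cdot\vb}$ accumulates, one $-1$ per index on which both $\va'$ and $\vb$ are supported — and then into its slot inside $X^{\va}$, where $X_i^2=I$ reduces the exponent mod $2$; finally slide each factor of $Z^{\vb'}$ into its slot inside $Z^{\vb}$. Before each block-pass one parks the still-unsorted remainder of the expression (e.g.\ the trailing $Z^{\vb'}$, and later the relocated factors) on Bob's side so that the block-pass is boundary-aligned with $\ket\psi$. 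There are $O(n)$ block-passes, each costing $O(n\sqrt\eps)$, giving the claimed $O(n^2\sqrt\eps)$; the Bob-side relations are obtained symmetrically.

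The main obstacle is precisely this context management. A naive ``transfer, swap, transfer back'' at each of the $\Theta(n)$ elementary swaps inside a block-pass would cost $\Theta(n\sqrt\eps)$ per swap, hence $\Theta(n^2\sqrt\eps)$ per block-pass, and would degrade the final bound to $O(n^3\sqrt\eps)$. Keeping it at $O(n^2\sqrt\eps)$ requires the amortized implementation — park the whole tail on Bob and withdraw it one factor at a time — together with care that the outer context of each block-pass is likewise parked on Bob beforehand; this bookkeeping is the only genuinely fiddly part of the argument, and everything else is triangle inequalities and the two invariances above.
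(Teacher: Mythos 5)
Your proposal is correct and follows essentially the same route as the paper: your ``transfer estimate'' is the paper's \lemref{con2}, your ``block-pass estimate'' is \lemref{switch3}, and the bubble sort you describe (park the tail on Bob, withdraw one factor at a time, commute, re-park) is exactly the content of \lemref{switchmany}, which the paper then invokes inside a six-term triangle-inequality chain to prove \eqref{eq:approx_pauli_phase}. One shared loose end worth flagging: both your argument and the paper's use $d_\psi(X_iZ_j,Z_jX_i)\le O(\sqrt{\eps})$ for $i\neq j$ (needed when sliding an $X$-factor past the $Z^{\vb}$ block), which is derivable from \thmref{approx_phase} but is not actually among the hypotheses listed in \eqref{eq:approx_single_pauli} — so this is a gap in the stated hypotheses of the lemma rather than in your proof.
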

\begin{proof}
  \myeqref{approx_pauli_con} is an immediate consequence of
  \lemref{con2}. We obtain \myeqref{approx_pauli_phase} in two
  steps. First, by \myeqref{switchmany} of \lemref{switchmany}, we
  have that 
  \[ d_\psi(X^{\va} Z^{\vb}, (-1)^{\va \cd \vb} Z^{\vb} X^{\va}) \leq O(n^2 \sqrt{\eps}). \]
  Further, by \myeqref{riffle} of \lemref{switchmany} we have that
  \begin{align*}
    d_\psi(X^\va X^{\va'}, X^{\va + \va'}) &\leq O(n^2\sqrt{\eps}) \\
    d_\psi(Z^\vb Z^{\vb'}, Z^{\vb + \vb'}) &\leq O(n^2\sqrt{\eps}). \\
  \end{align*}
  Hence, 
  \begin{align*}
    d_\psi(X^{\va} Z^{\vb} X^{\va'} Z^{\vb'}, (-1)^{\va' \cd \vb}
    X^{\va + \va'} Z^{\vb + \vb'}) &\leq d_\psi(X^{\va} Z^{\vb}
                                     X^{\va'} Z^{\vb'}, (Z^B)^{\vb'}
                                     X^{\va} Z^{\vb} X^{\va'}) \\
    &\qquad + d_\psi((Z^B)^{\vb'} X^\va Z^{\vb} X^{\va'}, (-1)^{\va'
      \cd \vb} (Z^B)^{\vb'}X^\va
      X^{\va'} Z^{\vb}) \\
    &\qquad + d_\psi((-1)^{\va' \cd \vb} (Z^B)^{\vb'} X^\va X^{\va'}
      Z^{\vb}, (-1)^{\va' \cd \vb} (Z^B)^{\vb'} (Z^B)^\vb X^{\va }
      X^{\va'}) \\ 
    &\qquad +d_\psi((-1)^{\va' \cd \vb} (Z^B)^{\vb'} (Z^B)^\vb X^{\va}
      X^{\va'},
      (-1)^{\va' \cd \vb} (Z^B)^{\vb'} (Z^B)^\vb X^{\va + \va'}) \\
    &\qquad + d_\psi((-1)^{\va' \cd \vb} (Z^B)^{\vb'} (Z^B)^\vb X^{\va + \va'},
      (-1)^{\va' \cd \vb} X^{\va + \va'} Z^\vb Z^{\vb'}) \\
    &\qquad + d_\psi((-1)^{\va' \cd \vb} X^{\va + \va'} Z^{\vb}
      Z^{\vb'},
      (-1)^{\va' \cd \vb} X^{\va + \va'} Z^{\vb + \vb'}) \\
    &\leq O(n^2\sqrt{\eps}).
  \end{align*}
  \anote{Add explanatory text.}
\end{proof}

% \begin{lemma}
%   For each $\va, \vb$, there exists $\vr(\va,\vb), \vc(\va,\vb), \vp(\va,\vb), \vr'(\va,\vb),
%   \vc'(\va,\vb), \vp'(\va,\vb)$
%   such that
%   \[ \E_{\va, \vb} d_{\psi}(\WA_{\va, \vb}, A^{\vc(\va,\vb)}_{\vr(\va,\vb),
%     \vp(\va,\vb)} A^{\vc'(\va,\vb)}_{\vr'(\va,\vb) \vp'(\va,\vb)})^2
%   \leq O(n^2\sqrt{\eps}), \]
%   where $\va$ and $\vb$ are drawn independently from the uniform
%   distribution. Moreover, under this distribution, the marginal
%   distribution of $\vr(\va,\vb), \vc(\va,\vb), \vp(\va,\vb), \vr'(\va,\vb),
%   \vc'(\va,\vb), \vp'(\va,\vb)$ is uniform.
%   \anote{Check distributions}
% \end{lemma}

\begin{proof}[Proof of \thmref{iso}]
  Let $\WA_{\va, \vb} := X^{\va} Z^{\vb}$
  and $\WB_{\va, vb} := (X^B)^{\va} (Z^B)^{\vb}$, and let $\HH$ be the
  provers' Hilbert space, together with the ancillas adjoined in \secref{single_round_observables}. Then we define the
  isometry $V: \HH \to \HH \ot \C^{2n} \ot \C^{2n} \ot \C^{2n} \ot \C^{2n}$ by
  \[ V(\ket{\psi}) = \frac{1}{2^{3n}}\sum_{\va,\vb,\vc}
  \sum_{\vd,\ve,\vf} (-1)^{\vb \cd (\va+\vc)} (-1)^{\ve\cd(\vd
    + \vf)} \WA_{\va,\vb} \ot \WB_{\vd,\ve} \ket{\psi}
  \ot \ket{\va+\vc, \vc} \ot \ket{\vd+\vf, \vf}. \]
  Here the second and the fourth register are the ``output register''
  of the isometry, and the third and fifth register are ``junk.'' This
  isometry was introduced by McKague~\cite{McK16}, and has an
  alternate description in terms of a circuit that ``swaps'' the input
  into the output register, which is initialized to be maximally
  entangled with the junk register.

  We now show the expectation value of any multi-qubit Pauli operator
  on the output of the isometry is close to the corresponding
  expectation value of approximate Paulis in the isometry input. In
  the equations below, $\ket{\phi} = V(\ket{\psi})$, the Paulis
  $\sigma_X^A, \sigma_Z^A$ act on output register 2, and $\sigma_X^B,
  \sigma_Z^B$ on output register 4.
  \begin{align*}
    \mathcal{P} &= \bra{\phi} \sigma_X^A(\vs) \sigma_Z^A(\vt)  \sigma_X^B(\vu)
                  \sigma_Z^B(\vv) \ket{\phi}\\
               &= \frac{1}{2^{6n}} \sum_{\va,\vb,\vc} \sum_{\va',
                 \vb', \vc'} \sum_{\vd, \ve,\vf} \sum_{\vd', \ve',\vf'} \Big(
                 \bra{\psi} \ot \bra{\va'+\vc', \vc'}
                 \ot\bra{\vd' + \vf', \vf'} \WAd_{\va', \vb'}
                 \ot \WBd_{\vd', \ve'} (-1)^{\vb' \cd (\va' + \vc')
                 + \ve' \cd (\vd' + \vf')}\\
               &\qquad\qquad \times \sigma_X^A(\vs) \sigma_Z^A(\vt)  \sigma_X^B(\vu)
                 \sigma_Z^B(\vv) (-1)^{\vb \cd (\va + \vc) + \ve \cd (\vd+\vf)} \WA_{\va,\vb} \ot \WB_{\vd,\ve} \ket{\psi} \ot
                 \ket{\va+\vc, \vc} \ot \ket{\vd+\vf, \vf} \Big)\\
               &= \frac{1}{2^{6n}} \sum_{\va,\vb,\vc} \sum_{\va',
                 \vb', \vc'} \sum_{\vd, \ve, \vf} \sum_{\vd' ,\ve', \vf'}
                 \Big(
                 \bra{\psi} \ot \bra{\va' + \vc', \vc'} \ot \bra{\vd' + \vf', \vf'} \WAd_{\va', \vb'}
                 \ot \WBd_{\vd', \ve'} (-1)^{\vb'\cd (\va' + \vc')}
                 (-1)^{\ve' \cd (\vd' + \vf')}\\
               &\qquad\qquad \times (-1)^{(\vb + \vt)(\va+\vc)} (-1)^{(\ve +
                 \vv)(\vd + \vf)}\WA_{\va,\vb} \ot \WB_{\vd,\ve} \ket{\psi} \ot \ket{\va+\vc + \vs, \vc} \ket{\vd + \vf +
                 \vu, \vf} \Big)\\
               &= \frac{1}{2^{6n}} \sum_{\va,\vb,\vb',\vc} \sum_{\vd,\ve,\ve',\vf} \Big(\bra{\psi}
                 \WAd_{\va + \vs,\vb'} \ot \WBd_{\vd+\vu,
                 \ve'}(-1)^{\vb' \cd (\va+\vs+\vc)}
                 (-1)^{\ve' \cd (\vd+\vu+\vf)} \\
               &\qquad\qquad \times (-1)^{(\vb+\vt)\cd (\va+\vc)}
                 (-1)^{(\ve+\vv) \cd (\vd+\vf)} \WA_{\va,\vb}
                 \ot \WB_{\vd,\ve} \ket{\psi} \Big). \\
    \intertext{Now we do the sum over $\vc$ and $\vf$ to force $\vb' = \vb+ \vt$ and
    $\ve' = \ve + \vv$:}
               &= \frac{1}{2^{4n}} \sum_{\va,\vb} \sum_{\vd, \ve}
                 \Big( (-1)^{(\vb+\vt) \cd \vs}
                 (-1)^{(\ve+\vv)\cd \vu} \bra{\psi}
                 \WAd_{\va+\vs,\vb+\vt} \WA_{\va,\vb} \ot  \WBd_{\vd+\vu, \ve+\vv} \WB_{\vd,\ve}
                 \ket{\psi} \Big). \\
    \intertext{ Finally, we apply Lemma~\ref{lem:approx_pauli} to
    merge the $\WA$ and $\WB$ operators, picking up an error of
    $O(n^2\sqrt{\eps})$ in the process.}
               &\approx_{O(n^2\sqrt{\eps})} \bra{\psi} \WA_{\vs,\vt} \WB_{\vu,\vv} \ket{\psi}.
  \end{align*}
\end{proof}
\begin{lemma}
  Let $M_n$ be the $4n$-qubit operator defined by
  \[ M_n = \left(\frac{1}{2} IIII + \frac{1}{18}( IXIX + XIXI + XXXX + ZIZI + IZIZ + ZZZZ + XZXZ +
  ZXZX + YYYY)\right)^{\ot n}.\ \]
  Then if a density matrix $\rho$ satisfies $\Tr[M_n \rho] \geq 1
  -\delta$, $\bra{\mathrm{EPR}}^{\ot 2n} \rho \ket{\mathrm{EPR}}^{\ot
    2n} \geq 1 - \frac{9}{4}\delta$.
  \label{lem:honest_measurement}
\end{lemma}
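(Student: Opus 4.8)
The plan is to reduce the statement to a spectral fact about the single-copy operator $M$ (the $4$-qubit operator such that $M_n = M^{\ot n}$) and then tensorize; the single-copy step we need is the operator inequality $M \preceq \tfrac{5}{9}I + \tfrac{4}{9}Q$, where $Q := \pEPR^{\ot 2}$ projects onto two EPR pairs, together with $M \succeq 0$. To get there I would first rewrite $M$ in a form that exposes $Q$. Writing $P^A \ot P^B$ for the operator that applies a two-qubit Pauli string $P$ to Alice's two qubits and the same string to Bob's two qubits, the nine non-identity terms in the definition of $M$ are exactly the correlators $P^A \ot P^B$ for $P$ running over $\{ZI, IZ, ZZ, IX, XI, XX, ZX, XZ, YY\}$, i.e.\ the correlators of the ideal magic square strategy of Figure~\ref{fig:1roundideal}. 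Using the expansion $16\,Q = \sum_{P \in \{I,X,Y,Z\}^{\ot 2}} (-1)^{\#Y(P)}\, P^A \ot P^B$, and observing that the strings with an even number of $Y$'s are precisely $I^{\ot 4}$ together with those nine strings, one gets
\[ M \;=\; \tfrac12 I + \tfrac{1}{18}\big(16\,Q - I + H\big) \;=\; \tfrac49 I + \tfrac89 Q + \tfrac1{18} H , \]
where $H := \sum_{P:\ \#Y(P) = 1} P^A \ot P^B$ is the sum of the six correlators whose Pauli string has exactly one $Y$.

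Next I would diagonalize $H$ in the Bell basis of the two EPR-pair slots: each $P^A \ot P^B$ factors across the two pairs, and $\sigma \ot \sigma$ is Bell-diagonal for every single-qubit Pauli $\sigma$, so $H = \Upsilon_1\Gamma_2 + \Gamma_1\Upsilon_2$ with $\Gamma := I\ot I + X\ot X + Z\ot Z$ and $\Upsilon := Y\ot Y$ on a single pair, both Bell-diagonal. A short computation of the Bell-basis eigenvalues of $\Gamma$ (namely $3,1,1,-1$) and of $\Upsilon$ (namely $-1,1,1,-1$) shows that $H$ acts as $-6$ on $\EPR^{\ot 2}$ and as $\pm 2$ on the other fifteen Bell-basis vectors; in particular $\|H\| = 6$ and the restriction of $H$ to $(\EPR^{\ot 2})^{\perp}$ has norm $\le 2$. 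Feeding this into the displayed formula for $M$ gives the single-copy facts: $M$ commutes with $Q$; $M\,\EPR^{\ot 2} = \EPR^{\ot 2}$; $M \succeq \tfrac19 I \succ 0$ (using $H \succeq -6 I$); and, since $M$ is block-diagonal with respect to $Q$ and $I-Q$, on $\operatorname{range}(Q)$ both $M$ and $Q + \tfrac59(I-Q)$ act as the identity while on $\operatorname{range}(I-Q)$ we have $M \preceq (\tfrac49 + \tfrac{2}{18})(I-Q) = \tfrac59(I-Q)$, so that $M \preceq Q + \tfrac59(I-Q) = \tfrac59 I + \tfrac49 Q$.

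To tensorize: since $0 \preceq M \preceq N := Q + \tfrac59(I-Q)$ with $M$, $N$, and $N^{\ot(n-1)}$ all positive, a one-line induction (split $M^{\ot n}\preceq M\ot N^{\ot(n-1)}\preceq N^{\ot n}$, using positivity of $M$ and then of $N^{\ot(n-1)}$) gives $M_n = M^{\ot n} \preceq N^{\ot n}$. Expanding $N^{\ot n} = \sum_{S \subseteq [n]} (\tfrac59)^{|S|}\,\Pi_S$, where $\Pi_S$ applies $I - Q$ on the slots in $S$ and $Q$ on the rest, the $\Pi_S$ are mutually orthogonal projectors summing to $I$ with $\Pi_{\emptyset} = Q^{\ot n}$; since $(\tfrac59)^{|S|} \le \tfrac59$ for every nonempty $S$, we get $N^{\ot n} \preceq Q^{\ot n} + \tfrac59(I - Q^{\ot n})$, hence $M_n \preceq \tfrac59 I + \tfrac49 Q^{\ot n}$ and therefore $I - Q^{\ot n} \preceq \tfrac94(I - M_n)$. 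Taking the trace against $\rho$ and using $Q^{\ot n} = \pEPR^{\ot 2n}$ under the natural identification of the $4n$ qubits yields $1 - \bra{\mathrm{EPR}}^{\ot 2n}\rho\ket{\mathrm{EPR}}^{\ot 2n} \le \tfrac94(1 - \Tr[M_n\rho]) \le \tfrac94\delta$, which is the claim.

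The main obstacle is the first half: finding the clean decomposition $M = \tfrac49 I + \tfrac89 Q + \tfrac1{18}H$ with $H$ Bell-diagonal, which pins the second-largest eigenvalue of $M$ at $\tfrac59$ and is exactly what produces the constant $\tfrac94 = (1 - \tfrac59)^{-1}$. It is only a $16 \times 16$ computation, so the difficulty is organizational rather than conceptual; one must also take care in the tensorization step to use the orthogonal decomposition $\{\Pi_S\}$ rather than the (much too lossy) naive tensoring of the inequality.
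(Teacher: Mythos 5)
Your proof is correct and its endgame --- establishing the operator inequality $M_n \preceq \pEPR^{\ot 2n} + \tfrac{5}{9}\bigl(I - \pEPR^{\ot 2n}\bigr)$ and then tracing against $\rho$ --- is exactly the paper's. The difference is one of detail. The paper simply asserts, as an observation, that the top eigenvalue of $M_1$ is $1$ with unique eigenvector $\EPR^{\ot 2}$, that all other eigenvalues have absolute value at most $5/9$, and that the corresponding facts then hold for $M_n$. You actually derive the single-copy spectrum via the Bell-diagonal decomposition $M = \tfrac{4}{9}I + \tfrac{8}{9}Q + \tfrac{1}{18}H$, pinning the eigenvalues of $M$ to $\{1, 5/9, 1/3\}$, and you carry out the tensorization explicitly through the positivity chain $0 \preceq M \preceq N \Rightarrow M^{\ot n}\preceq N^{\ot n}$ followed by the orthogonal expansion of $N^{\ot n}$ over the projectors $\Pi_S$. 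So your write-up is a fully worked-out version of the paper's sketch; it also makes plain the fact (needed but left implicit in the paper's eigenvalue-tensorization step) that $M_1 \succeq 0$, so that all eigenvalues of $M_1$ lie in $[0,5/9]\cup\{1\}$ and products behave as claimed.
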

\begin{proof}
  Observe that the highest eigenvalue of $M_1$ is $1$, with unique
  eigenvector $\EPR^{\ot 2}$. Moreover all other eigenvalues of $M_1$
  have absolute value at most $5/9$. Hence, the highest eigenvalue of $M_n$ is also $1$ with
  the unique eigenvector is $\EPR^{\ot 2n}$, and all other eigenvalues
  have absolute value at most $5/9$. Hence
  \[ M_n \leq \pEPR^{\ot 2n} + \frac{5}{9} (I - \pEPR^{\ot 2n}).\]
  So
  \begin{align*}
    1 - \delta &\leq \Tr[M_n \rho] \\
    &\leq \frac{4}{9} \Tr[\rho \pEPR^{\ot 2n}] + \frac{5}{9} \\
    \frac{4}{9} - \delta &\leq \frac{4}{9} \Tr[\rho \pEPR^{\ot 2n}] \\
    1 - \frac{9}{4} \delta &\leq \Tr[\rho \pEPR^{\ot 2n}].
  \end{align*}
\end{proof}

\begin{lemma}
  For every single round operator $\tA^{c}_{r, k}$, let $X^{\va}
  Z^{\vb}$ be
  approximate Pauli operator formed by taking the row-$r$, column-$c$
  entry in the Magic Square (\figref{1roundideal}), and converting $X$
  and $Z$ on the first and second qubits to the approximate Paulis on
  qubits $2k -1$ and $2k$, respectively. Then
  \[ d_\psi(\tA^{c}_{r,k}, X^{\va}Z^{\vb}) \leq O(\sqrt{\eps}). \]
  Likewise, for Bob,
  \[ d_\psi(\tB^{r}_{c,k}, (X^B)^{\va} (Z^B)^{\vb}) \leq
    O(\sqrt{\eps}). \]
  \anote{Is there a square root? Check with analogous manipulations in
    previous section.}
  \label{lem:tatopauli_single}
\end{lemma}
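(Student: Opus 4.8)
The plan is a finite case analysis over the nine entries of the Magic Square of \figref{1roundideal}, using three inputs that are all available: (i) the \emph{exact} within-round multiplicativity $\tA^2_{r,k}=\tA^0_{r,k}\tA^1_{r,k}$ on Alice's side and $\tB^2_{c,k}=-\tB^0_{c,k}\tB^1_{c,k}$ on Bob's (both hold identically, since the three observables being multiplied are simultaneously diagonalised by a common family of projectors, and the sign records the ``$+\vec 1$'' in the definition of $\vec b_2$); (ii) the consistency bound $d_{\psi'}(\tA^c_{r,k},\tB^r_{c,k})\le O(\sqrt\eps)$ of \lemref{switchtatb}, together with its single-qubit specialisations $d_{\psi'}(X_i,X^B_i),d_{\psi'}(Z_i,Z^B_i)\le\sqrt\eps$ from \myeqref{approx_single_pauli}; and (iii) the approximate single-round (anti)commutation relations of \thmref{approx_phase}. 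Because every product that appears is supported on the two qubits $\{2k-1,2k\}$ of round $k$, it has only a constant number of factors, so all error constants are absolute (no factor of $n$). For the four single-Pauli entries $ZI,IZ,XI,IX$ there is nothing to do: by definition $\tA^c_{r,k}$ already \emph{is} the asserted single-round Pauli $X_{2k-1},X_{2k},Z_{2k-1}$ or $Z_{2k}$. For $ZZ$ (row $0$) one has exactly $\tA^2_{0,k}=\tA^0_{0,k}\tA^1_{0,k}=Z_{2k-1}Z_{2k}=Z^{\vb}$; for $XX$ (row $1$), $\tA^2_{1,k}=\tA^0_{1,k}\tA^1_{1,k}=X_{2k}X_{2k-1}$, which is the target $X^{\va}=X_{2k-1}X_{2k}$ after one transposition of commuting single-qubit operators, costing $d_{\psi'}(X_{2k}X_{2k-1},X_{2k-1}X_{2k})\le\sqrt\eps$.

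The substantive entries are the three in the product row $r=2$. For $-ZX$ (column $0$): \lemref{switchtatb} gives $d_{\psi'}(\tA^0_{2,k},\tB^2_{0,k})\le O(\sqrt\eps)$, and within-round multiplicativity gives $\tB^2_{0,k}=-\tB^0_{0,k}\tB^1_{0,k}$ with $\tB^0_{0,k},\tB^1_{0,k}$ two of Bob's single-round observables (rows $0$ and $1$, column $0$), each of which is $O(\sqrt\eps)$-close, again by \lemref{switchtatb}, to the corresponding Alice single-round Pauli. Substituting these two factors one at a time \emph{starting with the factor adjacent to $\ket{\psi'}$}, and using that the other factor — being a Bob operator — commutes \emph{exactly} with the Alice operator one introduces, one gets the clean bound
\[
\bigl\|\bigl(\tB^0_{0,k}\tB^1_{0,k}-X_{2k}Z_{2k-1}\bigr)\ket{\psi'}\bigr\|\ \le\ \bigl\|(\tB^1_{0,k}-X_{2k})\ket{\psi'}\bigr\|+\bigl\|(\tB^0_{0,k}-Z_{2k-1})\ket{\psi'}\bigr\|\ \le\ O(\sqrt\eps).
\]
Hence $d_{\psi'}(\tA^0_{2,k},-X_{2k}Z_{2k-1})\le O(\sqrt\eps)$, and $-X_{2k}Z_{2k-1}$ is exactly the canonical word $-X^{\va}Z^{\vb}$ attached to $-ZX$. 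The $-XZ$ entry is identical with the roles of the two qubits exchanged.

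For $YY$ (row $2$, column $2$) I would again substitute from the state outward, exploiting the exact Alice/Bob commutation to slide Bob operators freely. Since $\tA^2_{2,k}\ket{\psi'}$ is $O(\sqrt\eps)$-close to $\tB^2_{2,k}\ket{\psi'}=-\tB^0_{2,k}\tB^1_{2,k}\ket{\psi'}$, first replace $\tB^1_{2,k}\ket{\psi'}$ by $\tA^2_{1,k}\ket{\psi'}=X_{2k}X_{2k-1}\ket{\psi'}$ (consistency, $O(\sqrt\eps)$); then — with the $X$'s still adjacent to the state — reorder to $X_{2k-1}X_{2k}\ket{\psi'}$ (one transposition, $\le\sqrt\eps$); then slide the Bob operator $\tB^0_{2,k}$ to the right of $X_{2k-1}X_{2k}$ (exact), and replace $\tB^0_{2,k}\ket{\psi'}$ by $\tA^2_{0,k}\ket{\psi'}=Z_{2k-1}Z_{2k}\ket{\psi'}$ (consistency, $O(\sqrt\eps)$). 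This yields $\tA^2_{2,k}\ket{\psi'}\approx -X_{2k-1}X_{2k}Z_{2k-1}Z_{2k}\ket{\psi'}$ with error $O(\sqrt\eps)$, and $-X_{2k-1}X_{2k}Z_{2k-1}Z_{2k}$ is precisely the canonical form $-X^{\va}Z^{\vb}$ obtained from $YY$ by writing $\sigma_Y=i\sigma_X\sigma_Z$ on each qubit and collecting the $X$'s before the $Z$'s. The entire argument is symmetric under exchanging Alice and Bob (with the roles of ``product row'' and ``second column'' interchanged), which gives the second asserted bound for the $\tB^{r}_{c,k}$.

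Nothing here is conceptually hard; the one point that requires care is the \emph{ordering of the substitutions}: each replacement of an operator by a close one must act on the factor adjacent to $\ket{\psi'}$, with the remaining factors pulled out as unitaries by the triangle inequality, and exact Alice/Bob commutation must be used to park Bob's operators where they do no harm. The $YY$ case is the only one where this sequencing is not completely automatic, and the trick there — performing the single needed reorder of Alice's $X$-operators at the moment they are adjacent to the state, before the $Z$-operators are introduced — is what keeps the proof elementary and avoids any appeal to the multi-qubit reordering machinery behind \lemref{approx_pauli}, which would have cost $O(n^{2}\sqrt\eps)$.
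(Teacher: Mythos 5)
Your proof is correct and follows essentially the same route as the paper's: for rows $r\in\{0,1\}$ use the exact within-round multiplicativity $\tA^2_{r,k}=\tA^0_{r,k}\tA^1_{r,k}$, and for row $r=2$ pass to Bob via the consistency bound of \lemref{switchtatb}, use Bob's exact multiplicativity $\tB^2_{c,k}=-\tB^0_{c,k}\tB^1_{c,k}$, and switch each factor back. Your write-up is considerably more careful than the paper's one-line argument about the ordering of substitutions and the final reordering into canonical $X^{\va}Z^{\vb}$ form (which the paper leaves implicit, stopping at $-\tA^c_{0,k}\tA^c_{1,k}$), but the underlying idea is identical.
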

\begin{proof}
  First consider Alice. Then the conclusion follows by definition of
  the approximate Paulis for $r \in  \{0,1\}$. When $r = 2$, use the
  fact that $d_\psi(\tA^{c}_{2,k}, \tB^{2}_{c,k}) \leq
  O(\sqrt{\eps})$. \anote{Cite}
  By definition, $\tB^{2}_{ck} = -\tB^{1}_{ck} \tB^0_{ck}$. Each of
  these two operators can be switched back to Alice, to yield
  \[ d_\psi(\tA^{c}_{2,k}, -\tA^c_{0k} \tA^c_{1k}) \leq
  O(\sqrt{\eps}). \]
  This establishes the result for single round operators. 
  For the Bob, we follow the same argument, interchanging the role of
  the row and column indices.
\end{proof}
\begin{lemma}
  For every product of single-round operators $\prod_{k=1}^{n}
  (\tA^{c_k}_{r_k, k})^{p_k}$, let $X^{\va} Z^{\vb}$ be the
  approximate Pauli operator formed by applying the procedure of
  \lemref{tatopauli_single} to each single-round operator. Then 
  \[ d_\psi(\prod_{k=1}^{n} (\tA^{c_k}_{r_k,k})^{p_k}, X^{\va} Z^{\vb}) \leq O(n\sqrt{\eps}). \]
  The analogous statement holds for $B$.
  \label{lem:tatopauli}
\end{lemma}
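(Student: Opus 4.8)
The plan is to prove this by the same kind of round‑by‑round telescoping used in the proof of \lemref{consistency_alice}, but now comparing the product of measured single‑round observables directly against the product of single‑round approximate Paulis, rather than against $A^{\vc}_{\vr,\vp}$. For each $k$ write $\widetilde A_k := (\tA^{c_k}_{r_k,k})^{p_k}$ and let $R_k$ be the approximate Pauli $(X^{\va^{(k)}}Z^{\vb^{(k)}})^{p_k}$ assigned to $\tA^{c_k}_{r_k,k}$ by \lemref{tatopauli_single}, so that $d_{\psi'}(\widetilde A_k,R_k)\le O(\sqrt\eps)$ and $\va^{(k)},\vb^{(k)}$ are supported on the qubit pair $\{2k-1,2k\}$. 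Since these supports are disjoint and increasing in $k$, the normal‑ordered target $X^{\va}Z^{\vb}$ (with $\va=\sum_k p_k\va^{(k)}$, $\vb=\sum_k p_k\vb^{(k)}$) is a \emph{fixed} reordering of the product $\prod_{k=1}^n R_k$, namely the one that brings all $X$‑blocks in front of all $Z$‑blocks. So it suffices to (i) bound $d_{\psi'}\big(\prod_{k=1}^n \widetilde A_k, \prod_{k=1}^n R_k\big)\le O(n\sqrt\eps)$, and then (ii) reorder $\prod_k R_k$ into $X^{\va}Z^{\vb}$.

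For step (i) I would telescope over $k$, at the $k$th stage replacing $\widetilde A_k$ by $R_k$ inside $\widetilde A_1\cdots\widetilde A_n\ket{\psi'}$. The point is that $\widetilde A_k$ is not adjacent to $\ket{\psi'}$ and \lemref{tatopauli_single} only controls $\|(\widetilde A_k-R_k)\ket{\psi'}\|$; one therefore first moves the boundary so that $\widetilde A_k$ becomes adjacent to the reference state, using (as in \lemref{switchtatb} and \lemref{movetatotb2}) that $\widetilde A_k$ is $O(\sqrt\eps)$‑close to the Bob observable $(\tB^{r_k}_{c_k,k})^{p_k}$, which commutes \emph{exactly} with every Alice operator and so can be slid to either side of the expression without any additional error. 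The triangle inequality for $d_{\psi'}$ (\lemref{tri}, \lemref{tri2}) then gives $O(\sqrt\eps)$ per round and $O(n\sqrt\eps)$ in total; rounds with $p_k=0$ contribute nothing since then $\widetilde A_k=R_k=I$. For step (ii) I would commute $X$‑blocks past $Z$‑blocks of other rounds using the Bob analogue of \lemref{com_between_rounds}; each such transposition is between bounded‑size words on disjoint qubit pairs and costs $O(\sqrt\eps)$, and only the cross $X$‑block / $Z$‑block swaps are needed. The statement for Bob is obtained by interchanging the roles of the row and column indices throughout.

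The main obstacle is the error bookkeeping. Both \lemref{tatopauli_single} and \lemref{com_between_rounds} are stated against the fixed state $\ket{\psi'}$, so a naive telescoping that pushes an observable past an $\Omega(n)$‑length product of already‑installed operators would accumulate $\Omega(n)$ factors of $\sqrt\eps$ per round and leave $O(n^2\sqrt\eps)$. Obtaining the stated $O(n\sqrt\eps)$ therefore hinges on routing each move through the corresponding Bob observable — which commutes exactly with the Alice side, so no long commuting‑through of measured operators is ever performed — and on the fact that each single‑round Pauli block has bounded size, so the reordering in step (ii) uses only $O(\sqrt\eps)$‑cost transpositions.
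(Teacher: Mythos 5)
Your step (i) is essentially the paper's argument: the paper simply invokes \lemref{consistency_string} with $A_i = (\tA^{c_i}_{r_i,i})^{p_i}$, $A'_i = R_i$ (the single-round Pauli block from \lemref{tatopauli_single}), and $B_i = (\tB^{r_i}_{c_i,i})^{p_i}$, noting $d_\psi(A_i,B_i)\le O(\sqrt\eps)$ by \lemref{switchtatb} and $d_\psi(A'_i,B_i)\le O(\sqrt\eps)$ by \lemref{tatopauli_single} plus the triangle inequality, yielding $d_\psi\bigl(\prod_k A_k,\ \prod_k R_k\bigr)\le O(n\sqrt\eps)$. Your telescoping-and-route-through-Bob description is exactly what the proof of \lemref{con2} does internally, so that part is fine.

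The gap is in your step (ii). The paper does not perform a global normal-ordering of $\prod_k R_k$ into $X^{\va}Z^{\vb}$ in this lemma; the notation ``$X^{\va}Z^{\vb}$'' here should be read as the round-ordered product $\prod_k R_k$ produced by \lemref{tatopauli_single}, not the fully normal-ordered word, and \lemref{consistency_string} delivers precisely that. If you do insist on producing the fully normal-ordered form, your cost estimate is wrong: to move every $X$-block in $X^{(1)}Z^{(1)}X^{(2)}Z^{(2)}\cdots X^{(n)}Z^{(n)}$ in front of every $Z$-block you must perform $\sum_{k=1}^{n}(k-1)=\Theta(n^2)$ cross transpositions, each of which costs $O(\sqrt\eps)$ via \lemref{com_between_rounds} against the \emph{fixed} state $\ket{\psi'}$, for a total of $O(n^2\sqrt\eps)$. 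Routing each single transposition through Bob does not help here the way it does in step (i), because the two products being compared have the same factors in a different order, so there is no common Bob anchor for \lemref{consistency_string} to exploit; you are stuck with a bubble-sort of approximately commuting Alice operators. Your final paragraph correctly flags this worry but then dismisses it without justification --- the assertion that ``only the cross $X$-block / $Z$-block swaps are needed'' does not reduce the count below $\Theta(n^2)$. So either drop step (ii) and read the lemma as the paper does, or accept that the normal-ordered version carries an $O(n^2\sqrt\eps)$ bound (which, as it happens, is all that is needed downstream in \lemref{parallel_ideal_test} and \thmref{iso}).
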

\begin{proof}
  This is a consequence of \lemref{tatopauli_single} and \lemref{consistency_string}.
\end{proof}
\begin{lemma}
  Suppose Alice and Bob win the test with probability $1 - \eps$. Then
  for the operator $M_n$ defined in \lemref{honest_measurement}. 
  $\bra{\phi} M_n \ket{\phi} \geq 1- O(n^2\sqrt{\eps})$,
  where $\ket{\phi} = V(\ket{\psi})$ is the output of the isometry in
  \thmref{iso} applied to Alice and Bob's shared state $\ket{\psi}$.
  \label{lem:parallel_ideal_test}
\end{lemma}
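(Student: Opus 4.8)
The plan is to unfold $M_n$ as a convex combination of products of ``consistency observables'' on the output registers of $V$, transport each such product back through the isometry using \thmref{iso}, and recognise what remains as (approximately) the winning operator of the parallel game. Since the strategy wins with probability $1-\eps$, \thmref{approx_phase} supplies operators $\tA^{c}_{r,k},\tB^{r}_{c,k}$ satisfying Equations~\eqref{eq:magic_anticom_alice}--\eqref{eq:magic_anticom_bob}, so \thmref{iso} indeed applies and yields the isometry $V$ with $\ket\phi=V(\ket\psi)$. For a cell $(r,c)$ of the Magic Square (\figref{1roundideal}) write $\sigma^A_{(r,c),k}$ and $\sigma^B_{(r,c),k}$ for the $\pm1$ Pauli occupying that cell, placed on qubits $2k-1,2k$ of the second and of the fourth output register of $V$ respectively. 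Comparing the nine Pauli terms, the single-round factor of $M_n$ is exactly $M_1=\E_{r,c}\tfrac12\bigl(I+\sigma^A_{(r,c),1}\sigma^B_{(r,c),1}\bigr)$: summing $\sigma^A_{(r,c)}\sigma^B_{(r,c)}$ over the nine cells reproduces the nine off-diagonal terms of $M_1$, with the signs of the $-ZX,-XZ,YY$ entries cancelling in each product. Hence
\[
M_n=M_1^{\otimes n}=\E_{\vr,\vc}\ \frac{1}{2^n}\sum_{\bm\delta\in\{0,1\}^n}\ \bigotimes_{k=1}^n\bigl(\sigma^A_{(r_k,c_k),k}\,\sigma^B_{(r_k,c_k),k}\bigr)^{\delta_k},
\]
and for each $(\vr,\vc,\bm\delta)$ the operator inside the sum factors as $P^A_{\vr,\vc,\bm\delta}\,P^B_{\vr,\vc,\bm\delta}$ with $P^A_{\vr,\vc,\bm\delta}=\sigma_X^A(\vs)\sigma_Z^A(\vt)$ on the second register and $P^B_{\vr,\vc,\bm\delta}=\sigma_X^B(\vs)\sigma_Z^B(\vt)$ on the fourth, for the \emph{same} $\vs=\vs(\vr,\vc,\bm\delta),\vt=\vt(\vr,\vc,\bm\delta)\in\{0,1\}^{2n}$: the per-round sign of a cell is the same for both players so it squares away, and regrouping the $X$- and $Z$-exponents round by round introduces no sign because distinct rounds touch disjoint qubits. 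As $M_n$ is Hermitian, $\bra\phi M_n\ket\phi=\E_{\vr,\vc}\tfrac1{2^n}\sum_{\bm\delta}\operatorname{Re}\bra\phi P^A_{\vr,\vc,\bm\delta}P^B_{\vr,\vc,\bm\delta}\ket\phi$.

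Next, \thmref{iso}, applied with $\vu=\vs$, $\vv=\vt$ and recalling from its proof that $\WA_{\vs,\vt}=X^{\vs}Z^{\vt}$ and $\WB_{\vs,\vt}=(X^B)^{\vs}(Z^B)^{\vt}$, gives $\bigl|\bra\phi P^A_{\vr,\vc,\bm\delta}P^B_{\vr,\vc,\bm\delta}\ket\phi-\bra\psi\WA_{\vs,\vt}\WB_{\vs,\vt}\ket\psi\bigr|\le O(n^2\sqrt\eps)$ for every $(\vr,\vc,\bm\delta)$. Because this sits inside an average over $(\vr,\vc,\bm\delta)$, the same $O(n^2\sqrt\eps)$ controls the total error, so
\[
1-\bra\phi M_n\ket\phi\ \le\ \E_{\vr,\vc}\ \frac{1}{2^n}\sum_{\bm\delta}\bigl(1-\operatorname{Re}\bra\psi\WA_{\vs,\vt}\WB_{\vs,\vt}\ket\psi\bigr)+O(n^2\sqrt\eps).
\]
It is crucial to run this at the level of the full $2n$-qubit products rather than verifying the $n$ single-round consistency checks individually: \thmref{iso} costs $O(n^2\sqrt\eps)$ for a single-qubit Pauli just as for a $2n$-qubit one, so combining $n$ separate checks by telescoping would lose an extra factor of $n$.

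For the remaining term, fix $\bm\delta$, which we feed as the exponent string $\vp$ into the lemmas below, and set $U_A=\prod_{k=1}^n(\tA^{c_k}_{r_k,k})^{\delta_k}$, $U_B=\prod_{k=1}^n(\tB^{r_k}_{c_k,k})^{\delta_k}$, $A=A^{\vc}_{\vr,\bm\delta}\otimes I$, $B=B^{\vr}_{\vc,\bm\delta}\otimes I$. The approximate Pauli $\WA_{\vs,\vt}$ is exactly the operator obtained by applying the recipe of \lemref{tatopauli_single} factor-by-factor to $U_A$, so $d_\psi(\WA_{\vs,\vt},U_A)\le O(n\sqrt\eps)$ by \lemref{tatopauli}, and likewise $d_\psi(\WB_{\vs,\vt},U_B)\le O(n\sqrt\eps)$. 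Using that Alice-side and Bob-side operators are unitary and commute with each other,
\[
\bigl\|\WA_{\vs,\vt}\WB_{\vs,\vt}\ket\psi-U_AU_B\ket\psi\bigr\|\le O(n\sqrt\eps),\qquad \bigl\|U_AU_B\ket\psi-AB\ket\psi\bigr\|^2\le 2\,d_\psi(U_A,A)^2+2\,d_\psi(U_B,B)^2.
\]
Taking $\E_{\vr,\vc}$: the right-hand side of the second inequality has expectation $O(n\sqrt\eps)$ by \myeqref{consistency_alice} and \myeqref{consistency_bob} (with $\vp=\bm\delta$), and $\E_{\vr,\vc}\|AB\ket\psi-\ket\psi\|^2=2\bigl(1-\E_{\vr,\vc}\bra\psi A^{\vc}_{\vr,\bm\delta}B^{\vr}_{\vc,\bm\delta}\ket\psi\bigr)\le 2\eps$ by Fact~\ref{fact:magicwin} (with $\vp=\bm\delta$). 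Combining these via the triangle inequality with the identity $1-\operatorname{Re}\bra\psi U\ket\psi=\tfrac12\|(U-I)\ket\psi\|^2$ for unitary $U$, then averaging over $\bm\delta$, gives
\[
\E_{\vr,\vc}\ \frac{1}{2^n}\sum_{\bm\delta}\bigl(1-\operatorname{Re}\bra\psi\WA_{\vs,\vt}\WB_{\vs,\vt}\ket\psi\bigr)\le O(n^2\eps)+O(n\sqrt\eps)=O(n^2\sqrt\eps),
\]
the $O(n^2\eps)$ arising from squaring the first norm bound and using $\eps\le\sqrt\eps$. Substituting into the inequality of the previous paragraph yields $\bra\phi M_n\ket\phi\ge 1-O(n^2\sqrt\eps)$.

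The one step that requires genuine care is the sign- and index-bookkeeping of the first paragraph: one must verify that the exact Pauli appearing on Alice's output register and the one on Bob's carry \emph{identical} labels $(\vs,\vt)$ and that all cell signs cancel, since this is precisely what lets \thmref{iso} be invoked with matching arguments $\vu=\vs,\vv=\vt$ and is where the combinatorics of the Magic Square (both players reading off the same signed entry of each cell) enters. The rest is convexity together with a single application each of \thmref{iso}, \lemref{tatopauli}, the consistency relations \myeqref{consistency_alice}--\myeqref{consistency_bob}, and Fact~\ref{fact:magicwin}.
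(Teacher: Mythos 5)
Your proof is correct and follows essentially the same approach as the paper's, just traversed in the opposite direction: the paper starts from Fact~\ref{fact:magicwin} and pushes forward through the consistency relations, \lemref{tatopauli}, and \thmref{iso} to reach $\bra{\phi}M_n\ket{\phi}$, while you start by expanding $M_n=\E_{\vr,\vc,\bm\delta}\,\sigma^A\sigma^B$ and pull backward through the same chain of lemmas to reach the winning probability. The one place you add genuine value is that you make explicit two steps the paper treats informally: the verification that $\E_{\vr,\vc,\vp}\,\sigma_X^A(\va)\sigma_Z^A(\vb)\sigma_X^B(\vc)\sigma_Z^B(\vd)$ really is $M_n$ (including the observation that the $-ZX,-XZ$ cell signs square away so Alice's and Bob's Pauli labels $\vs,\vt$ coincide), and the fact that the consistency relations must be consumed at the level of \emph{squared} state-dependent distances via the identity $1-\operatorname{Re}\bra\psi U\ket\psi=\tfrac12\|(U-I)\ket\psi\|^2$ rather than through a lossy application of Jensen, which is what keeps the final error at $O(n^2\sqrt\eps)$.
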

\begin{proof}
 Recall from Fact~\ref{fact:magicwin}, we know that
  \[ \forall \vp, \quad \E_{\vr, \vc} \bra{\psi} A^{\vc}_{\vr, \vp} B^{\vr}_{\vc,
    \vp} \ket{\psi} \geq 1 - \eps. \]
  By applying the consistency relations \myeqref{consistency_alice}
  and \myeqref{consistency_bob} guaranteed by \thmref{approx_phase},
  we obtain that
  \[ \forall \vp, \quad \E_{\vr, \vc} \bra{\psi} \prod_{k=1}^{n}
  (\tA^{c_k}_{r_k,k})^{p_k} \prod_{k=1}^{n} (\tB^{r_k}_{c_k,k})^{p_k} \ket{\psi} \geq
  1 - O(n\sqrt{\eps}). \]
  Now, by \lemref{tatopauli}, we can switch the $\tA$ and $\tB$
  operators to approximate Paulis:
  \[ \forall \vp, \quad \E_{\vr, \vc} \bra{\psi} (X^{\va}
  Z^{\vb})((X^B)^{\vc} (Z^B)^{\vd}) \ket{\psi} \geq 1 - O(n\sqrt{\eps}). \]
  Applying \thmref{iso}, we obtain that 
  \[ \forall \vp, \quad \bra{\phi} \E_{\vr, \vc} (\sigma_X^A(\va)
  \sigma_Z^A(\vb)\sigma_X^B(\vc) \sigma_Z^B(\vd)) \ket{\phi} \geq 1 -
  O(n^2 \sqrt{\eps}).\]
  In particular, taking an expectation over uniformly random choices
  of $\vec{p}$, we obtain that
  \[ \bra{\phi} \E_{\vr, \vc, \vp} (\sigma_X^A(\va)
  \sigma_Z^A(\vb)\sigma_X^B(\vc) \sigma_Z^B(\vd)) \ket{\phi} \geq 1 -
  O(n^2\sqrt{\eps}). \]
  It is not hard to see that $\E_{\vr, \vc, \vp} (\sigma_X^A(\va)
  \sigma_Z^A(\vb)\sigma_X^B(\vc) \sigma_Z^B(\vd))$ is precisely the
  operator $M_n$, corresponding to the magic square test performed on
  an unknown state $\ket{\phi}$ using the measurement operators of the
  ideal strategy.
\end{proof}

\section{Discussion and open questions}

The reader familiar with previous self-testing results may notice that
our \thmref{iso} gives a robustness bound on the \emph{expectation
  value} of operators without explicitly characterizing the state,
whereas previous works often state a bound on the 2-norm
$\|V(\ket{\psi}) - \ket{\psi'} \ot \ket{\text{junk}}\|$, where
$\ket{\psi'}$ is a fixed target state. While it is
possible to translate from one to the other by means of the
techniques in \lemref{parallel_ideal_test}, we think the guarantee on
expectation values is more natural in applications where one
does not want to test closeness to a fixed target state, but rather to
test whether the state satisfies a certain \emph{property} described
by a measurement operator.

Self-testing and rigidity have been very active areas of research in
recent years, and we believe that many more interesting questions
remain to be answered. One open question of interest is to reduce the question and answer length of the test
without sacrificing the error scaling. This is especially interesting
from the perspective of computational complexity, where self-testing results
have been used to show computational hardness for estimating the value
of non-local games~\cite{Ji15,NV15}. Rigidity has also been
applied to secure delegated computation and quantum key
distribution: in particular, the work of Reichardt, Unger, and
Vazirani \cite{RUV13} achieves these applications using a serial
(many-round) version of the CHSH test; it would be interesting to see
if their results could be improved using the Magic Square test. 

A further way to generalize our result would be to adapt it to test
states made up of qudits, with local dimension $d \neq 2$. As our
techniques relied heavily on the algebraic structure of the qubit
Pauli group, this may require significant technical advances. In fact, a
variant of the Magic Square game for which the ideal strategy consists
of ``generalized Paulis'' (i.e. the mod $d$ shift- and clock-matrices)
was recently proposed by McKague~\cite{McK16b}, and it would be
interesting to see if our analysis could extend to the parallel
repetition of this game. Likewise, it would be interesting to extend
our analysis to states other than the EPR state---for instance, could
we do something like McKague's self-test for $n$-qubit graph
states~\cite{McK16}, but with only two provers instead of $n$?

\section{Acknowledgements}
AN was supported by ARO contract W911NF-12-1-0486.  MC was supported by the National Science Foundation under Grant Number 0939370. Some of this
research was conducted while both authors were visiting the Perimeter
Institute. We thank Matthew McKague and
William Slofstra for helpful conversations, and Thomas Vidick for
helpful conversations and for inspiring the proof of \lemref{honest_measurement}. We also thank
Andrea Coladangelo for communicating with us regarding~\cite{Coladangelo16}.
\bibliography{magicsquare}

\begin{thebibliography}{WBMS16}

\bibitem[Bel64]{Bell64}
John~S. Bell.
\newblock On the {Einstein} {Podolsky} {Rosen} paradox.
\newblock {\em Physics}, 1:195--200, 1964.

\bibitem[CHSH69]{CHSH69}
John~F. Clauser, Michael~A. Horne, Abner Shimony, and Richard~A. Holt.
\newblock Proposed experiment to test local hidden-variable theories.
\newblock {\em Phys. Rev. Lett.}, 23:880--884, Oct 1969.

\bibitem[Cir80]{Tsirelson80}
B.~S. Cirel'son.
\newblock Quantum generalizations of bell's inequality.
\newblock {\em Letters in Mathematical Physics}, 4(2):93--100, 1980.

\bibitem[Col16]{Coladangelo16}
Andrea~W. Coladangelo.
\newblock Parallel self-testing of (tilted) {EPR} pairs via copies of (tilted)
  {CHSH}.
\newblock Technical report, arXiv:1609.03687, 2016.

\bibitem[CRSV16]{CRSV16}
R.~Chao, B.~W. Reichardt, C.~Sutherland, and T.~Vidick.
\newblock Overlapping qubits.
\newblock Manuscript in preparation, 2016.

\bibitem[Ji15]{Ji15}
Zhengfeng Ji.
\newblock Classical verification of quantum proofs.
\newblock Technical report, arXiv:1505.07432, 2015.

\bibitem[McK10]{McK10}
Matthew McKague.
\newblock Self-testing graph states.
\newblock Technical report, arXiv:1010.1989, 2010.

\bibitem[McK15]{McK15}
Matthew McKague.
\newblock Self-testing in parallel.
\newblock Technical report, arXiv:1511.04194, 2015.

\bibitem[McK16a]{McK16}
Matthew McKague.
\newblock Interactive proofs for $\mathsf{BQP}$ via self-tested graph states.
\newblock {\em Theory of Computing}, 12(3):1--42, 2016.

\bibitem[McK16b]{McK16b}
Matthew McKague.
\newblock Self-testing high dimensional states using the generalized magic
  square game.
\newblock Technical report, arXiv:1605.09435, 2016.

\bibitem[Mer90]{Mermin90}
N.~David Mermin.
\newblock Simple unified form for the major no-hidden-variables theorems.
\newblock {\em Phys. Rev. Lett.}, 65:3373--3376, Dec 1990.

\bibitem[MY98]{MY98}
Dominic Mayers and Andrew Yao.
\newblock Quantum cryptography with imperfect apparatus.
\newblock In {\em Proceedings of the 39th Annual Symposium on Foundations of
  Computer Science}, FOCS '98, pages 503--, Washington, DC, USA, 1998. IEEE
  Computer Society.

\bibitem[MYS12]{MYS12}
M.~McKague, T.~H. Yang, and V.~Scarani.
\newblock Robust self-testing of the singlet.
\newblock {\em Journal of Physics A: Mathematical and Theoretical},
  45(45):455304, 2012.

\bibitem[NV15]{NV15}
Anand Natarajan and Thomas Vidick.
\newblock Constant-soundness interactive proofs for local hamiltonians.
\newblock Technical report, arXiv:1512.02090, 2015.

\bibitem[Per90]{Peres90}
Asher Peres.
\newblock Incompatible results of quantum measurements.
\newblock {\em Physics Letters A}, 151(3):107 -- 108, 1990.

\bibitem[RUV13]{RUV13}
Ben Reichardt, Falk Unger, and Umesh Vazirani.
\newblock A classical leash for a quantum system: Command of quantum systems
  via rigidity of {CHSH} games.
\newblock {\em Nature}, 496(7446):456--460, 2013.

\bibitem[WBMS16]{WBMS16}
Xingyao Wu, Jean-Daniel Bancal, Matthew McKague, and Valerio Scarani.
\newblock Device-independent parallel self-testing of two singlets.
\newblock {\em Phys. Rev. A}, 93:062121, Jun 2016.

\end{thebibliography}

\pagebreak

\appendix

\section{Properties of the State-Dependent Distance}

\begin{definition}
  Given a state $\ket{\psi}$ and two operators $A, B$, the
  \emph{state-dependent distance} $d_{\psi}(A, B)$ between $A$ and $B$
  is defined to be
  \[ d_{\psi}(A, B) := \| A \ket{\psi} - B \ket{\psi} \|. \]
\end{definition}
\begin{lemma}
  \label{lem:tri}
  The state-dependent distance satisfies the triangle inequality
  \[ \forall A, B, C, \quad d_{\psi}(A, C) \leq d_{\psi}(A, B) +
  d_{\psi}(B, C). \]
\end{lemma}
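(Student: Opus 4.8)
The plan is to reduce the claimed inequality to the ordinary triangle inequality for the Hilbert-space norm $\|\cdot\|$. First I would unfold the definition of the state-dependent distance: by definition $d_{\psi}(A,C) = \|A\ket{\psi} - C\ket{\psi}\|$, and likewise $d_{\psi}(A,B) = \|A\ket{\psi} - B\ket{\psi}\|$ and $d_{\psi}(B,C) = \|B\ket{\psi} - C\ket{\psi}\|$. The key observation is the algebraic identity $A\ket{\psi} - C\ket{\psi} = (A\ket{\psi} - B\ket{\psi}) + (B\ket{\psi} - C\ket{\psi})$, obtained simply by inserting and subtracting the vector $B\ket{\psi}$.

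Next I would apply the triangle inequality for the norm on the Hilbert space containing $\ket{\psi}$ — valid since $\|\cdot\|$ is induced by an inner product — in the form $\|u+v\| \leq \|u\| + \|v\|$ with $u = A\ket{\psi} - B\ket{\psi}$ and $v = B\ket{\psi} - C\ket{\psi}$. This yields $\|A\ket{\psi} - C\ket{\psi}\| \leq \|A\ket{\psi} - B\ket{\psi}\| + \|B\ket{\psi} - C\ket{\psi}\|$, which upon re-folding the definitions is exactly the asserted bound $d_{\psi}(A,C) \leq d_{\psi}(A,B) + d_{\psi}(B,C)$.

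There is no genuine obstacle here: the statement is an immediate corollary of the triangle inequality for the $\ell_2$ norm, and the only ``work'' is bookkeeping to match the notation $d_{\psi}$ to the underlying Hilbert-space norm. The one point worth recording is that nothing about the operators $A,B,C$ is used beyond the fact that $A\ket{\psi},B\ket{\psi},C\ket{\psi}$ are vectors in a common inner-product space, so the inequality holds for arbitrary (not necessarily Hermitian or unitary) operators; this generality is what makes the lemma convenient to invoke repeatedly in the later arguments.
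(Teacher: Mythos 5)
Your proof is correct and is the only sensible argument: the paper states this lemma without proof precisely because it reduces immediately to the triangle inequality for the Hilbert-space norm by inserting $B\ket{\psi}$, exactly as you do. Your closing remark that the argument requires no structural assumptions on $A,B,C$ is accurate and consistent with how the lemma is used throughout the paper.
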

\begin{lemma}
  \label{lem:tri2}
  Let $A, B, C, D$ be bounded operators. Then
  \[ d_{\psi}(DA, DC) \leq d_{\psi}(DA, DB) + \|D\| d_{\psi}(B, C). \]
\end{lemma}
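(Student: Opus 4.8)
\textbf{Proof proposal for Lemma~\ref{lem:tri2}.}
The plan is to reduce the claim to two elementary facts: the triangle inequality for $d_\psi$ (already available as Lemma~\ref{lem:tri}), and the operator-norm bound $\|D v\| \le \|D\|\,\|v\|$ for any vector $v$. First I would insert $DB$ as an intermediate point and apply Lemma~\ref{lem:tri} with the three operators $DA$, $DB$, $DC$, giving
\[ d_\psi(DA, DC) \le d_\psi(DA, DB) + d_\psi(DB, DC). \]
Then I would bound the second term directly from the definition of the state-dependent distance: $d_\psi(DB,DC) = \|DB\ket{\psi} - DC\ket{\psi}\| = \|D\bigl(B\ket{\psi} - C\ket{\psi}\bigr)\| \le \|D\|\,\|B\ket{\psi} - C\ket{\psi}\| = \|D\|\, d_\psi(B,C)$, where the inequality is the definition of the operator norm applied to the vector $B\ket{\psi} - C\ket{\psi}$. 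Combining the two displays yields the statement.

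There is essentially no obstacle here; the only point requiring a (trivial) observation is that $D$ being bounded is exactly what licenses the step $\|Dv\| \le \|D\|\,\|v\|$, and that linearity of $D$ lets us pull it outside the difference $B\ket{\psi} - C\ket{\psi}$. I would keep the write-up to these two lines.
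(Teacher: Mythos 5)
Your proposal is correct and follows exactly the same route as the paper's proof: apply Lemma~\ref{lem:tri} with intermediate point $DB$, then bound $d_\psi(DB,DC)=\|D(B\ket{\psi}-C\ket{\psi})\|\le\|D\|\,d_\psi(B,C)$ via the operator-norm inequality. Nothing to add.
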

\begin{proof}
  By \lemref{tri}, \[ d_\psi(DA, DC) \leq d_{\psi}(DA, DB) +
  d_{\psi}(DB, DC). \]
  Expand the second term:
  \begin{align*}
    d_{\psi}(DB, DC) &= \| D(B \ket{\psi} - C \ket{\psi}) \|_2 \\
    &\leq \|D \|  \cdot \|B\ket{\psi} - C\ket{\psi}\|_2 \\
    &= \|D\| d_{\psi}(B, C).
  \end{align*}
\end{proof}

The following lemma tells us that guarantees on the state-dependent
distance on average can be made ``coherent.''
\begin{lemma}
  Let $\{A_i\}$ and $\{B_i\}$ be two sets of operators indexed by $i \in
  [N]$, and suppose that \[\E_{i} d_\psi (A_i, B_i)^2 = \delta.\] Define
  the extended state $\ket{\psi'} = \frac{1}{\sqrt{N}} \sum_{i \in
    [N]} \ket{\psi} \ot \ket{i}$, and the extended operators
  $\tA = \sum_i A_i \ot \proj{i}$ and $\tB = \sum_i B_i \ot
  \proj{j}$. Then \[d_{\psi'} (\tA, \tB)^2 = \delta.\]
  \label{lem:coherent_average}
\end{lemma}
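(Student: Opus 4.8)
The statement to prove is \lemref{coherent_average}: given operator families $\{A_i\}$, $\{B_i\}$ with $\E_i d_\psi(A_i,B_i)^2 = \delta$, the extended state $\ket{\psi'} = \frac{1}{\sqrt N}\sum_i \ket\psi\ot\ket i$ and extended operators $\tA = \sum_i A_i\ot\proj i$, $\tB = \sum_i B_i\ot\proj i$ satisfy $d_{\psi'}(\tA,\tB)^2 = \delta$.

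\textbf{Proof plan.} The plan is to simply expand the definition of $d_{\psi'}(\tA,\tB)^2 = \|(\tA-\tB)\ket{\psi'}\|^2$ and use orthogonality of the ancilla basis vectors $\ket i$ to decouple the sum into a per-index sum. First I would compute $(\tA-\tB)\ket{\psi'} = \bigl(\sum_i (A_i-B_i)\ot\proj i\bigr)\bigl(\frac{1}{\sqrt N}\sum_j\ket\psi\ot\ket j\bigr) = \frac{1}{\sqrt N}\sum_i (A_i-B_i)\ket\psi\ot\ket i$, where the cross terms vanish because $\proj i\,\ket j = \delta_{ij}\ket i$. Then, taking the squared norm and again using $\langle i | i'\rangle = \delta_{ii'}$ to kill cross terms in the inner product, I get $\|(\tA-\tB)\ket{\psi'}\|^2 = \frac{1}{N}\sum_i \|(A_i-B_i)\ket\psi\|^2 = \frac{1}{N}\sum_i d_\psi(A_i,B_i)^2 = \E_i d_\psi(A_i,B_i)^2 = \delta$. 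That is the entire argument.

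\textbf{Main obstacle.} There is essentially no obstacle here: this is a routine orthogonality calculation, and the only things to be careful about are (i) that the ancilla register vectors $\ket i$ are genuinely orthonormal so that both the operator action and the norm computation decouple index-by-index, and (ii) bookkeeping the normalization factor $1/\sqrt N$ correctly so that it becomes the averaging factor $1/N$ in the final line. I would present it as a short two-line display computation rather than anything elaborate.

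\begin{proof}
  Expanding the definition and using $\proj{i}\ket{j} = \delta_{ij}\ket{i}$,
  \[ (\tA - \tB)\ket{\psi'} = \Bigl(\sum_i (A_i - B_i)\ot\proj{i}\Bigr)\Bigl(\frac{1}{\sqrt N}\sum_j \ket{\psi}\ot\ket{j}\Bigr) = \frac{1}{\sqrt N}\sum_i (A_i - B_i)\ket{\psi}\ot\ket{i}. \]
  Taking the squared norm and using $\langle i | i' \rangle = \delta_{ii'}$ to eliminate cross terms,
  \[ d_{\psi'}(\tA, \tB)^2 = \frac{1}{N}\sum_i \bigl\| (A_i - B_i)\ket{\psi} \bigr\|^2 = \frac{1}{N}\sum_i d_\psi(A_i, B_i)^2 = \E_i d_\psi(A_i, B_i)^2 = \delta. \]
\end{proof}
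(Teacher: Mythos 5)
Your proof is correct and takes essentially the same route as the paper: apply the extended operators to the extended state, use orthogonality of the ancilla basis $\{\ket{i}\}$ to decouple the squared norm into a per-index sum, and recognize the result as the average $\E_i d_\psi(A_i,B_i)^2$. The only cosmetic difference is that you factor $(A_i - B_i)$ from the start, whereas the paper writes out the four-term expansion $A_i^\dagger A_i + B_i^\dagger B_i - A_i^\dagger B_i - B_i^\dagger A_i$ before regrouping; these are the same computation.
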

\begin{proof}
  \begin{align*}
    d_{\psi'}(\tA, \tB) &= \| \tA\ket{\psi'} - \tB\ket{\psi'}\|^2 \\
                        &= \| \frac{1}{\sqrt{N}}\sum_i A_i \ket{\psi} \ot \ket{i} - \frac{1}{\sqrt{N}}\sum_i B_i \ket{\psi}
                          \ot \ket{i} \|^2 \\
                        &= \frac{1}{N} \sum_i \bra{\psi} (A_i^\dagger A_i +
                          B_i^\dagger B_i - A_i^\dagger B_i - B_i^\dagger A_i) \ket{\psi}  
    \\
                        &= \E_{i} d_\psi(A_i, B_i)^2 \\
                        &= \delta.
  \end{align*}
\end{proof}

\begin{lemma}  \label{lem:maybesaveeps}
Given three Hermitian, unitary  operators $T, T', S$, and a unit vector $\ket{\sigma}$, if:
$\bra{\sigma}T \cdot S\ket{\sigma} \geq 1 - \delta $ and
$\bra{\sigma}T' \cdot S\ket{\sigma} \geq 1 - \delta,$ then
$\bra{\sigma}T \cdot T'\ket{\sigma} \geq 1 - 4 \delta$.
\end{lemma}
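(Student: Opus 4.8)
The plan is to reduce the claim to a triangle inequality for $2$-norm distances of vectors, exploiting that $T$, $T'$, and $S$ are involutions (being Hermitian and unitary, $T^2 = T'^2 = S^2 = I$) and isometries (so they preserve norms).

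First I would record the elementary identity $\|U\ket{\sigma} - \ket{\sigma}\|^2 = \|U\ket{\sigma}\|^2 + 1 - 2\,\mathrm{Re}\bra{\sigma}U\ket{\sigma} = 2 - 2\,\mathrm{Re}\bra{\sigma}U\ket{\sigma}$, valid for any unitary $U$ and unit vector $\ket{\sigma}$. Taking $U = TS$, the hypothesis $\bra{\sigma}TS\ket{\sigma} \geq 1-\delta$ gives $\|TS\ket{\sigma} - \ket{\sigma}\|^2 \leq 2\delta$; applying the isometry $T$ and using $T^2 = I$ then yields $\|S\ket{\sigma} - T\ket{\sigma}\| \leq \sqrt{2\delta}$. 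The identical computation with $U = T'S$ gives $\|S\ket{\sigma} - T'\ket{\sigma}\| \leq \sqrt{2\delta}$. By the triangle inequality, $\|T\ket{\sigma} - T'\ket{\sigma}\| \leq 2\sqrt{2\delta}$, hence $\|T\ket{\sigma} - T'\ket{\sigma}\|^2 \leq 8\delta$. Expanding the left-hand side as $\|T\ket{\sigma}\|^2 + \|T'\ket{\sigma}\|^2 - 2\,\mathrm{Re}\bra{\sigma}TT'\ket{\sigma} = 2 - 2\,\mathrm{Re}\bra{\sigma}TT'\ket{\sigma}$ then gives $\mathrm{Re}\bra{\sigma}TT'\ket{\sigma} \geq 1-4\delta$, which is the asserted bound.

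There is no genuinely hard step here; the lemma is a soft argument, and the only point requiring a word of care is that quantities such as $\bra{\sigma}TS\ket{\sigma}$ need not a priori be real. However, the inequality ``$\geq 1-\delta$'' is only meaningful for real numbers, and the argument above refers only to real parts, so this causes no difficulty (and in the application of this lemma in \lemref{moveatoa} all of these expectation values are manifestly real). I would finally double-check the constants: the two intermediate distances are each $\sqrt{2\delta}$, the triangle inequality combines them into $2\sqrt{2\delta}$, squaring gives $8\delta$, and the relation $\|v-w\|^2 = 2 - 2\,\mathrm{Re}\langle v, w\rangle$ for unit vectors converts this into a deficit of $4\delta$, matching the statement.
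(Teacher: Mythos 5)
Your proof is correct and is essentially the same argument as the paper's: both convert the hypotheses into the bounds $\|(T-S)\ket{\sigma}\|,\|(T'-S)\ket{\sigma}\|\leq\sqrt{2\delta}$ and combine them, the only cosmetic difference being that you glue the two estimates with the triangle inequality on the vectors $T\ket{\sigma},S\ket{\sigma},T'\ket{\sigma}$ and then square, whereas the paper applies Cauchy--Schwarz directly to $\bra{\sigma}(T-S)(T'-S)\ket{\sigma}$ and expands; these are algebraically equivalent and yield the identical constant $4\delta$. Your explicit remark that $\bra{\sigma}TS\ket{\sigma}$ need not be real, and that the argument really bounds a real part, is a small but genuine point of care that the paper glosses over.
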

\begin{proof}
Note that 
\begin{align*}
& \| (T-S) \ket{\sigma} \|^2  = 2 - 2  \bra{\sigma}T \cdot S\ket{\sigma} \leq 2 \delta \\ 
\intertext{and, similarly,}
& \| (T'-S) \ket{\sigma} \|^2  = 2 - 2  \bra{\sigma}T' \cdot S\ket{\sigma} \leq 2 \delta. 
\end{align*}
So, by the Cauchy-Schwarz inequality,
\[ \left | \bra{\sigma}(T-S)(T'-S) \ket{\sigma} \right | \leq \| (T-S) \ket{\sigma} \| \cdot \| (T'-S) \ket{\sigma} \| \leq \sqrt{2 \delta} \cdot \sqrt{2 \delta} = 2 \delta.  \]
Expanding out the Left Hand Side, now gives
\begin{align*}
&2 \delta \geq \left | \bra{\sigma}(T-S)(T'-S) \ket{\sigma} \right |  = \left | \bra{\sigma}T \cdot T'\ket{\sigma} - \bra{\sigma}T \cdot S \ket{\sigma} - \bra{\sigma}S \cdot T'\ket{\sigma} + \bra{\sigma}S \cdot S \ket{\sigma} \right | \\
& =   \left | \bra{\sigma}T \cdot T'\ket{\sigma} - \bra{\sigma}T \cdot S \ket{\sigma} - \bra{\sigma}S \cdot T'\ket{\sigma} + 1 \right | \\
\intertext{So,} 
& -2 \delta \leq \bra{\sigma}T \cdot T'\ket{\sigma} - \bra{\sigma}T \cdot S \ket{\sigma} - \bra{\sigma}S \cdot T'\ket{\sigma} + 1 \\
\intertext{and} 
&  \bra{\sigma}T \cdot T'\ket{\sigma} \geq \bra{\sigma}T \cdot S \ket{\sigma} + \bra{\sigma}S \cdot T'\ket{\sigma} - 1 - 2 \delta  \geq (1- \delta) + (1- \delta) - 1 - 2 \delta  = 1 - 4 \delta, \\
\end{align*}
where the last inequality again uses the assumption of this lemma.
\end{proof}

We now state and prove some ``utility''
lemmas, about what happens when we commute words of operators past
each other.
\begin{lemma}
  Let $A_1, \dots, A_k$ be Hermitian operators on Alice's space, and $B_1,
  \dots, B_k$ be
  Hermitian operators on Bob's space, such that
  \[ \forall i, \quad d_\psi(A_i, B_i) \leq \eps_i. \]
  Then \[ d_\psi(\prod_{i=1}^k A_i ,\prod_{i=k}^1 B_i) \leq
  \sum_{i=1}^k \eps_i \]
  \label{lem:con2}
\end{lemma}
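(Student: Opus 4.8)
The plan is to induct on $k$, peeling off one operator at a time from the side of $\ket\psi$ and using the fact that Alice's operators commute with Bob's. The base case $k=1$ is exactly the hypothesis $d_\psi(A_1,B_1)\le\eps_1$.

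For the inductive step, first replace the rightmost Alice operator $A_k$ by $B_k$, and then commute that newly-introduced Bob operator out to the left past all remaining Alice operators. Concretely, by \lemref{tri},
\[ d_\psi\Big(\prod_{i=1}^{k}A_i,\ \prod_{i=k}^{1}B_i\Big) \le d_\psi\big(A_1\cdots A_k,\ A_1\cdots A_{k-1}B_k\big) + d_\psi\big(A_1\cdots A_{k-1}B_k,\ B_kB_{k-1}\cdots B_1\big). \]
The first term equals $\|A_1\cdots A_{k-1}(A_k-B_k)\ket\psi\|\le \|A_1\cdots A_{k-1}\|\cdot\|(A_k-B_k)\ket\psi\|\le \eps_k$, since each $A_i$ is an observable and hence unitary, so the product $A_1\cdots A_{k-1}$ has operator norm $1$. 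For the second term, since $B_k$ acts on Bob's space it commutes with every $A_i$, so $A_1\cdots A_{k-1}B_k\ket\psi = B_k(A_1\cdots A_{k-1})\ket\psi$; therefore the second term is $\big\|B_k\big((A_1\cdots A_{k-1})-(B_{k-1}\cdots B_1)\big)\ket\psi\big\| \le d_\psi\big(\prod_{i=1}^{k-1}A_i,\prod_{i=k-1}^{1}B_i\big) \le \sum_{i=1}^{k-1}\eps_i$ by the induction hypothesis. Summing yields $\sum_{i=1}^{k}\eps_i$.

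The two ingredients are thus (i) the tensor-product structure of the shared state, which gives $[A_i,B_j]=0$ for all $i,j$, and (ii) unitarity of the observables, which keeps every prefactor norm at $1$ so that replacing $A_i$ by $B_i$ costs exactly $d_\psi(A_i,B_i)$. There is no substantive obstacle here; the only point requiring care is the ordering of the substitutions—always replacing the operator adjacent to $\ket\psi$ and then sliding the resulting Bob operator leftward out of the way—so that at the moment we invoke the bound $d_\psi(A_i,B_i)\le\eps_i$ the two operators act directly on $\ket\psi$ rather than on some intermediate vector.
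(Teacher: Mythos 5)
Your proof is correct and takes essentially the same approach as the paper's: a telescoping replacement of one $A_i$ by $B_i$ at a time, sliding each freshly introduced Bob operator to the left where it commutes out of the way, so that each replacement costs exactly $d_\psi(A_i,B_i)\le\eps_i$. You organize this as an induction on $k$ while the paper unrolls the same telescoping chain directly via repeated triangle inequalities; the mechanism, including the implicit reliance on the unit operator norm of the observables, is the same.
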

\begin{proof}
  \begin{align*}
    d_\psi(\prod_{i=1}^k A_i,\prod_{i=k}^1 B_i) &\leq d_\psi(A_1 \dots
                                                  A_k, B_k A_1 \dots
                                                  A_{k-1}) +
                                                  d_\psi(B_k A_1 \dots
                                                  A_{k-1}, B_k B_{k-1}
                                                  A_1 \dots A_{k-2})  \\
                                                &\qquad +
                                                  \dots + d_\psi(B_k
                                                  \dots B_{2} A_1, B_k
                                                  \dots B_1) \\
    &\leq d_\psi(A_k, B_k) + d_\psi(A_{k-1}, B_{k-1}) + \dots +
      d_\psi(A_1, B_1) \\
    &= \sum_i \eps_i
  \end{align*}
  \anote{Add some text}
\end{proof}
\begin{lemma}
  Let $A_1, \dots A_k$ and $A'_1, \dots A'_k$ be operators on Alice,
  and $B_1, \dots B_k$ be operators on Bob, such that
  \begin{align*}
    \forall i, \quad d_\psi(A_i, B_i) &\leq \eps_1 \\
    \forall i, \quad d_\psi(A'_i, B_i) &\leq \eps_2.
  \end{align*}
  Then
  \[ d_\psi(A_1 \dots A_k, A'_1 \dots A'_k) \leq n(\eps_1 +
  \eps_2). \]
  \label{lem:consistency_string}
\end{lemma}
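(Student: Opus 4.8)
The plan is to route the comparison between the two Alice-side words through the common Bob-side word. We have no direct control on $d_\psi(A_i, A'_i)$, but both $A_i$ and $A'_i$ are close to the \emph{same} operator $B_i$, which lives on Bob's tensor factor and therefore commutes with every operator on Alice's side; this is exactly the situation handled by \lemref{con2}.

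Concretely, I would first apply \lemref{con2} to the operators $A_1,\dots,A_k$ together with $B_1,\dots,B_k$: since $d_\psi(A_i, B_i) \le \eps_1$ for every $i$, this gives $d_\psi(A_1 \cdots A_k,\ B_k \cdots B_1) \le k\eps_1$. Applying the same lemma to $A'_1,\dots,A'_k$ together with $B_1,\dots,B_k$ yields $d_\psi(A'_1 \cdots A'_k,\ B_k \cdots B_1) \le k\eps_2$. Then the triangle inequality for the state-dependent distance, \lemref{tri}, combines the two estimates into
\[ d_\psi(A_1 \cdots A_k,\ A'_1 \cdots A'_k) \le d_\psi(A_1 \cdots A_k,\ B_k \cdots B_1) + d_\psi(B_k \cdots B_1,\ A'_1 \cdots A'_k) \le k\eps_1 + k\eps_2 = k(\eps_1+\eps_2), \]
which is the asserted bound (with $k = n$).

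The one point requiring care — and the only place anything could go wrong — is the contraction hypothesis implicit in the telescoping step of \lemref{con2}: each single swap $A_j \leftrightarrow B_j$ inside the word $B_k \cdots B_{j+1} A_1 \cdots A_j$ costs only $d_\psi(A_j, B_j)$ because the prefix $B_k \cdots B_{j+1}$ has operator norm $\le 1$, the commutation $B_j A_1 \cdots A_{j-1} = A_1 \cdots A_{j-1} B_j$ holds across the Alice/Bob cut, and $A_1 \cdots A_{j-1}$ is again a contraction. In every instance where this lemma is invoked (notably \lemref{tatopauli}) the operators $A_i, A'_i, B_i$ are observables, hence unitary, so these norm bounds are automatic and the argument goes through verbatim; I would note this explicitly rather than state the lemma in full generality. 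Beyond this bookkeeping there is no real obstacle — the lemma is a two-line consequence of \lemref{con2} and \lemref{tri}.
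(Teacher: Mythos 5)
Your proof is correct and matches the paper's own argument exactly: both route through the common word $B_k \cdots B_1$ by applying \lemref{con2} twice and then invoke the triangle inequality \lemref{tri}. Your added remark that the telescoping in \lemref{con2} silently uses that the $A_i$, $A'_i$, $B_i$ are contractions (automatic here since in every application they are observables, hence unitary) is a fair observation about the hypotheses as stated, but it does not alter the conclusion or the approach.
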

\begin{proof}
  This is a straightforward application of the \lemref{con2}.
  \begin{align*}
    d_\psi(A_1 \dots A_k, A'_1 \dots A'_k) &\leq d_\psi(A_1 \dots A_k,
    B_k \dots B_1) + d_\psi(B_k \dots B_1, A'_1 \dots A'_k) \\
    &\leq n\eps_1 + n\eps_2.
  \end{align*}
\end{proof}

\begin{lemma}
  Let $A_1, \dots A_k$ be Hermitian operators on Alice's space, and $B_1, \dots,
  B_{k}$ be Hermitian operators on Bob's space. Suppose that
  \[ \forall i, \quad d_\psi(A_i, B_i) \leq \eps_1 \]
  and \[ \forall i, j \in \{1, \dots, k-1\}, j \in \{k\}, \quad d_\psi(A_i A_j,
  \alpha_{ij} A_j A_i) \leq \eps_2 \]
  where $\alpha_{ij} \in \{\pm 1\}$ for each choice of $i,j$. Then 
  \[ d_\psi(A_1 \dots A_k, \alpha_{1k} \alpha_{2k} \dots \alpha_{k-1,k}
  A_k  A_1 A_2\dots A_{k-1} ) \leq 2(k-2)\eps_1 + (k-1)\eps_2. \]
  \label{lem:switch3}
\end{lemma}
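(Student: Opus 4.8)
The plan is to transport the operator $A_k$ from the far right of the product $A_1A_2\cdots A_{k-1}A_k$ to the far left, one position at a time, collecting a factor $\alpha_{ik}$ each time it crosses $A_i$. The subtlety is that the only commutation hypothesis available, $d_\psi(A_iA_k,\alpha_{ik}A_kA_i)\le\eps_2$, is a \emph{state-dependent} statement valid only when $A_iA_k$ acts directly on $\ket{\psi}$; so the bookkeeping must be arranged so that, at the moment $A_k$ crosses $A_i$, the factor $A_i$ lies immediately to the right of $A_k$ and immediately to the left of $\ket{\psi}$ in the current word. I would maintain this invariant using Bob's operators as a scratchpad: right after $A_k$ crosses $A_i$, replace the (now rightmost) $A_i$ by $B_i$ using $d_\psi(A_i,B_i)\le\eps_1$, and slide $B_i$ all the way to the left past every remaining Alice operator at zero cost, since Alice's and Bob's operators commute exactly. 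This returns $A_k$ to the slot adjacent to $\ket{\psi}$, ready for the next crossing, and crucially $B_i$ never again interacts with $A_k$.

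Concretely, for $j=1,\dots,k-1$ the $j$-th stage begins with a word of the form $(\prod_{l=k-j+1}^{k-1}\alpha_{lk})\,B_{k-1}B_{k-2}\cdots B_{k-j+1}\,A_1\cdots A_{k-j}\,A_k\ket{\psi}$, with $A_k$ adjacent to $\ket{\psi}$. Using the hypothesis with $i=k-j$ (and pulling the norm-at-most-one prefactor out by \lemref{tri2}), I replace $(A_{k-j}A_k)\ket{\psi}$ by $\alpha_{k-j,k}(A_kA_{k-j})\ket{\psi}$ at cost $\eps_2$; then for $j\le k-2$ I swap the freed rightmost $A_{k-j}$ to $B_{k-j}$ at cost $\eps_1$ and commute it leftward into the Bob block, which yields precisely the word required at stage $j+1$. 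After stage $k-1$ the word is $(\prod_{l=1}^{k-1}\alpha_{lk})\,B_{k-1}\cdots B_2\,A_kA_1\ket{\psi}=(\prod_{l=1}^{k-1}\alpha_{lk})\,A_kA_1\,B_{k-1}\cdots B_2\ket{\psi}$, and it remains to convert $B_{k-1}\cdots B_2\ket{\psi}$ back into $A_2A_3\cdots A_{k-1}\ket{\psi}$. For this I peel off the rightmost $B_2$ (adjacent to $\ket{\psi}$), swap it to $A_2$ at cost $\eps_1$, commute the new $A_2$ leftward past the remaining $B$'s, and repeat for $B_3,\dots,B_{k-1}$, a total of $k-2$ swaps.

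Summing the costs: stages $1$ through $k-2$ contribute $\eps_1+\eps_2$ each, stage $k-1$ contributes $\eps_2$, and the swap-back phase contributes $(k-2)\eps_1$, giving $2(k-2)\eps_1+(k-1)\eps_2$ in total. Every individual estimate is an application of the triangle inequality \lemref{tri} together with \lemref{tri2} to discard the norm-at-most-one prefactors, so I would present the argument as one long telescoping chain of $d_\psi$-inequalities, exactly in the style of the proof of \lemref{con2}. The only genuinely delicate point — and the crux of the proof — is the invariant maintenance: one must check that after each commutation-plus-swap step $A_k$ really is returned to the position adjacent to $\ket{\psi}$, so that the state-dependent hypothesis may legitimately be invoked again; the Bob operators are exactly the device that makes this possible. (As usual, all the $A_i,B_i$ here are observables, so the prefactors appearing above have operator norm one.)
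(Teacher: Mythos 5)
Your proof is correct and follows essentially the same route as the paper's: move $A_k$ leftward one step at a time, commuting it past $A_{k-j}$ at cost $\eps_2$, then parking the freed $A_{k-j}$ on Bob's side as $B_{k-j}$ at cost $\eps_1$ (exactly commuting $B_{k-j}$ out of the way), and finally converting $B_2,\dots,B_{k-1}$ back to Alice operators at cost $(k-2)\eps_1$. The telescoping bookkeeping and cost count $2(k-2)\eps_1 + (k-1)\eps_2$ match the paper's chain of $d_\psi$-inequalities term for term.
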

\begin{proof}
  \begin{align*}
    &d_\psi(A_1\dots A_k, (\prod_{i=1}^{k-1} \alpha_{ik}) A_k A_1 \dots
    A_{k-1}) \\
    &\qquad \leq d_\psi(A_1 \dots A_k, 
               \alpha_{k-1,k} A_1 \dots A_{k-2} A_k A_{k-1}) \\
    &\qquad\qquad + d_\psi(\alpha_{k-1,k} A_1 \dots A_{k-2} A_k
      A_{k-1},
      \alpha_{k-1,k} B_{k-1} A_1 \dots A_{k-2} A_k) \\
    &\qquad \qquad + d_\psi(\alpha_{k-1,k} B_{k-1} A_1 \dots A_{k-2}
      A_k,
      \alpha_{k-1,k} \alpha_{k-2,k} B_{k-1} A_1 \dots A_{k-3}
      A_k A_{k-2}) \\
    &\qquad\qquad+d_\psi(     \alpha_{k-1,k} \alpha_{k-2,k} B_{k-1} A_1 \dots A_{k-3}
      A_k A_{k-2},
      \alpha_{k-1,k} \alpha_{k-2,k} B_{k-1} B_{k-2} A_1 \dots A_{k-3}
      A_k) \\
    &\qquad \qquad + \dots \\
    &\qquad \qquad + d_\psi(\prod_{i=2}^{k-1} \alpha_{ik} B_{k-1} \dots B_{2}
      A_1 A_k, \prod_{i=1}^{k-1} \alpha_{ik} B_{k-1} \dots B_{2} A_k
      A_1) \\
    &\qquad \qquad + d_\psi(\prod_{i=1}^{k-1} \alpha_{ik} B_{k-1} \dots B_{2} A_k
      A_1, \prod_{i=1}^{k-1} \alpha_{ik} A_k A_1 \dots A_{k-1}) \\
    &\qquad \leq d_\psi(A_{k-1} A_{k}, \alpha{k-1,k} A_{k} A_{k-1}) +
      d_\psi(A_{k-1}, B_{k-1}) + \dots + d_\psi(A_2 A_{k}, \alpha{2k}
      A_{k} A_{2}) + d_\psi(A_2, B_2) \\
    &\qquad\qquad + d_\psi(A_1 A_k, \alpha{1k} A_k A_1) + d_\psi(B_2,
      A_2) + \dots + d_\psi(B_k, A_k) \\
    &\qquad\leq 2(k - 2)\eps_1 + (k-1)\eps_2
  \end{align*}
  \anote{Add some text}
\end{proof}

As a consequence of the preceding lemma

\begin{lemma}
  Let $S_1, \dots, S_k, T_1, \dots, T_k$ be Hermitian operators on Alice's space
  and let $S^B_1, \dots S^B_k, T_1^B \dots T^B_k$ be Hermitian operators on
  Bob's space, satisfying
  \begin{align*}
    \forall i, \quad d_\psi(S_i, S^B_i) &\leq \eps_1 \\
    \forall i, \quad d_\psi(T_i, T^B_i) &\leq \eps_2 \\
    \forall i,j, \quad d_\psi(S_i T_j, \alpha_{ij} T_j S_i) &\leq \eps_3.
  \end{align*}
  Then 
  \begin{equation}
    d_\psi(S_1 \dots S_k T_1 \dots T_k, \prod_{i,j=1}^{k} \alpha_{ij}
  T_1 \dots T_k S_1 \dots S_k) \leq 2(k-1) \eps_2 + k(2(k-1)\eps_1+ k\eps_3). 
  \label{eq:switchmany}
  \end{equation}
  Likewise,
  \begin{equation}
    d_\psi(S_1 \dots S_k T_1 \dots T_k, \prod_{i=2}^{k}
  \prod_{j=1}^{i-1} \alpha_{ij} S_1 T_1 S_2 T_2 \dots S_k T_k) \leq
  2(k-1)\eps_2 + \sum_{j=2}^{k} (2(j-2)\eps_2 + (j-1)\eps_3)
  \label{eq:riffle}
\end{equation}
  \label{lem:switchmany}
\end{lemma}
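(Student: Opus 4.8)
The plan is to prove both bounds by a ``riffle shuffle'' of the $S$-block past the $T$-block. Since no relation is assumed among the $S_i$ or among the $T_i$, the only move available to us is to slide a single $T_j$ leftward past a contiguous run $S_aS_{a+1}\cdots S_k$; this preserves the internal orders of both blocks, which is exactly what the two target words require. The atomic move is realized as follows. In the word at hand, $T_j$ is in general \emph{not} adjacent to $\ket{\psi}$: the operators $T_{j+1},\dots,T_k$ lie between $T_j$ and $\ket{\psi}$, and since every hypothesis is relative to $\ket{\psi}$ we may not commute $T_j$ past an $S$ while genuine Alice $T$'s are in the way. So we first replace $T_{j+1}\cdots T_k$ by $T_k^B\cdots T_{j+1}^B$, losing at most $\eps_2$ per factor (an instance of \lemref{con2} together with \lemref{tri2}); these Bob operators commute exactly with all Alice operators and, being unitaries multiplying on the left, never change a state-dependent distance, so they may be slid to the extreme left and ignored. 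Now $T_j$ is the rightmost Alice operator, and sliding it past $S_a\cdots S_k$ while collecting the phase $\prod_{i=a}^{k}\alpha_{ij}$ is precisely \lemref{switch3} applied to the $(k-a+2)$-operator Alice word $S_a\cdots S_k T_j$ (with $d_{\psi}(S_i,S_i^B)\le\eps_1$ supplying the parking and $d_{\psi}(S_iT_j,\alpha_{ij}T_jS_i)\le\eps_3$ supplying the commutations), at cost $2(k-a)\eps_1+(k-a+1)\eps_3$. Any Alice operators already committed to the left of $S_a$ are fixed left-context and, being unitary, likewise do not affect the distance.

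To prove \myeqref{switchmany}, process $j=1,2,\dots,k$ in this order, each time sliding $T_j$ past \emph{all} of $S_1\cdots S_k$ and parking it immediately to the right of the already-placed $T_1\cdots T_{j-1}$; this builds the prefix $T_1\cdots T_k$ in the correct order in front of $S_1\cdots S_k$. Concretely: park $T_2,\dots,T_k$ on Bob at the start ($(k-1)\eps_2$); then for each $j$, un-park $T_j$ from Bob when $j\ge 2$ (a further $\eps_2$ each, $(k-1)\eps_2$ in total) and slide it past $S_1\cdots S_k$ via \lemref{switch3} on the $(k+1)$-operator word ($2(k-1)\eps_1+k\eps_3$ each). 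The $\eps_2$-terms sum to $2(k-1)\eps_2$, the $k$ slides contribute $k\bigl(2(k-1)\eps_1+k\eps_3\bigr)$, and the collected phase is $\prod_{j=1}^{k}\prod_{i=1}^{k}\alpha_{ij}=\prod_{i,j=1}^{k}\alpha_{ij}$; chaining the moves with \lemref{tri} yields \myeqref{switchmany}.

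To prove \myeqref{riffle}, process $j=1,2,\dots,k-1$ in order, but now slide $T_j$ only past $S_{j+1}\cdots S_k$, parking it immediately after $S_j$; this successively builds the interleaved prefix $S_1T_1S_2T_2\cdots$, and $T_k$ needs no move. Again park $T_2,\dots,T_k$ on Bob at the start ($(k-1)\eps_2$) and un-park each $T_j$ with $j\ge 2$ before processing it ($(k-1)\eps_2$ more), for $2(k-1)\eps_2$ in all; the move for $T_j$ is \lemref{switch3} on the $(k-j+1)$-operator word $S_{j+1}\cdots S_k T_j$, costing $2(k-j-1)\eps_1+(k-j)\eps_3$, and summing over $j=1,\dots,k-1$ gives $\sum_{j=2}^{k}\bigl(2(j-2)\eps_1+(j-1)\eps_3\bigr)$ after reindexing. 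The collected phase is $\prod_{j=1}^{k-1}\prod_{i=j+1}^{k}\alpha_{ij}=\prod_{i=2}^{k}\prod_{j=1}^{i-1}\alpha_{ij}$, and \lemref{tri} finishes the proof. (The $\eps_2$ inside the sum in the statement of \myeqref{riffle} is the $\eps_1$ coming from the $S$-parking inside \lemref{switch3}; this is immaterial since in every application $\eps_1=\eps_2$.)

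The only genuinely nonroutine point — and hence the main obstacle — is the state-dependence of the hypotheses: one cannot simply commute $T_j$ through the $S$-block while other Alice $T$'s sit between it and $\ket{\psi}$, because $d_{\psi}(S_iT_j,\alpha_{ij}T_jS_i)\le\eps_3$ controls the commutation only when it is applied to vectors close to $\ket{\psi}$. Temporarily pushing those intervening $T$'s onto Bob's commuting tensor factor (where they become left-multiplied unitaries that are invisible to the state-dependent distance) repairs this, and thereafter the argument is just repeated invocations of \lemref{switch3} with careful tallying of the three error budgets $\eps_1,\eps_2,\eps_3$.
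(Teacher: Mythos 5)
Your proposal is correct and takes essentially the same approach as the paper: park $T_2,\dots,T_k$ on Bob's side once at the start, un-park each $T_j$ in turn, and slide it through the $S$-block via \lemref{switch3}, tallying $(k-1)\eps_2$ for each of the two parking passes plus the \lemref{switch3} costs. You are also right that the sum in \myeqref{riffle} should read $\eps_1$ rather than $\eps_2$ (it is the cost of parking the $S_i$'s inside \lemref{switch3}); this appears to be a typo in the paper, harmless here since the lemma is only ever invoked with $\eps_1=\eps_2=\eps_3$.
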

\begin{proof}
  We first prove \myeqref{switchmany}.
  \begin{align*}
    &d_\psi(S_1 \dots S_k T_1 \dots T_k, \prod_{i,j=1}^{k} \alpha_{ij}
    T_1 \dots T_k S_1 \dots S_k) \\
    &\qquad\leq d_\psi(S_1 \dots S_k T_1 \dots T_k, T^B_k \dots
      T^B_{2} S_1 \dots S_k T_1) \\
    &\qquad\qquad + d_\psi(T^B_k\dots T^B_2 S_1 \dots S_k T_1, 
      \prod_{i=1}^k \alpha_{i1} T^B_k\dots T^B_2 T_1 S_1 \dots S_k) \\
    &\qquad\qquad + d_\psi(\prod_{i=1}^k \alpha_{i1} T^B_k\dots T^B_2 T_1 S_1 \dots S_k, 
      \prod_{i=1}^k \alpha_{i1} T^B_k \dots T^B_3 T_1 S_1 \dots S_k
      T_2) \\
    &\qquad\qquad + d_\psi(\prod_{i=1}^k \alpha_{i1} T^B_k \dots T^B_3
      T_1 S_1 \dots S_k T_2,
      \prod_{i=1}^k \alpha_{i1} \alpha_{i2} T^B_k \dots T^B_3 T_1 T_2
      S_1 \dots S_k) \\
    &\qquad\qquad + \dots \\
    &\qquad\qquad + d_\psi(\prod_{i=1}^k \prod_{j=1}^{k-1} \alpha_{ij}
      T^B_{k} T_1
      \dots T_{k-1} S_1 \dots S_k,
      \prod_{i=1}^k \prod_{j=1}^{k-1} \alpha_{ij} T_1 \dots T_{k-1}
      S_1 \dots S_k T_k) \\
    &\qquad\qquad + d_\psi(\prod_{i=1}^k \prod_{j=1}^{k-1} \alpha_{ij}
      T_1 \dots T_{k-1} S_1 \dots S_k T_k, 
      \prod_{i,j=1}^{k} \alpha_{ij} T_1 \dots T_k S_1 \dots S_k) \\
    &\qquad\leq 2(k-1) \eps_2 + k(2(k-1)\eps_1+ k\eps_3). 
  \end{align*}
  \anote{Add some text.}
  The derivation of \myeqref{riffle} is very similar. The only
  difference is that the number of commutations of $S$ with $T$ is different.
\end{proof}

\section{The Single Round Case} \label{app:singleroundcase}
In this section, we review the self-testing result of~\cite{WBMS16} on
the single-round magic square game, and write out the measurement definitions concretely for use in our setting. The rules of the game are
described in Fig.~\ref{fig:magic}.
% \begin{figure}[H]
%   \rule[1ex]{16.5cm}{0.5pt}\\
%   The magic square game is a one-round, two-player game, played as follows
%   \begin{enumerate}
%   \item The verifier sends Alice a question $r \in \{0,1,2\}$ and Bob a
%     question $c \in \{0,1,2\}$.
%   \item Alice sends the verifier a response $(a_0, a_1) \in \{0,1\}^2$,
%     and Bob sends a response $(b_0, b_1) \in \{0, 1\}^2$.
%   \item Let $a_2 := a_0 \oplus a_1$ and $b_2 := 1 \oplus b_0 \oplus b_1$. Then
%     Alice and Bob win the game if $a_c = b_r$ and lose otherwise.
%   \end{enumerate}
%   \rule[1ex]{16.5cm}{0.5pt}
%   \caption{The magic square game}
%   \label{fig:magic}
% \end{figure}
Any entangled strategy for this game is described by a shared quantum
state $\ket{\psi}_{AB}$ and projectors $P_{r}^{a_0, a_1}$ for Alice
and $Q_{c}^{b_0, b_1}$ for Bob. It can be seen that the game can be
won with certainty for the following strategy:
\begin{align*}
  \ket{\psi} &= \frac{1}{2} \sum_{i, j \in \{0, 1\}} \ket{ij}_A \ot
               \ket{ij}_B \\
  P_{0}^{a_0, a_1} &= \frac{1}{4}(I + (-1)^{a_0} Z)_{A1}
                     \ot (I + (-1)^{a_1} Z)_{A2} \ot I_{B} \\
  P_{1}^{a_0, a_1} &= \frac{1}{4}  (I + (-1)^{a_1} X)_{A1} \ot (I +
                     (-1)^{a_0} X)_{A2}
                     \ot I_{B} \\
  Q_{0}^{b_0, b_1} &= \frac{1}{4}I_A \ot (I + (-1)^{b_0} Z)_{B1} \ot
                     (I + (-1)^{b_1} X)_{B2} \\
  Q_{1}^{b_0, b_1} &= \frac{1}{4} I_A \ot (I + (-1)^{b_1} X)_{B1} \ot
                     (I + (-1)^{b_0} Z)_{B2} 
\end{align*}
This strategy is represented pictorially in
Fig.~\ref{fig:1roundideal}, where each row contains a set of
simultaneously-measurable observables that give Alice's answers, and
likewise each column for Bob.

% \begin{figure}[H]
%   \centering
%   \begin{tabular}{c | c | c}
%     ZI & IZ & ZZ \\ \hline
%     IX & XI & XX \\ \hline
%     -ZX & -XZ & YY
%   \end{tabular}
%   \caption{The ideal strategy for a single round of magic square. Alice
%     and Bob share the state $\ket{\text{EPR}}^{\ot 2}$.}
%   \label{fig:1roundideal}
% \end{figure}

Inspired by this ideal strategy, for \emph{any} strategy we can define
the following induced observables on Alice's system:
\begin{align*}
  X_1 &= \sum_{a_0, a_1} (-1)^{a_1} P_1^{a_0, a_1}  &= A^1_1\\
  X_2 &= \sum_{a_0, a_1} (-1)^{a_0} P_1^{a_0, a_1}  &= A^0_1\\
  Z_1 &= \sum_{a_0, a_1} (-1)^{a_0} P_0^{a_0, a_1} &= A^0_0\\
  Z_2 &= \sum_{a_0, a_1} (-1)^{a_1} P_0^{a_0, a_1} &= A^1_0,
\end{align*}
and on Bob's system:
\begin{align*}
  X_3 &= \sum_{b_0, b_1} (-1)^{b_1} Q_1^{b_0, b_1} &= B^1_1\\
  X_4 &= \sum_{b_0 b_1} (-1)^{b_1} Q_0^{b_0, b_1} &= B^1_0\\
  Z_3 &= \sum_{b_0, b_1} (-1)^{b_0} Q_0^{b_0, b_1} &= B^0_0\\
  Z_4 &= \sum_{b_0, b_1} (-1)^{b_0} Q_1^{b_0 b_1} &= B^0_1.
\end{align*}
The $X$ and $Z$ observables correspond to the first two rows and
columns of the square. From the third row and third column, we obtain
four more observables; two for Alice:
\begin{align*}
  W_1 &= \sum_{a_0, a_1} (-1)^{a_0} P_2^{a_0, a_1} &= A^0_2\\
  W_2 &= \sum_{a_0, a_1} (-1)^{a_1} P_2^{a_0,a_1} &= A^1_2,
\end{align*}
and two for Bob:
\begin{align*}
  W_3 &= \sum_{b_0, b_1} (-1)^{b_0} Q_2^{b_0, b_1} &= B^0_2\\
  W_4 &= \sum_{b_0, b_1} (-1)^{b_1} Q_2^{b_0, b_1} &= B^1_2.
\end{align*}

There are nine consistency conditions implied by winning the game with
probability $1 - \eps$:
\begin{align}
  \bra{\psi} Z_1 Z_3 \ket{\psi} &\geq 1 - 9\eps  \label{eq:win1}\\
  \bra{\psi} Z_2 Z_4 \ket{\psi} &\geq 1 - 9\eps \label{eq:win2}\\
  \bra{\psi} Z_1Z_2 W_3 \ket{\psi} &\geq 1 - 9\eps \label{eq:win3}\\
  \bra{\psi} X_2 X_4 \ket{\psi} &\geq 1- 9\eps \label{eq:win4}\\
  \bra{\psi} X_1 X_3 \ket{\psi} &\geq 1 - 9\eps \label{eq:win5}\\
  \bra{\psi} X_1X_2 W_4 \ket{\psi} &\geq 1 - 9\eps \label{eq:win6}\\
  -\bra{\psi} W_1 Z_3 X_4 \ket{\psi} &\geq 1 - 9\eps \label{eq:win7}\\
  -\bra{\psi} W_2 Z_4 X_3 \ket{\psi} &\geq 1 - 9\eps \label{eq:win8}\\
  -\bra{\psi} W_1 W_2 W_3 W_4 \ket{\psi} &\geq 1 - 9\eps. \label{eq:win9}
\end{align}

From this we obtain anticommutation conditions
\begin{align*}
  X_1 Z_1 &\approx X_1 Z_2 W_3  && \text{(by \eqref{eq:win3})} \\
          &= W_3 X_1 Z_2 \\
          &\approx W_3 X_1 Z_4  && \text{(by \eqref{eq:win2})} \\
          &\approx W_3 Z_4 X_3 && \text{(by \eqref{eq:win5})} \\
          &\approx - W_3W_2 && \text{(by \eqref{eq:win8})} \\
          &\approx W_3W_1W_3W_4 &&\text{(by \eqref{eq:win9})}\\
          &= W_1W_4 \\
          &\approx  - W_4 Z_3 X_4 &&\text{(by \eqref{eq:win7})}\\
          &\approx - Z_1 W_4 X_4 &&\text{(by \eqref{eq:win1})}\\
          &\approx -Z_1 X_2 W_4 &&\text{(by \eqref{eq:win4})}\\
          &\approx  -Z_1 X_2 X_2X_1 &&\text{(by \eqref{eq:win6})}\\
          &= -Z_1 X_1.
\end{align*}
We can also get commutation relations on different qubits:
\begin{align*}
  X_1 Z_2 &\approx X_1 Z_4 &&\text{(by \eqref{eq:win2})}\\
          &\approx Z_4 X_3 &&\text{(by \eqref{eq:win5})}\\
          &= X_3 Z_4 &&\text{(by construction)} \\
          &\approx X_3 Z_2 &&\text{(by \eqref{eq:win2})}\\
          &\approx Z_2 X_1 &&\text{(by \eqref{eq:win5})}.
\end{align*}

The other cases follow similarly.  See \cite{WBMS16} for further details.

\end{document}